\pdfoutput=1
\documentclass[11pt, a4paper]{article}
\usepackage{amsmath, amssymb, amsfonts, amsthm}
\usepackage{bm}
\usepackage{graphicx}
\usepackage{algorithm}
\usepackage{algpseudocode}
\usepackage[colorlinks=true, linkcolor=blue, citecolor=blue]{hyperref}
\usepackage{geometry}
\usepackage{tikz}
\usetikzlibrary{shapes.geometric, arrows.meta, positioning, fit, backgrounds, calc, decorations.pathmorphing}
\usepackage{xcolor}
\usepackage{booktabs}
\usepackage{float}

\theoremstyle{plain}
\newtheorem{theorem}{Theorem}
\newtheorem{proposition}{Proposition}
\newtheorem{lemma}{Lemma}

\theoremstyle{definition}

\newtheorem{remark}{Remark}

\newcommand{\tens}[1]{\bm{#1}}

\newcommand{\trace}{\operatorname{tr}}

\newcommand{\sgn}{\operatorname{sgn}}
\newcommand{\softplus}{\operatorname{softplus}}
\newcommand{\subdiff}{\partial}

\newcommand{\macaulay}[1]{\langle #1 \rangle_+}

\title{\textbf{Discovering Thermodynamically Admissible Dissipation Potentials via Grammar-Based Symbolic Regression}}
\author{F. Califano and J. Ciambella}
\date{\today}

\begin{document}
\maketitle

\begin{abstract}
\noindent Constitutive laws for inelastic materials must satisfy strict thermodynamic admissibility requirements, yet current data-driven approaches sacrifice interpretability, even when formal guarantees are provided by physics-encoded architectures. We propose a symbolic regression framework for the data-driven discovery of dissipation potentials governing the evolution of internal variables within the Generalized Standard Materials (GSM) formalism. Starting from the Clausius--Duhem inequality, we enforce the thermodynamic requirements, convexity and non-negativity, that the dual dissipation potential must satisfy to guarantee non-negative mechanical dissipation. These requirements are formulated in the general subdifferential setting, encompassing rate-dependent (viscoelastic) and viscoplastic dissipative mechanisms, including potentials with genuine elastic domains, within a unified framework. Candidate potentials are generated by a composition-extended convexity-preserving grammar that guarantees thermodynamic admissibility \emph{by construction}. The framework is validated on synthetic datasets spanning Newtonian, power-law, and Bingham viscoplastic ground truths under process and measurement noise, and on experimental oscillatory shear measurements of a synthetic elastomer across multiple strain amplitudes and frequencies, where the discovered potentials reproduce the amplitude-dependent softening of the dynamic moduli and outperform a calibrated linear Zener baseline.
\end{abstract}

\noindent\textbf{Keywords:} symbolic regression; dissipation potential; generalized standard materials; genetic programming; convex optimization; viscoelasticity; viscoplasticity; power-law dissipation

\section{Introduction}

Constitutive modeling of inelastic materials, encompassing viscoelastic polymers, viscoplastic metals, filled elastomers, and biological tissues, requires capturing complex nonlinear, history-dependent behaviors while rigorously respecting the laws of thermodynamics. Classical rheological assemblages of springs and dashpots~\cite{ciambella2009abaqus, ciambella2021anisotropic} provide physically interpretable descriptions in the linear regime, but their expressivity is insufficient for the nonlinear phenomena routinely observed in experiments: the Payne effect in filled elastomers~\cite{payne1962dynamic,payne1963dynamic}, rate-dependent yielding in metallic alloys~\cite{kabliman2021application, batra1990effect}, stress-softening in soft tissues~\cite{holzapfel2001biomechanics}, and complex viscoplastic flow~\cite{perzyna1966fundamental}. Historically, constitutive model development has relied on expert intuition and \emph{ad hoc} phenomenological assumptions, frequently yielding formulations that lack generalizability beyond the calibration regime.

Data-driven approaches have sought to overcome these limitations, and can be broadly grouped according to how physical principles are incorporated~\cite{fuhg2024review}. 
\emph{Black-box} neural networks approximate arbitrary constitutive maps directly from data, but disregard the underlying thermodynamics and offer no guarantee of admissibility outside the training range. 
\emph{Physics-Informed Neural Networks} (PINNs) enforce balance laws and dissipation inequalities \emph{in a weak sense}, as soft penalty terms added to the training loss; while effective in integrating sparse data with physical knowledge, they require careful tuning of the loss weights and offer no formal guarantee that the learned model satisfies the imposed constraints. 
\emph{Physics-Encoded Neural Networks}, in contrast, embed convexity and thermodynamic admissibility \emph{directly into the network structure}. 
Yet, regardless of whether physical constraints are imposed weakly or by construction, all these neural-network-based approaches share a common limitation: the learned representations are opaque, extrapolation is uncertain, and integration into finite element codes requires specialized subroutines that complicate industrial deployment~\cite{ciambella2009abaqus}.

\emph{Symbolic Regression} (SR) addresses this tension by searching the space of mathematical expressions to discover compact, human-readable formulas~\cite{makke2024review}. The dominant paradigm, originating with genetic programming~\cite{koza1992genetic} and famously demonstrated for the rediscovery of physical laws from experimental data~\cite{schmidt2009distilling}, evolves populations of candidate expressions under accuracy--complexity trade-offs, and is embodied today in high-performance implementations such as PySR~\cite{cranmer2023pysr}. In contrast to black-box methods, SR produces \emph{white-box models}: closed-form expressions amenable to asymptotic analysis and directly deployable in numerical simulation software~\cite{abdusalamov2023automatic, kabliman2021application}.

Within constitutive modeling, SR has been successfully deployed for hyperelastic energy discovery from invariant-based or stretch-based parameterizations~\cite{abdusalamov2023automatic, hou2024automated, kissas2024language}; for implicit yield surface identification in elastoplasticity~\cite{bomarito2021development, garbrecht2023pGPSR}; for empirical flow laws of age-hardenable aluminium alloys and high-chromium martensitic steels~\cite{kabliman2026identification}; and for geomaterial constitutive laws~\cite{zhu2025physics}.
Recent hybrid pipelines first train a neural network model to fit the data, and afterward derive a human-readable formula from that trained model~\cite{bahmani2023elastoplasticity, bahmani2024physics}. However, all these SR efforts have predominantly targeted rate-independent constitutive functions, strain-energy densities and yield functions, that define \emph{equilibrium} or \emph{threshold} states. None addresses the identification of inelastic materials that exhibit history-dependent behavior governed by \emph{dissipation potentials}, which prescribe the evolution of internal variables via a gradient-flow structure within the Generalized Standard Materials (GSM) formalism~\cite{halphen1975materiaux, lemaitre2000mechanics}. The GSM framework encompasses rate-dependent (viscoelastic) and viscoplastic dissipative mechanisms within a unified thermodynamic structure, wherein constitutive behavior derives entirely from two scalar potentials: the Helmholtz free energy $\psi$ and the dissipation potential $\varphi$ (or its Legendre--Fenchel dual $\varphi^*$). Discovering $\varphi^*$ from data presents a fundamentally different challenge compared to energy identification: the regression must identify not merely an energy landscape, but a \emph{dynamical evolution law} coupling thermodynamic driving forces to internal variable rates, subject to the strict requirement that $\varphi^*$ be convex and non-negative to guarantee non-negative dissipation. Recent works have begun to integrate the GSM framework with constitutive neural networks for finite-strain viscoelasticity~\cite{rosenkranz2024viscoelasticity, asad2023mechanics, holthusen2026inelastic, tac2023benchmarking, abdolazizi2024vcanns,califano2026enhancing, kalina2026physics} and plasticity~\cite{flaschel2025convex, boes2026accounting, boes2026plasticity, jadoon2025plasticity}, but these approaches inherit the interpretability limitations discussed above. A first step in this direction has been recently taken by~\cite{ji2026ickan}, who introduced Inelastic Constitutive Kolmogorov--Arnold Networks (iCKANs): leveraging the edge-wise learnable univariate activations of KANs~\cite{liu2024kan} and extending input-convex KAN architectures previously developed for hyperelasticity~\cite{thakolkaran2025kan, califano2026enforcing}, an input-convex KAN is combined with a post-training symbolification step over a restricted library of convex, non-decreasing functions, yielding closed-form elastic and inelastic potentials that are thermodynamically admissible by construction, validated on viscoelastic polymers. To the best of our knowledge, however, no grammar-based symbolic regression framework for dissipation potentials has been proposed in the literature.

The present work extends grammar-based SR to the identification of dissipation potentials for inelastic materials. Assuming the elastic energy density $\psi$ is known \emph{a priori}, from physical considerations or independent identification via equilibrium experiments~\cite{abdusalamov2023automatic}, we seek the dual dissipation potential $\varphi^*$ governing the evolution of internal variables. Thermodynamic constraints are encoded directly into context-free grammar production rules, extending the approach of~\cite{kissas2024language} to dissipative systems, so that convexity and non-negativity are guaranteed \emph{by construction}. The main contributions of this paper are:

\begin{enumerate}
    \item A composition-extended convexity-preserving grammar $\mathcal{G}_{\mathrm{cvx}}^{\mathrm{comp}}$ that operationalizes the chain rule for convex functions, guaranteeing thermodynamic consistency for all generated expressions within a unified framework encompassing viscoelastic and viscoplastic dissipation, including potentials with genuine elastic domains.
    
    \item A targeted identification architecture that decouples dissipative discovery from elastic identification.
    
    \item Convexity-preserving genetic operators (term-level crossover, class-preserving mutation) with formal proofs that grammar membership is an evolutionary invariant.
    
    \item Validation on synthetic inelastic datasets with process and measurement noise, demonstrating exact symbolic recovery rates exceeding 90\% under moderate noise conditions, and on oscillatory shear measurements of a synthetic elastomer performed at the Laboratory of Materials and Structures, Department of Structural and Geotechnical Engineering, Sapienza University of Rome across multiple strain amplitudes and frequencies.
\end{enumerate}

\section{Thermodynamic Framework}\label{sec:thermo}

We review the essential elements of the Generalized Standard Materials (GSM) formalism \cite{halphen1975materiaux, lemaitre2000mechanics}, with particular emphasis on the convexity and non-negativity requirements of the dissipation potential that will later constrain the symbolic search space. The presentation adopts the subdifferential setting from the outset, which naturally accommodates both smooth (viscoelastic) and non-smooth (viscoplastic) potentials within a single framework.

In the following vectors and second-order tensors are denoted by bold italic letters (e.g.\ $\tens{A}$, $\tens{\varepsilon}$, $\tens{\sigma}$). The symbol $:$ denotes the inner product of second-order tensors, $\tens{A}:\tens{B} = A_{ij}B_{ij}$ (Einstein summation implied), and reduces to the standard dot product for vectors. The Euclidean (Frobenius) norm of a tensor $\tens{A}$ is written $|\tens{A}| = \sqrt{\tens{A}:\tens{A}}$. The identity tensor is $\tens{I}$, $\trace(\cdot)$ denotes the trace, and a prime $(\cdot)'$ denotes the deviatoric part of a tensor. The Macaulay bracket $\macaulay{x} = \max(x,0)$ extracts the positive part of a scalar. The symbol $\subdiff$ denotes the convex subdifferential, and $\nabla$ the gradient with respect to the indicated argument.

The thermodynamic state of the material is characterized by the observable strain tensor $\tens{\varepsilon}$ and a set of internal variables $\tens{z} = (z_1, \ldots, z_N)$ that may represent dissipative mechanisms such as inelastic strains and microstructural rearrangements. The Helmholtz free energy density takes the form
\begin{equation}\label{eq:free-energy}
    \psi = \psi(\tens{\varepsilon}, \tens{z}).
\end{equation}

The First and Second Laws of Thermodynamics, under isothermal conditions, imply the Clausius--Duhem inequality:
\begin{equation}\label{eq:clausius-duhem}
    \mathcal{D} = \tens{\sigma} : \dot{\tens{\varepsilon}} - \dot{\psi} \geq 0,
\end{equation}
where $\tens{\sigma}$ is the Cauchy stress and $\mathcal{D}$ denotes the intrinsic dissipation rate per unit volume. Applying the chain rule to the time derivative of the free energy yields
\begin{equation}\label{eq:psi-rate}
    \dot{\psi} = \frac{\partial \psi}{\partial \tens{\varepsilon}} : \dot{\tens{\varepsilon}} + \frac{\partial \psi}{\partial \tens{z}} : \dot{\tens{z}}.
\end{equation}
Substituting \eqref{eq:psi-rate} into \eqref{eq:clausius-duhem}, we obtain
\begin{equation}\label{eq:dissipation-expanded}
    \mathcal{D} = \left( \tens{\sigma} - \frac{\partial \psi}{\partial \tens{\varepsilon}} \right) : \dot{\tens{\varepsilon}} - \frac{\partial \psi}{\partial \tens{z}} : \dot{\tens{z}} \geq 0.
\end{equation}

For the inequality~\eqref{eq:dissipation-expanded} to hold for arbitrary strain rates $\dot{\tens{\varepsilon}}$, including purely reversible elastic processes, the coefficient of $\dot{\tens{\varepsilon}}$ must vanish identically, yielding the \emph{state law}:
\begin{equation}\label{eq:state-law-stress}
    \tens{\sigma} = \frac{\partial \psi}{\partial \tens{\varepsilon}}.
\end{equation}
Defining the \emph{thermodynamic driving force} conjugate to the internal variables as
\begin{equation}\label{eq:driving-force-def}
    \tens{A} := -\frac{\partial \psi}{\partial \tens{z}},
\end{equation}
the Clausius--Duhem inequality reduces to:
\begin{equation}\label{eq:residual-dissipation}
    \mathcal{D} = \tens{A} : \dot{\tens{z}} \geq 0.
\end{equation}

which constrains admissible evolution laws, but does not uniquely determine one: any map $\tens{f}: \mathbb{R}^N \to \mathbb{R}^N$ satisfying $\tens{A} : \tens{f}(\tens{A}) \geq 0$ is thermodynamically admissible. 

To close the constitutive framework, we seek \emph{structural conditions} on the evolution law that guarantee~\eqref{eq:residual-dissipation} for every admissible process. A natural candidate is a gradient-flow structure, in which the evolution derives from a scalar potential \cite{halphen1975materiaux, lemaitre2000mechanics, germain1983continuum}. We therefore postulate the existence of a \emph{dissipation potential} $\varphi: \mathbb{R}^N \to \mathbb{R} \cup \{+\infty\}$, proper, lower semicontinuous, and convex, with $\varphi(\tens{0}) = 0$ and $\varphi \geq 0$, governing the internal-variable evolution through the \emph{normality rule} (or Biot equation):
\begin{equation}\label{eq:normality}
    \tens{A} \in \subdiff_{\dot{\tens{z}}}\,\varphi(\dot{\tens{z}}),
\end{equation}
where $\subdiff_{\dot{\tens{z}}}\,\varphi$ denotes the convex subdifferential of $\varphi$ with respect to $\dot{\tens{z}}$:
\begin{equation}\label{eq:subdiff-def}
    \subdiff_{\dot{\tens{z}}}\,\varphi(\dot{\tens{z}}) := \bigl\{\tens{A} \in \mathbb{R}^N :\; \varphi(\tens{w}) \geq \varphi(\dot{\tens{z}}) + \tens{A} : (\tens{w} - \dot{\tens{z}}) \;\;\forall\,\tens{w} \in \mathbb{R}^N\bigr\}.
\end{equation}
When $\varphi$ is differentiable at $\dot{\tens{z}}$, the subdifferential reduces to the singleton $\subdiff\varphi(\dot{\tens{z}}) = \{\nabla_{\dot{\tens{z}}}\varphi(\dot{\tens{z}})\}$ and \eqref{eq:normality} recovers the classical gradient relation $\tens{A} = \nabla_{\dot{\tens{z}}}\varphi(\dot{\tens{z}})$. On the other hand, the subdifferential formulation can accommodate non-smooth potentials, such as the indicator function of the elastic domain in rate-independent plasticity or the Bingham potential in viscoplasticity, within the same framework.

Geometrically, the normality rule~\eqref{eq:normality} states that $\tens{A}$ belongs to the normal cone of the level sets of $\varphi$ at the current flow rate~$\dot{\tens{z}}$, and implies the Principle of Maximum Dissipation: for a given driving force $\tens{A}$, the actual internal variable rate $\dot{\tens{z}}$ maximizes the dissipated power $\tens{A}:\dot{\tens{z}} - \varphi(\dot{\tens{z}})$ among all admissible rates.

\medskip

It is noted that, the normality rule~\eqref{eq:normality} is expressed in terms of $\dot{\tens{z}}$: given a flow rate, it returns the conjugate force $\tens{A}$. In practice, however, the driving force $\tens{A}$ is the quantity computed from the free energy and the current state, and the task is to determine the resulting flow rate $\dot{\tens{z}}$, that is, to invert the subdifferential relation. Passing to the convex conjugate achieves this inversion directly and provides an explicit evolution law for $\dot{\tens{z}}$ in terms of $\tens{A}$, which is the form required for numerical time integration.

The \emph{dual dissipation potential} $\varphi^*: \mathbb{R}^N \to \mathbb{R} \cup \{+\infty\}$ is defined via the Legendre--Fenchel transform:
\begin{equation}\label{eq:legendre-fenchel}
    \varphi^*(\tens{A}) = \sup_{\dot{\tens{z}} \in \mathbb{R}^N} \left\{ \tens{A} : \dot{\tens{z}} - \varphi(\dot{\tens{z}}) \right\}.
\end{equation}
By standard properties of the convex conjugate, $\varphi^*$ is convex, non-negative, and satisfies $\varphi^*(\tens{0}) = 0$ and by biconjugation theorem, $(\varphi^*)^* = \varphi$ for closed proper convex functions, the normality rule~\eqref{eq:normality} can equivalently be expressed in \emph{dual form}:
\begin{equation}\label{eq:evolution}
    \dot{\tens{z}} \in \subdiff_{\tens{A}}\,\varphi^*(\tens{A}).
\end{equation}
When $\varphi^*$ is differentiable at $\tens{A}$, this reduces to the single-valued evolution law $\dot{\tens{z}} = \nabla_{\tens{A}}\,\varphi^*(\tens{A})$.

\medskip

With the duality framework in hand, the following proposition establishes that two simple, checkable properties of $\varphi^*$ are \emph{sufficient} for the Second Law. Crucially, the result requires \emph{no smoothness assumption} on $\varphi^*$.

\begin{proposition}[Sufficient conditions for non-negative dissipation]\label{prop:dissipation}
Let $\varphi^*: \mathbb{R}^N \to \mathbb{R} \cup \{+\infty\}$ be a proper convex function satisfying:
\begin{enumerate}
    \item[\textnormal{(P1)}] \textbf{Convexity:}\; $\varphi^*$ is convex;
    \item[\textnormal{(P2)}] \textbf{Non-negativity and normalization:}\; $\varphi^*(\tens{A}) \geq 0$ for all $\tens{A}$, with $\varphi^*(\tens{0}) = 0$.
\end{enumerate}
If the evolution law satisfies $\dot{\tens{z}} \in \subdiff_{\tens{A}}\,\varphi^*(\tens{A})$, then the dissipation is non-negative:
\begin{equation}\label{eq:dissipation-bound}
    \mathcal{D} = \tens{A} : \dot{\tens{z}} \geq \varphi^*(\tens{A}) \geq 0.
\end{equation}
\end{proposition}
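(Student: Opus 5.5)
The argument is a one-line application of the subgradient inequality~\eqref{eq:subdiff-def}, written for $\varphi^*$ in the variable $\tens{A}$, tested at the single point $\tens{B}=\tens{0}$. Suppose $\dot{\tens{z}} \in \subdiff_{\tens{A}}\,\varphi^*(\tens{A})$. By definition of the convex subdifferential,
\begin{equation*}
\varphi^*(\tens{B}) \geq \varphi^*(\tens{A}) + \dot{\tens{z}} : (\tens{B} - \tens{A}) \qquad \forall\, \tens{B} \in \mathbb{R}^N .
\end{equation*}
Choosing $\tens{B} = \tens{0}$ and using the normalization $\varphi^*(\tens{0}) = 0$ from (P2) gives $0 \geq \varphi^*(\tens{A}) - \dot{\tens{z}}:\tens{A}$. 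Rearranging yields $\mathcal{D} = \tens{A}:\dot{\tens{z}} \geq \varphi^*(\tens{A})$. The non-negativity part of (P2) then gives $\varphi^*(\tens{A}) \geq 0$, which completes the chain in~\eqref{eq:dissipation-bound}.

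The only point that needs care is making sense of $\varphi^*(\tens{A})$ when $\varphi^*$ may take the value $+\infty$. The subdifferential of a proper convex function is empty at points outside its effective domain. So the hypothesis $\dot{\tens{z}} \in \subdiff\varphi^*(\tens{A})$ already forces $\varphi^*(\tens{A}) < \infty$, and in fact $\varphi^*(\tens{A})$ is a finite real number, since properness excludes $-\infty$. I would state this explicitly before rearranging, so that the subtraction is legitimate. With that remark in place, no smoothness of $\varphi^*$ is used anywhere: the argument relies only on the defining inequality of the subdifferential.

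I do not expect a real obstacle; the domain and finiteness bookkeeping is the main thing to get right. I would also add a remark on the role of the hypotheses. Convexity (P1) is what makes the subdifferential the correct object, and it is the structural assumption under which the dual form~\eqref{eq:evolution} of the normality rule is equivalent to~\eqref{eq:normality}. The inequality itself, however, uses only $\varphi^*(\tens{0})=0$ and $\varphi^*\ge 0$ at the point $\tens{A}$.
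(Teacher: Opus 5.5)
Your proposal is correct and follows essentially the same route as the paper: the subgradient inequality for $\varphi^*$ tested at $\tens{B}=\tens{0}$, combined with $\varphi^*(\tens{0})=0$, gives $\tens{A}:\dot{\tens{z}}\geq\varphi^*(\tens{A})$, and non-negativity from (P2) completes the chain. Your extra remark that $\dot{\tens{z}}\in\subdiff\varphi^*(\tens{A})$ forces $\varphi^*(\tens{A})$ to be finite, which makes the rearrangement legitimate, is a worthwhile piece of bookkeeping that the paper leaves implicit.
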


\noindent\textit{Proof.}\; Applying the subdifferential definition~\eqref{eq:subdiff-def} to the dual potential $\varphi^*$ at the point $\tens{A}$ (with $\dot{\tens{z}}$ playing the role of the subgradient and $\tens{A}$ that of the point, dually to~\eqref{eq:subdiff-def}), the inclusion $\dot{\tens{z}} \in \subdiff_{\tens{A}}\varphi^*(\tens{A})$ means that for every test point $\tens{B} \in \mathbb{R}^N$:
\[
    \varphi^*(\tens{B}) \;\geq\; \varphi^*(\tens{A}) + \dot{\tens{z}} : (\tens{B} - \tens{A}).
\]
Choosing in particular $\tens{B} = \tens{0}$ and using $\varphi^*(\tens{0}) = 0$ from (P2):
\[
    0 \geq \varphi^*(\tens{A}) + \dot{\tens{z}} : (\tens{0} - \tens{A}) = \varphi^*(\tens{A}) - \tens{A} : \dot{\tens{z}},
\]
which rearranges to $\mathcal{D} = \tens{A} : \dot{\tens{z}} \geq \varphi^*(\tens{A})$. The second inequality, $\varphi^*(\tens{A}) \geq 0$, follows directly from~(P2). \qed

\medskip

Proposition~\ref{prop:dissipation} establishes the central design principle of this work: \emph{any convex function $\varphi^*(\tens{A})$ satisfying $\varphi^*(\tens{A}) \geq 0$ and $\varphi^*(\tens{0})=0$ generates, through the evolution law $\dot{\tens{z}} \in \subdiff_{\tens{A}}\,\varphi^*(\tens{A})$, a kinetic relation that is automatically consistent with the Second Law of Thermodynamics.} No differentiability assumption is required: the result applies equally to smooth potentials (yielding classical viscoelasticity) and to non-smooth potentials (accommodating viscoplasticity). Properties~(P1) and~(P2) will therefore serve as the \emph{hard constraints} encoded into the symbolic search space (Section~\ref{sec:convexity}).

\begin{remark}[Classification of dissipative behavior]\label{rem:classification}
The regularity of the dual dissipation potential $\varphi^*$ determines the type of dissipative response:
\begin{enumerate}
    \item[\textnormal{(i)}] \textbf{Viscoelasticity} ($\varphi^*$ smooth and strictly convex). The evolution law $\dot{\tens{z}} = \nabla_{\tens{A}}\varphi^*(\tens{A})$ is a single-valued, smooth map. Example: $\varphi^*(\tens{A}) = \vert\tens{A}\vert^2/(2\eta)$ (Newtonian viscosity). The material exhibits no elastic domain: any nonzero $\tens{A}$ produces immediate flow. 
    \item[\textnormal{(ii)}] \textbf{Viscoplasticity} ($\varphi^*$ finite-valued, convex, but not everywhere differentiable). Non-smooth points typically arise where $|\tens{A}|$ equals a yield threshold~$\sigma_Y$, introducing an elastic domain: for $|\tens{A}| \leq \sigma_Y$, $\dot{\tens{z}} = 0$; above it, $\dot{\tens{z}}$ depends on the rate of loading. Example: $\varphi^*(\tens{A}) = \macaulay{|\tens{A}|-\sigma_Y}^2/(2\eta)$ (Bingham viscoplastic), where $\macaulay{\cdot} = \max(\cdot,0)$ denotes the Macaulay bracket. For $|\tens{A}| > \sigma_Y$, the flow rule reads $\dot{\tens{z}} = (|\tens{A}|-\sigma_Y)\tens{A}/(\eta\,\vert \tens{A}\vert)$; for $|\tens{A}| \leq \sigma_Y$, $\varphi^*(\tens{A}) = 0$ and the subdifferential gives $\dot{\tens{z}} = 0$ (elastic domain). Note that the expression $\eta\,\dot{\tens{z}}^2/2 + \sigma_Y|\dot{\tens{z}}|$ sometimes associated with the Bingham model is in fact the \emph{primal} dissipation potential $\varphi(\dot{\tens{z}})$; its Legendre--Fenchel conjugate yields the dual form given above.
    
    \item[\textnormal{(iii)}] \textbf{Force-magnitude-independent flow} ($\varphi^*$ positively 1-homogeneous in $\tens{A}$). The dual potential $\varphi^*(\tens{A}) = \sigma_Y|\tens{A}|$ yields the flow rule $\dot{\tens{z}} \in \sigma_Y\,\subdiff|\tens{A}|$: $\dot{\tens{z}} = \sigma_Y\,\tens{A}/\vert\tens{A}\vert$ for $\vert\tens{A}\vert \neq 0$ and $\dot{\tens{z}} \in \{\tens{v} \in \mathbb{R}^N : \vert\tens{v}\vert \leq \sigma_Y\}$ for $\vert\tens{A}\vert = 0$. Because $\subdiff\varphi^*$ is 0-homogeneous in $\vert\tens{A}\vert$, the flow rate depends only on the direction of the driving force, not its magnitude.

    \item[\textnormal{(iv)}] \textbf{Rate-independent plasticity} ($\varphi^*$ is the indicator function of a convex set $K$). The dual potential $\varphi^*(\tens{A}) = \mathcal{I}_K(\tens{A})$ yields $\dot{\tens{z}} \in N_K(\tens{A})$ (normal cone), while the conjugate is the support function $\varphi(\dot{\tens{z}}) = \sigma_K(\dot{\tens{z}})$, which is positively 1-homogeneous in~$\dot{\tens{z}}$. For $K = \{\tens{A} \in \mathbb{R}^N : |\tens{A}| \leq \sigma_Y\}$ (von Mises plasticity), $\dot{\tens{z}} = \tens{0}$ when $|\tens{A}| < \sigma_Y$ (elastic domain) and $\dot{\tens{z}}$ is indeterminate at $|\tens{A}| = \sigma_Y$ (determined by consistency). Since indicator functions are not finite-valued algebraic expressions, they lie outside the grammar $\mathcal{G}_{\mathrm{cvx}}^{\mathrm{comp}}$ built in this paper. However, viscoplastic potentials with genuine elastic domains, such as the Bingham potential in~(ii), provide finite-valued approximations that converge to the rate-independent limit as the viscosity $\eta \to 0^+$, following the classical Perzyna regularisation~\cite{perzyna1966fundamental}.
\end{enumerate}
The grammar $\mathcal{G}_{\mathrm{cvx}}^{\mathrm{comp}}$ developed in Section~\ref{sec:convexity} generates potentials spanning categories~(i)--(iii). 
\end{remark}

\section{Methodology}\label{sec:methodology}

The main objective of this section is to develop a symbolic regression methodology for discovering the dual dissipation potential $\varphi^*$. The key challenge is to ensure that the discovered potential satisfies the thermodynamic constraints of convexity and non-negativity, which are sufficient for guaranteeing non-negative mechanical dissipation (Proposition~\ref{prop:dissipation} in previous section).  The key components are detailed in the following subsections.

\subsection{Problem Formulation}

We formulate the methodology within a general three-dimensional continuum framework before reducing to a one-dimensional setting for the computational examples in Sec.~\ref{sec:synthetic} and data fitting in Sec.~\ref{sec:experimental}. Upon assuming that the inelastic dissipation is purely isochoric, a common assumption for many materials, the internal viscous strain tensor $\tens{z}$ is strictly deviatoric ($\trace(\tens{z}) = 0$). We correspondingly decompose the observable strain tensor into its volumetric and deviatoric parts, $\tens{\varepsilon} = \frac{1}{3}(\trace\tens{\varepsilon})\tens{I} + \tens{\varepsilon}'$.

The Helmholtz free energy $\psi$ is assumed as an isotropic quadratic form and additively decomposed into a purely elastic volumetric part and an inelastic deviatoric part:
\begin{equation}\label{eq:3d-psi}
    \psi(\tens{\varepsilon}, \tens{z}) = \frac{1}{2} K_\infty (\trace\tens{\varepsilon})^2 + G_1 \trace((\tens{\varepsilon}' - \tens{z})^2),
\end{equation}
where $K_\infty$ is the bulk modulus and $G_1$ is the deviatoric shear modulus.
The internal variable $\tens{z}$ is an inelastic strain that relaxes towards the observable deviatoric strain $\tens{\varepsilon}'$ under the driving force $\tens{A}$. In the following we assume that the elastic coefficients $G_1$ and $K_\infty$ are known \emph{a priori}. Adopting this two-stage procedure, elastic identification followed by dissipative identification, decouples the discovery problem and represents standard practice in constitutive model calibration.

\begin{table}[t]
\centering
\caption{Classical dissipation potentials and their physical interpretations in the 1D case. Here $\eta > 0$ denotes viscosity, $\sigma_Y > 0$ the yield stress, and $q > 1$ the power-law exponent. This table motivates the primitive function library $\mathcal{F}_{\mathrm{prim}}$ (Tab.~\ref{tab:primitives}).}
\label{tab:potentials}
\begin{tabular}{@{}lrll@{}}
\toprule
Dual potential $\varphi^*(A)$ & Flow rule $\dot{z}$ & &Material behavior \\
\midrule
$\displaystyle\frac{1}{2\eta}A^2$ & $\displaystyle\frac{A}{\eta}$ && Newtonian viscosity \cite{batchelor1967introduction} \\[12pt]
$\displaystyle\frac{1}{q}|A|^q$ & $|A|^{q-1}\sgn(A)$ && Power-law \cite{ostwald1925ueber} \\[12pt]
$\sigma_Y |A|$ & $\sigma_Y \sgn(A)$ && Constant-rate flow (rate limiter)$^\dagger$ \\[12pt]
$\displaystyle\frac{1}{2\eta}\macaulay{|A| - \sigma_Y}^2$ & 
$\displaystyle\frac{\macaulay{|A|-\sigma_Y}}{\eta}\sgn(A)$&$ |A|\geq \sigma_Y $
& Bingham viscoplastic$^\ddagger$ \cite{bingham1922fluidity} \\[7pt]
& $0$&$ |A| < \sigma_Y $&\\[12pt]
$\displaystyle\frac{\eta}{a^2}\left(\cosh(aA) - 1\right)$ & $\displaystyle\frac{\eta}{a}\sinh(aA)$ & &Eyring viscosity \cite{eyring1936viscosity} \\[8pt]
\bottomrule
\end{tabular}

\vspace{4pt}
\begin{minipage}{\linewidth}
\footnotesize
$^\dagger$ The potential $\sigma_Y|A|$ is 1-homogeneous in $A$, yielding a flow rate whose magnitude $\sigma_Y$ is independent of $|A|$. This model constrains the rate (not the force) and is not rate-independent in the classical sense; see Remark~\ref{rem:classification}(iii)--(iv).\\
$^\ddagger$ The expression $\eta\dot{z}^2/2 + \sigma_Y|\dot{z}|$ sometimes listed as the Bingham ``dual'' potential is in fact the primal dissipation potential $\varphi(\dot{z})$; its Legendre--Fenchel conjugate yields the dual $\varphi^*(A) = \macaulay{|A|-\sigma_Y}^2/(2\eta)$ listed here.
\end{minipage}
\end{table}

The Cauchy stress $\tens{\sigma}$ and the deviatoric thermodynamic driving force $\tens{A}$ follow directly from standard thermodynamic arguments:
\begin{align}
    \tens{\sigma} &= \frac{\partial \psi}{\partial \tens{\varepsilon}} = K_\infty (\trace\tens{\varepsilon})\tens{I} + 2G_1 (\tens{\varepsilon}' - \tens{z}), \label{eq:3d-stress} \\
    \tens{A} &= -\frac{\partial \psi}{\partial \tens{z}} = 2G_1 (\tens{\varepsilon}' - \tens{z}). \label{eq:3d-driving-force}
\end{align}

The kinetic evolution is governed by the normality rule $\dot{\tens{z}} \in \subdiff_{\tens{A}}\varphi^*(\tens{A})$ ~\eqref{eq:evolution}.

Material characterization is typically conducted under pure shear condition \cite{Califano2023}. Therefore, the volumetric strain vanishes ($\trace\tens{\varepsilon} = 0$, implying $\tens{\varepsilon}' = \tens{\varepsilon}$), causing the tensorial quantities to decouple. Letting $\gamma$, $\tau$, $z$, and $A$ denote the active scalar shear components, the generalized state equations simplify to:
\begin{equation}\label{eq:1d-shear-state}
\tau = G_1 (\gamma - z) = A, \qquad A = G_1 (\gamma - z).
\end{equation}
When the free energy includes an additional equilibrium shear spring of modulus $G_\infty$ (as in the synthetic benchmarks of Section~\ref{sec:synthetic} and the experimental application of Section~\ref{sec:experimental}), the state law generalizes to $\tau = G_\infty\gamma + G_1(\gamma-z) = G_\infty\gamma + A$, while the thermodynamic driving force $A$ is unchanged.

Given $N_s$ measured strain--stress trajectories $\{(t_j, \gamma_j, \tau_j^{\mathrm{meas}})\}_{j=1}^{N_s}$ with $\gamma_j=\gamma(t_j)$ and $\tau_j^{\mathrm{meas}}=\tau^{\mathrm{meas}}(t_j)$, and the known free energy, the identification problem of the dual dissipation potential $\varphi^*$ is:
\begin{equation}\label{eq:sr-objective}
    \text{find } \varphi^* \in \mathcal{G}_{\mathrm{cvx}}^{\mathrm{comp}} \quad \text{such that} \quad \tau^{\mathrm{pred}}(t_j) \approx \tau_j^{\mathrm{meas}} \quad \forall\, t_j,
\end{equation}
where the predicted stress $\tau_j^{\mathrm{pred}}$ is obtained by solving the initial value problem:
\begin{equation}\label{eq:ivp}
    \dot{z} \in \subdiff_A \varphi^*(A), \quad A = G_1(\gamma_j - z), \quad z(0) = 0, \quad \tau_j^{\mathrm{pred}} = G_1(\gamma_j - z).
\end{equation}
This is a nonlinear inverse problem: the unknown $\varphi^*$ appears inside the differential equation whose solution determines the observable stress. 

\subsection{Convexity-Preserving Grammar}\label{sec:convexity}

Before presenting the formal grammar construction, we outline its conceptual structure by means of an example. Let us consider the following candidate potential:
\begin{equation}\label{eq:grammar-example}
    \varphi^*(A) = c_1\,|A|^p + c_2\,\bigl(\cosh(aA) - 1\bigr)^q,
\end{equation}
that, with $c_1, c_2, a > 0$, $p \geq 1$, $q \geq 1$, is convex, non-negative and symmetric in $A$, obtained as a combination of Power-law and Eyring viscosities (see Tab.~\ref{tab:potentials}). The construction proceeds in three steps:
\begin{enumerate}
    \item[\textup{(i)}] $g_1(A) = |A|^p$ and $g_2(A) = \cosh(aA) - 1$ are called \emph{inner primitives}, i.e., the basic building blocks for constructing more complex potentials.
    \item[\textup{(ii)}] $f(u) = u^q$ ($q \geq 1$) is the \emph{outer function}, i.e., a function that takes the inner primitive $g_2(A)$ as input and produces a new function $h(A) = f(g_2(A))$ by composition. 
    \item[\textup{(iii)}] The final potential is the \emph{positive linear combination} $\varphi^*(A) = c_1\,g_1(A) + c_2\,h(A)$ ($c1> 0, c2> 0$).
\end{enumerate}

The construction procedure relies on the following standard results from convex analysis \cite{boyd2004convex, rockafellar1970convex}.

\begin{proposition}[Composition rule for convexity]\label{prop:composition}
Let $g: \mathbb{R} \to \mathbb{R}$ be convex and let $f: \mathbb{R} \to \mathbb{R}$ be convex and non-decreasing on the range of $g$. Then the composition $h = f \circ g$, defined by $h(A) = f(g(A))$, is convex.\footnote{This is a standard result in convex analysis. The result holds without any differentiability assumption on $f$ or $g$: for $\lambda \in [0,1]$, convexity and monotonicity of $f$ give $f(g(\lambda A_1 + (1-\lambda)A_2)) \leq f(\lambda g(A_1) + (1-\lambda) g(A_2)) \leq \lambda f(g(A_1)) + (1-\lambda) f(g(A_2))$, where the first inequality uses the non-decreasing property of $f$ combined with the convexity of $g$, and the second uses the convexity of $f$. When $f$ and $g$ are both twice differentiable, an equivalent proof follows from the chain rule: $h''(A) = f''(g(A))\,(g'(A))^2 + f'(g(A))\,g''(A) \geq 0$, since $f'' \geq 0$, $f' \geq 0$, and $g'' \geq 0$.}
\end{proposition}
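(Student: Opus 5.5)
The proof works directly from the definition of convexity and uses no differentiability. Fix $A_1, A_2 \in \mathbb{R}$ and $\lambda \in [0,1]$, and set $A_\lambda = \lambda A_1 + (1-\lambda)A_2$. The goal is the chain
\[
h(A_\lambda) = f\bigl(g(A_\lambda)\bigr) \le f\bigl(\lambda g(A_1) + (1-\lambda) g(A_2)\bigr) \le \lambda f\bigl(g(A_1)\bigr) + (1-\lambda) f\bigl(g(A_2)\bigr),
\]
whose right-hand side is $\lambda h(A_1) + (1-\lambda) h(A_2)$.

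The first inequality follows in two steps. Convexity of $g$ gives $g(A_\lambda) \le \lambda g(A_1) + (1-\lambda) g(A_2)$. Applying $f$ to both sides then preserves the inequality because $f$ is non-decreasing. The second inequality is exactly the convexity of $f$ evaluated at the points $u_1 = g(A_1)$ and $u_2 = g(A_2)$.

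The one point that needs care is the monotonicity hypothesis, which is assumed only on the range of $g$. The point $\bar u = \lambda u_1 + (1-\lambda) u_2$ may not itself belong to $g(\mathbb{R})$. I would handle this as follows.
\begin{enumerate}
\item[(a)] $g(\mathbb{R})$ is an interval, because $g$ is convex and finite on $\mathbb{R}$, hence continuous, and the image of a connected set under a continuous map is connected.
\item[(b)] The point $\bar u$ lies between $u_1$ and $u_2$, both of which are in this interval, so $\bar u$ is in the interval too.
\item[(c)] The point $g(A_\lambda)$ is also in $g(\mathbb{R})$ by definition.
\end{enumerate}
Both arguments of $f$ in the first comparison therefore lie in the convex hull of the range, where $f$ is non-decreasing. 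This uses the natural reading of "non-decreasing on the range" as monotone on that interval.

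For completeness I would add the smooth sanity check from the footnote. If $f$ and $g$ are twice differentiable, then $h'' = f''(g)\,(g')^2 + f'(g)\,g''$, and each term is non-negative. This derivative check is not needed for the main argument.
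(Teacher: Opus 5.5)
Your proof is correct and follows the paper's own argument in the footnote: convexity of $g$ combined with monotonicity of $f$ gives the first inequality, convexity of $f$ gives the second, and the chain-rule computation is an optional aside. Your extra step, showing that $g(\mathbb{R})$ is an interval so that $\lambda g(A_1)+(1-\lambda)g(A_2)$ lies where $f$ is assumed non-decreasing, fills in a detail the paper leaves implicit.
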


\begin{proposition}[Closure under positive linear combinations]\label{prop:closure}
Let $f_1, f_2: \mathbb{R} \to \mathbb{R}$ be convex and non-negative, and let $c > 0$. Then:
\begin{enumerate}
    \item[\textnormal{(i)}] $f_1 + f_2$ is convex and non-negative;
    \item[\textnormal{(ii)}] $c\,f_1$ is convex and non-negative.
\end{enumerate}
\end{proposition}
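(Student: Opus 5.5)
The plan is to verify both claims directly from the definition of convexity, with no appeal to differentiability. This matches the subdifferential spirit of Proposition~\ref{prop:dissipation} and of the footnote to Proposition~\ref{prop:composition}. Throughout, fix arbitrary $A_1, A_2 \in \mathbb{R}$ and $\lambda \in [0,1]$, and set $A_\lambda = \lambda A_1 + (1-\lambda)A_2$.

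For (i), I would write out the convexity inequalities of $f_1$ and $f_2$ at $A_\lambda$:
\[
f_k(A_\lambda) \leq \lambda f_k(A_1) + (1-\lambda) f_k(A_2), \qquad k=1,2.
\]
Adding them gives
\[
(f_1+f_2)(A_\lambda) \leq \lambda (f_1+f_2)(A_1) + (1-\lambda)(f_1+f_2)(A_2),
\]
which is convexity of $f_1+f_2$. Non-negativity is immediate, since the sum of two non-negative reals is non-negative.

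For (ii), I would multiply the convexity inequality of $f_1$ by $c$. Because $c>0$, the direction of the inequality is preserved, so $c\,f_1$ is convex. Non-negativity holds because $c f_1(A) \geq 0$ whenever $c>0$ and $f_1(A)\geq 0$.

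Finally, I would remark on two points that matter for the grammar. First, the normalization $\varphi^*(0)=0$ required in (P2) is also preserved: if $f_1(0)=f_2(0)=0$, then $(f_1+f_2)(0)=0$ and $c f_1(0)=0$. Together with Proposition~\ref{prop:composition}, this means any finite positive combination of admissible terms satisfies (P1)--(P2), which follows from (i) and (ii) by induction on the number of terms. Second, the argument would extend verbatim to extended-valued functions $\mathbb{R}\to\mathbb{R}\cup\{+\infty\}$, using the usual arithmetic conventions for $+\infty$.

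There is no genuine obstacle here. The only point requiring care is the sign of $c$: strict positivity, or more generally $c \geq 0$, is essential. A negative $c$ would reverse the inequality and destroy both convexity and non-negativity. This is why the grammar restricts coefficients to be positive.
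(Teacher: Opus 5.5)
Your proof is correct. You sum the two convexity inequalities for (i) and scale by $c>0$ for (ii), which is exactly the standard definition-level argument; the paper does not write out a proof but cites the proposition as a known fact of convex analysis. The only additional observation the paper makes, that the normalization $f(0)=0$ is preserved under sums and positive scaling, is the same one you include.
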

\noindent In addition, the normalization $f(\tens{0}) = 0$ required by~Proposition~\ref{prop:dissipation} is trivially preserved: if $f_1(\tens{0}) = f_2(\tens{0}) = 0$, then $(f_1 + f_2)(\tens{0}) = 0$ and $(c\,f_1)(\tens{0}) = 0$.

Propositions~\ref{prop:composition} and~\ref{prop:closure} together provide the theoretical foundation for convexity-by-construction approaches in data-driven constitutive modeling. Input-convex neural architectures~\cite{amos2017input, fuhg2024review} exploit both properties to enforce convexity of learned strain-energy or dissipation potentials through architectures built on positive-weight sums and compositions with convex, non-decreasing activation functions. The grammar $\mathcal{G}_{\mathrm{cvx}}^{\mathrm{comp}}$ introduced below operationalises the same mathematical principles in an explicitly symbolic setting.

\medskip

By restricting outer functions to be convex and non-decreasing and inner functions to be convex and non-negative, every composition generated by the grammar inherits convexity by construction. Besides convexity, compositions must also respect domains. For instance, $f(u)=u^p$ with non-integer $p$ is only defined for $u\ge 0$, so we must ensure that the inner function takes values in $\mathbb{R}^+$. In our setting this is handled by design: the grammar guarantees that all inner subexpressions are non-negative through construction (primitives are non-negative, positive scaling preserves non-negativity, and sums of non-negative terms remain non-negative). Moreover, outer functions are restricted to those defined on $\mathbb{R}^+$ (such as $u^q$ with $q\ge 1$, $\exp(u)-1$, $\cosh(u)-1$, and $\text{softplus}(u)-\log(2)$), and constant tuning maintains non-negativity by enforcing lower bounds of zero for coefficients and one for exponents. Building on these results, we define the following function classes:

\paragraph{Primitive (inner) functions.}
We begin with a library of primitive building blocks $\mathcal{F}_{\mathrm{prim}}$, chosen so that each $g(A)$ is convex and non-negative and satisfies the natural normalization $g(0)=0$. We also enforce symmetry in $A$, which reflects the fact that dissipation depends on the magnitude of the driving force rather than its sign:
\begin{equation}\label{eq:primitives}
\begin{aligned}
    \mathcal{F}_{\mathrm{prim}} = \Big\{ 
    & |A|^p \; (p \geq 1), \; \cosh(aA) - 1, \; \log\cosh(aA), \\
    & \delta\left(\sqrt{1 + A^2/\delta^2} - 1\right), \; \sinh^2(aA), \\
    & \frac{1}{a^2}\left(e^{a|A|} - 1 - a|A|\right), \; \frac{1}{a}\left(e^{aA^2} - 1\right), \\
    & \macaulay{|A| - \sigma_Y}^r \; (r \geq 1,\; \sigma_Y \geq 0) \Big\},
\end{aligned}
\end{equation}
where $a > 0$ and $\delta > 0$ are scale parameters.

Table~\ref{tab:primitives} summarizes the primitive functions, their induced flow rules, and physical interpretations. These primitives are not meant to be exhaustive, but to span a range of commonly used dissipation mechanisms while remaining convex. The majority of these primitives are $C^\infty$; the exception is $|A|^p$ with $1 \leq p < 2$, which is $C^1$ but not $C^2$ at $A = 0$ (see Remark~\ref{rem:comp-regularity}).

\begin{table}[t]
\centering
\caption{Primitive functions in $\mathcal{F}_{\mathrm{prim}}$ with their induced flow rules and physical interpretations. All functions are convex, non-negative, symmetric, and normalized ($g(0) = 0$).}
\label{tab:primitives}
\begin{tabular}{@{}lll@{}}
\toprule
Primitive & Flow rule & Physical interpretation \\
\midrule
$|A|^p$, $p \geq 1$ & $p|A|^{p-1}\sgn(A)$ & Power-law (Newtonian viscosity when $p=2$)\\[4pt]
$\cosh(aA) - 1$ & $a\sinh(aA)$ & Eyring viscosity \\[4pt]
$\log\cosh(aA)$ & $a\tanh(aA)$ & Saturating viscosity \\[4pt]
$\delta\left(\sqrt{1 + A^2/\delta^2} - 1\right)$ & $\displaystyle\frac{A}{\sqrt{A^2 + \delta^2}}$ & Pseudo-Huber \\[8pt]
$\sinh^2(aA)$ & $a\sinh(2aA)$ & Hyperbolic Eyring \\[4pt]
$\displaystyle\frac{1}{a^2}\left(e^{a|A|} - 1 - a|A|\right)$ & $\displaystyle\frac{1}{a}\left(e^{a|A|} - 1\right)\sgn(A)$ & Arrhenius activation \\[8pt]
$\displaystyle\frac{1}{a}\left(e^{aA^2} - 1\right)$ & $2Ae^{aA^2}$ & Exponential stiffening \\[4pt]
$\macaulay{|A| - \sigma_Y}^r$, $r \geq 1$, $\sigma_Y \geq 0$ & $r\macaulay{|A|-\sigma_Y}^{r-1}\sgn(A)$ & Viscoplastic yield ($r{=}2$: Bingham), $|A|\geq\sigma_Y$\\[4pt]
& $0$ & $|A| < \sigma_Y$ \\[4pt]
\bottomrule
\end{tabular}
\\
\begin{flushleft}
\begin{footnotesize}
\end{footnotesize}
\end{flushleft}
\end{table}

\paragraph{Outer functions.}
To enrich the functional class beyond positive sums of primitives, we allow compositions with a small set of outer functions. The class $\mathcal{F}_{\mathrm{out}}$ is selected so that every $f:\mathbb{R}^+\to\mathbb{R}^+$ is convex and non-decreasing and satisfies $f(0)=0$, exactly matching the hypotheses of Proposition~\ref{prop:composition}:
\begin{equation}\label{eq:outer}
    \mathcal{F}_{\mathrm{out}} = \left\{ u^q \; (q \geq 1), \; e^{u} - 1, \; \cosh(u) - 1, \; \softplus(u) - \log 2 \right\},
\end{equation}
where $\softplus(u) = \log(1 + e^u)$.\footnote{The softplus function~\cite{dugas2000softplus} is a smooth, convex approximation of the rectified linear unit (ReLU) $\max(u,0)$, which coincides with the Macaulay bracket $\macaulay{u}$. The shift by $-\log 2$ enforces the normalization $f(0) = 0$ required by property (P2).}
The verification that each primitive and outer function satisfies the required properties is straightforward and follows by direct computation of the second derivative, evaluation at zero, and symmetry arguments.

We now define the convexity-preserving grammar $\mathcal{G}_{\mathrm{cvx}}^{\mathrm{comp}}$ through its production rules:\label{def:grammar}
\begin{align}
    \texttt{ConvexExpr} &\to \texttt{ConvexTerm} \mid \texttt{ConvexExpr} + \texttt{ConvexExpr} \mid c \cdot \texttt{ConvexExpr}, \quad c > 0 \label{eq:grammar-expr} \\[4pt]
    \texttt{ConvexTerm} &\to \texttt{Primitive} \mid \texttt{Composition} \label{eq:grammar-term} \\[4pt]
    \texttt{Primitive} &\to |A|^p \mid (\cosh(aA) - 1) \mid \log\cosh(aA) \nonumber \\
    & \quad \mid \delta(\sqrt{1 + A^2/\delta^2} - 1) \mid \sinh^2(aA) \nonumber \\
    & \quad \mid (e^{a|A|} - 1 - a|A|)/a^2 \mid (e^{aA^2} - 1)/a \nonumber \\
    & \quad \mid \macaulay{|A| - \sigma_Y}^r \label{eq:grammar-prim} \\[4pt]
    \texttt{Composition} &\to \texttt{OuterFunc}(\texttt{InnerFunc}) \label{eq:grammar-comp} \\[4pt]
    \texttt{OuterFunc} &\to (\cdot)^q \mid e^{(\cdot)} - 1 \mid \cosh(\cdot) - 1 \mid \softplus(\cdot) - \log 2 \label{eq:grammar-outer} \\[4pt]
    \texttt{InnerFunc} &\to \texttt{Primitive} \mid \texttt{InnerFunc} + \texttt{InnerFunc} \mid c \cdot \texttt{InnerFunc} \mid \texttt{Composition} \label{eq:grammar-inner}
\end{align}
where $p \geq 1$, $q \geq 1$, $r \geq 1$, $a \neq 0$, $\delta > 0$, and $\sigma_Y \geq 0$.

\noindent In \eqref{eq:grammar-expr}--\eqref{eq:grammar-inner}, the symbols \texttt{ConvexExpr}, \texttt{ConvexTerm}, \texttt{Primitive}, \texttt{Composition}, \texttt{OuterFunc}, and \texttt{InnerFunc} denote \emph{classes of admissible functions}. The arrow ``$\to$'' should be read as ``may be constructed as''. Thus, \eqref{eq:grammar-expr} states that an admissible potential can be a single convex term, a sum of admissible potentials, or a positively scaled admissible potential. A \texttt{ConvexTerm}~\eqref{eq:grammar-term} is either a single \texttt{Primitive} (one elementary convex dissipative mechanism from Tab.~\ref{tab:primitives}) or a \texttt{Composition} of the form $f(g(A))$~\eqref{eq:grammar-comp} with $f \in \mathcal{F}_{\mathrm{out}}$ and $g$ built by \texttt{InnerFunc}~\eqref{eq:grammar-inner}. Since \texttt{InnerFunc} may itself contain a \texttt{Composition}, the grammar permits arbitrary nesting up to a maximum depth\footnote{To maintain tractability, the maximum depth of nested compositions is controlled by the parameter $D_{\max}$, which limits the recursion depth to ensure a finite search space while covering physically relevant potentials.}. The recursive structure of \texttt{InnerFunc} ensures that the inner argument $g(A)$ remains convex and non-negative, since it is built from primitives, their positive combinations, and nested compositions of convex functions.

By structural induction on the grammar, it is possible to prove that above generated grammar satisfies the following (see Appendix~\ref{proof:grammar} for the complete argument). 

\begin{theorem}[Grammar Consistency]\label{thm:grammar}
Every expression $\varphi^*$ generated by $\mathcal{G}_{\mathrm{cvx}}^{\mathrm{comp}}$ satisfies properties \emph{(P1)} and \emph{(P2)} in Proposition~\ref{prop:dissipation}: $\varphi^*$ is convex, non-negative, and normalized ($\varphi^*(0) = 0$).
\end{theorem}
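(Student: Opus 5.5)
The plan is a structural induction over derivations in $\mathcal{G}_{\mathrm{cvx}}^{\mathrm{comp}}$, using one strengthened invariant for every nonterminal that produces a function of $A$ (\texttt{Primitive}, \texttt{InnerFunc}, \texttt{Composition}, \texttt{ConvexTerm}, \texttt{ConvexExpr}). The invariant is:
\[
\mathcal{P}(g):\quad g:\mathbb{R}\to[0,\infty)\ \text{is finite-valued and convex, with}\ g(0)=0 .
\]
Non-negativity of the range is included on purpose: it is what makes the outer functions well defined and non-decreasing on the range of $g$. Without it, Proposition~\ref{prop:composition} could not be applied at composition nodes. Because the grammar is mutually recursive (\texttt{InnerFunc} may contain a \texttt{Composition}, which in turn contains an \texttt{InnerFunc}), I would induct on the height of the derivation tree, which is finite because depth is bounded by $D_{\max}$. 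The inductive step is then a case split on the last production rule used.

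\textbf{Base case.} Each element of $\mathcal{F}_{\mathrm{prim}}$ is checked directly.
\begin{itemize}
\item $|A|^p$ with $p\ge1$ is a non-decreasing convex function of the convex function $|A|$.
\item For $\cosh(aA)-1$, $\log\cosh(aA)$, $\sinh^2(aA)$, the pseudo-Huber term and $(e^{aA^2}-1)/a$, the second derivatives are non-negative.
  \begin{itemize}
  \item For $\log\cosh$ it is $a^2\operatorname{sech}^2$.
  \item For $\sinh^2$ it is $2a^2\cosh(2aA)$.
  \item The last term needs $a>0$ as in~\eqref{eq:primitives}. The grammar's $a\neq0$ is harmless for the even terms, but I will note that $(e^{aA^2}-1)/a$ should be read with $a>0$, or equivalently with $|a|$.
  \end{itemize}
\item The Arrhenius term is $\phi(|A|)$ with $\phi(s)=(e^{as}-1-as)/a^2$. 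On $[0,\infty)$, $\phi$ is convex, non-decreasing ($\phi'(s)=(e^{as}-1)/a\ge0$) and satisfies $\phi(0)=0$, so Proposition~\ref{prop:composition} applies with $g=|A|$.
\item $\macaulay{|A|-\sigma_Y}^r$ is $u\mapsto u^r$ composed with $\max(|A|-\sigma_Y,0)$. The latter is a maximum of two convex functions, hence convex, and it is non-negative. Since $\sigma_Y\ge0$, the value at $0$ is $0$.
\end{itemize}
Non-negativity and $g(0)=0$ are immediate in every case.

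\textbf{Inductive step.}
\begin{itemize}
\item For the productions $X+X$ and $c\cdot X$ with $c>0$, in both \texttt{ConvexExpr} and \texttt{InnerFunc}, Proposition~\ref{prop:closure} together with the remark on normalization following it preserves $\mathcal{P}$.
\item The productions \texttt{ConvexTerm}$\to$\texttt{Primitive}$\mid$\texttt{Composition} and \texttt{InnerFunc}$\to$\texttt{Primitive}$\mid$\texttt{Composition} simply pass the invariant upward.
\item The substantive case is \texttt{Composition}$\to f(g)$ with $f\in\mathcal{F}_{\mathrm{out}}$ and $g$ satisfying $\mathcal{P}$. Each $f$ is convex and non-decreasing on $[0,\infty)$ and satisfies $f(0)=0$, so $f\ge0$ there. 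Explicitly:
  \begin{itemize}
  \item $u^q$ with $q\ge1$;
  \item $e^u-1$;
  \item $\cosh u-1$, which has derivative $\sinh u\ge0$ for $u\ge0$;
  \item $\softplus(u)-\log2$, which has derivative equal to the logistic function, positive, and second derivative positive.
  \end{itemize}
  The range of $g$ lies in $[0,\infty)$, where $f$ is defined, convex and non-decreasing. Proposition~\ref{prop:composition} (its footnote proof only uses the behaviour of $f$ on the convex hull of the range of $g$, which is contained in $[0,\infty)$) then gives convexity of $f\circ g$. Moreover $f(g(A))\ge f(0)=0$ and $f(g(0))=f(0)=0$, so $\mathcal{P}(f\circ g)$ holds.
\end{itemize}
Applying the invariant at the root \texttt{ConvexExpr} yields (P1) and (P2).

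\textbf{Main obstacle.} I expect the delicate point to be the composition step. Several outer functions (for instance $\cosh u-1$, and $u^q$ with non-integer $q$) are non-decreasing only on $[0,\infty)$, or only defined there. Proposition~\ref{prop:composition} therefore holds only because non-negativity of the inner argument is carried through the induction. This is why the invariant must be stated for \texttt{InnerFunc} with the range condition, rather than convexity alone. I will also verify that the monotonicity hypothesis is used on the convex hull of the range, which is contained in $[0,\infty)$ since that set is convex.
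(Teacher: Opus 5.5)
Your proposal is correct and follows the paper's own proof essentially step for step. Both use structural induction on the production rules with the invariant ``convex, non-negative, $g(0)=0$'', Proposition~\ref{prop:closure} for sums and positive scalings, and Proposition~\ref{prop:composition} at composition nodes, where non-negativity of the inner argument places its range in $\mathbb{R}^+$, on which each outer function is convex and non-decreasing. You are more careful than the paper in two places: it only asserts the primitives' properties, whereas you check them explicitly, and your caveat is well taken. For $a<0$ the term $(e^{aA^2}-1)/a$ has second derivative $2e^{aA^2}(1+2aA^2)$, which turns negative once $A^2>1/(2|a|)$. So the grammar's condition $a\neq 0$ must be read as $a>0$ for that primitive, as in~\eqref{eq:primitives}.
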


\begin{figure}[t]
    \centering
    \begin{tikzpicture}[
        level 1/.style={sibling distance=5cm, level distance=1.4cm},
        level 2/.style={sibling distance=2.5cm, level distance=1.4cm},
        level 3/.style={sibling distance=1.8cm, level distance=1.4cm},
        level 4/.style={sibling distance=1.5cm, level distance=1.4cm},
        every node/.style={font=\small},
        operator/.style={circle, draw, thick, fill=orange!25, minimum size=9mm},
        terminal/.style={rectangle, draw, thick, fill=blue!20, rounded corners=3pt, 
                         minimum width=12mm, minimum height=7mm},
        constant/.style={rectangle, draw, thick, fill=violet!20, rounded corners=3pt, 
                         minimum width=12mm, minimum height=7mm},
        composition/.style={rectangle, draw=green!60!black, thick, fill=green!25, 
                            rounded corners=3pt, minimum width=14mm, minimum height=8mm},
        edge from parent/.style={draw, thick, -{Stealth[length=2.5mm]}}
    ]
    
    \node[operator] {$+$}
        child {
            node[operator] {$\times$}
            child {node[constant] {$c_1$}}
            child {
                node[composition] {$(\cdot)^q$}
                child {
                    node[operator] {$+$}
                    child {node[terminal] {$|A|^{p_1}$}}
                    child {
                        node[operator] {$\times$}
                        child {node[constant] {$c_2$}}
                        child {node[terminal] {$|A|^{p_2}$}}
                    }
                }
            }
        }
        child {
            node[operator] {$\times$}
            child {node[constant] {$c_3$}}
            child {node[terminal] {$|A|^{p_3}$}}
        };
    
    \matrix[anchor=north west, draw, rounded corners, fill=gray!5, 
            column sep=3mm, row sep=1mm, inner sep=4pt,
            font=\footnotesize] at (4.5, 0) {
        \node[operator, minimum size=5mm] {}; & \node {Binary operator}; \\
        \node[terminal, minimum width=8mm, minimum height=5mm] {}; & \node {Primitive $g \in \mathcal{F}_{\mathrm{prim}}$}; \\
        \node[constant, minimum width=8mm, minimum height=5mm] {}; & \node {Constant $c > 0$}; \\
        \node[composition, minimum width=10mm, minimum height=5mm] {}; & \node {Outer function $f \in \mathcal{F}_{\mathrm{out}}$}; \\
    };
    
    \end{tikzpicture}
    \caption{Expression tree representation of a candidate dissipation potential $\varphi^*(A) = c_1 (|A|^{p_1} + c_2 |A|^{p_2})^q + c_3 |A|^{p_3}$ generated by the convexity-preserving grammar $\mathcal{G}_{\mathrm{cvx}}^{\mathrm{comp}}$. Internal nodes (orange circles) represent binary operators; blue rectangular nodes denote primitive functions from $\mathcal{F}_{\mathrm{prim}}$; violet nodes are positive scaling constants; the green node represents composition with an outer function from $\mathcal{F}_{\mathrm{out}}$, applied to the convex non-negative subtree below it.}
    \label{fig:expression-tree}
    \end{figure}

\paragraph{Constitutive equivalence.}\label{par:symbolic-equivalence}
It is worth remarking that two expressions $\varphi^*_1(A)$ and
$\varphi^*_2(A)$ are \emph{constitutively equivalent} if they induce the same
evolution law, i.e., their subdifferentials coincide for all $A$:
$\subdiff_A \varphi^*_1(A) = \subdiff_A \varphi^*_2(A)$ for all $A \in \mathbb{R}$.
For convex functions this occurs if and only if $\varphi^*_1(A) - \varphi^*_2(A) = c$ for some constant $c$
independent of $A$. The normalization $\varphi^*(0) = 0$ (automatically
satisfied by our grammar) resolves this ambiguity: if two normalized convex potentials have identical
subdifferentials, then they are identical, i.e., $\varphi^*_1(0) = \varphi^*_2(0) = 0$
together with $\subdiff_A \varphi^*_1 = \subdiff_A \varphi^*_2$ implies $\varphi^*_1(A) = \varphi^*_2(A)$ for all $A$.

\begin{remark}[Differentiability of compositions]\label{rem:comp-regularity}
While Proposition~\ref{prop:composition} guarantees \emph{convexity} of $f \circ g$ without smoothness requirements, the \emph{differentiability} of the composition depends on the regularity of both $f$ and $g$. In the grammar $\mathcal{G}_{\mathrm{cvx}}^{\mathrm{comp}}$, all outer functions $f \in \mathcal{F}_{\mathrm{out}}$ are $C^\infty$ on $\mathbb{R}^+$, and most inner primitives are $C^\infty$ everywhere except for $|A|^p$ with $1 \leq p < 2$, which is $C^1$ but not $C^2$ at $A = 0$, and the Macaulay bracket primitive $\macaulay{|A|-\sigma_Y}^r$, which is $C^{r-1}$ at $|A| = \sigma_Y$ for integer $r$ and $C^1$ for $r > 1$. For any $p > 1$, the flow rule $\partial\varphi^*/\partial A = cp|A|^{p-1}\sgn(A)$ remains continuous and single-valued, so the ODE formulation $\dot{z} = \partial\varphi^*/\partial A$ is well-posed. Only the limiting case $p = 1$ requires the full subdifferential formulation of Proposition~\ref{prop:dissipation}, as the flow rule $\dot{z} \in c\,\subdiff|A|$ becomes set-valued at $A = 0$. The numerical consequences of this loss of regularity, both for the numerical scheme and for the smoothness of the fitness landscape, are discussed in Section~\ref{sec:optimization}.
\end{remark}

\begin{remark}\label{rem:macaulay}
The shifted Macaulay bracket primitive $\macaulay{|A| - \sigma_Y}^r$ with $r \geq 1$ and $\sigma_Y \geq 0$ extends the grammar to viscoplastic potentials exhibiting a genuine elastic domain. For $|A| \leq \sigma_Y$, the primitive vanishes identically, giving $\dot{z} = 0$ (no dissipation); for $|A| > \sigma_Y$, it produces rate-dependent flow $\dot{z} = r\macaulay{|A|-\sigma_Y}^{r-1}\sgn(A)$ whose intensity grows with the excess force above the yield threshold~$\sigma_Y$. The special case $r = 2$ recovers the classical Bingham viscoplastic dual potential $\macaulay{|A|-\sigma_Y}^2/(2\eta)$ (Tab.~\ref{tab:potentials}). Convexity of $\macaulay{|A|-\sigma_Y}^r$ for $r \geq 1$ follows from two successive applications of the composition rule (Proposition~\ref{prop:composition}): first, $A \mapsto |A| - \sigma_Y$ is convex and $\max(\cdot,0)$ is convex and non-decreasing, so $A \mapsto \macaulay{|A|-\sigma_Y}$ is convex and non-negative; second, $(\cdot)^r$ with $r \geq 1$ is convex and non-decreasing on $\mathbb{R}^+$, so the full map $A \mapsto \macaulay{|A|-\sigma_Y}^r$ is convex. Non-negativity and normalization ($\macaulay{|0| - \sigma_Y}^r = 0$ for $\sigma_Y \geq 0$) are immediate. For $\sigma_Y = 0$ the primitive reduces to $|A|^r$, so the threshold $\sigma_Y \geq 0$ lets the GP smoothly interpolate between yield-free and yield-stress behaviors. The non-differentiability of the bracket at $|A| = \sigma_Y$ has implications for parameter tuning, addressed in Section~\ref{sec:optimization}.
\end{remark}

\subsection{Evolutionary Optimization}\label{sec:optimization}

The evolutionary algorithm sketched in Fig.~\ref{fig:flowchart} relies on the following main components: \emph{population initialization} by random sampling from $\mathcal{G}_{\mathrm{cvx}}^{\mathrm{comp}}$; \emph{fitness evaluation} via forward rollout (forward time integration of the evolution law $\dot{z} = \partial_A \varphi^*$) and comparison with training data; \emph{selection and elitism} through tournament selection with elite preservation; \emph{convexity-preserving genetic operators} (crossover and mutation) that maintain grammar membership; and \emph{parameter optimization} by a constrained two-mode local search (L-BFGS-B for smooth candidates, Nelder--Mead for those containing the non-differentiable Macaulay bracket) at each generation.

\begin{figure}[t]
\centering
\begin{tikzpicture}[
    node distance=0.8cm and 1.2cm,
    box/.style={rectangle, draw, rounded corners, minimum width=2.8cm, minimum height=0.9cm, align=center, font=\small},
    data/.style={box, fill=blue!15},
    process/.style={box, fill=green!15},
    gp/.style={box, fill=orange!15},
    output/.style={box, fill=red!15},
    arrow/.style={-{Stealth[length=2.5mm]}, thick}
]

\node[gp] (init) {Initialize population\\$\mathcal{P}_0 \subset \mathcal{G}_{\mathrm{cvx}}^{\mathrm{comp}}$};
\node[gp, below=of init] (tune) {Parameter tuning\\(NM / L-BFGS-B, bounded)};
\node[gp, below=of tune] (rollout) {Rollout simulation\\$\dot{z} = \partial_A \varphi^*(A)$\\$\tau_{\text{pred}} = G_\infty\gamma + A$};
\node[gp, below=of rollout] (fitness) {Fitness evaluation\\(nRMSE)};
\node[gp, below=of fitness] (evolve) {Reproduction\\Elitism + Tournament\\(Crossover + Mutation)};

\node[output, below=of evolve] (output) {Best $\hat{\varphi}^*(A)$};

\draw[arrow] (init) -- (tune);
\draw[arrow] (tune) -- (rollout);
\draw[arrow] (rollout) -- (fitness);
\draw[arrow] (fitness) -- (evolve);
\draw[arrow] (evolve) -- (output);

\draw[arrow] (evolve.east) -- ++(0.8,0) |- node[pos=0.25, right, font=\scriptsize] {next gen.} (tune.east);

\begin{scope}[on background layer]
\node[draw=orange!60, dashed, rounded corners, fit=(init)(tune)(rollout)(fitness)(evolve), inner sep=0.3cm, label={[font=\scriptsize]above:GP Evolution Loop}] {};
\end{scope}

\end{tikzpicture}
\caption{Schematic of the GP evolution loop: candidate potentials from the composition-extended convex grammar $\mathcal{G}_{\mathrm{cvx}}^{\mathrm{comp}}$ are evolved via genetic programming with constrained parameter tuning at each generation. For each candidate $\hat{\varphi}^*(A)$, the evolution law $\dot{z} = \partial_A \hat{\varphi}^*(A)$ is integrated forward in time (rollout), and the predicted stresses are compared with the target values. The specific quantities entering the nRMSE differ between settings (stress signals for the synthetic benchmarks and the experimental campaign, storage/loss moduli $G', G''$ for DMA fitting); see the respective sections for details.}
\label{fig:flowchart}
\end{figure}

\paragraph{(I) Population initialization.}\label{rem:initialization}
The initial population $\mathcal{P}_0$ is generated with a tri-modal sampler that combines simple ``base'' individuals with progressively more expressive multi-term ones, ensuring that the genetic operators always start from a diverse pool. Each of the $N_{\mathrm{pop}}$ individuals is drawn at random from one of three categories:
\begin{itemize}
    \item with probability $25\%$, a \emph{pure-base} individual: a single primitive $g\in\mathcal{F}_{\mathrm{prim}}$ without any outer wrapping or scaling, guaranteeing that individual primitives survive the initial population and are not hidden inside compositions that the complexity guard (the node-count cap $N_{\max}$ applied during fitness evaluation, introduced below) would discard;
    \item with probability $25\%$, a \emph{single-block} individual: one term produced by the recursive sampler described below, possibly including an outer composition and a positive scaling;
    \item with probability $50\%$, a \emph{multi-block} individual: an additive combination of $K\in\{1,2,3\}$ independent terms produced by the same recursive sampler.
\end{itemize}
The recursive sampler builds each term by walking the grammar $\mathcal{G}_{\mathrm{cvx}}^{\mathrm{comp}}$: at every internal node, with probability $\pi_{\Sigma}$ it emits an intra-block sum $\mathtt{expr} + \mathtt{expr}$; otherwise it selects uniformly among (i) a primitive base $g\in\mathcal{F}_{\mathrm{prim}}$, (ii) an outer composition $f\circ\mathtt{expr}$ with $f\in\mathcal{F}_{\mathrm{out}}$, or (iii) a positive scaling $c\cdot\mathtt{expr}$ with $c>0$. A branch terminates as soon as a primitive base~(i) is selected; options~(ii) and~(iii) recurse on their inner argument. As a safeguard, recursion is also forced to stop when the depth $D_{\max}$ is reached, at which point a primitive is returned. Throughout this work $\pi_{\Sigma}=0.05$.

Tunable parameters are initialized from uniform distributions:
\begin{itemize}
    \item Primitive coefficients and amplitudes: $c, a \sim \mathcal{U}(0.05, 3.0)$, constrained to $c, a > 0$; structural scaling constants $c \cdot \mathtt{expr}$ use the slightly wider range $\mathcal{U}(0.01, 3.0)$;
    \item Power-law and composition exponents: $p, q \sim \mathcal{U}(1.0, 3.0)$, constrained to $p, q \geq 1$;
    \item Macaulay exponent: $r$ initialized to one of the two physical regimes $\{1, 2\}$ ($r=1$ Coulomb, $r=2$ Bingham), constrained to $r \geq 1$;
    \item Pseudo-Huber transition scale: $\delta \sim \mathcal{U}(0.05, 1.5)$, constrained to $\delta > 0$;
    \item Yield threshold: $\sigma_Y \sim \mathcal{U}(0.01, c_{\max})$, where the upper bound $c_{\max} = (G_\infty + G_1)\max_t|\gamma(t)|$ is set data-drivenly to the scale of the driving force, so that the initial threshold spans the physically relevant range; constrained to $\sigma_Y \geq 0$.
\end{itemize}
These ranges are design choices, selected to span physically plausible dissipation behaviors while maintaining numerical stability; the subsequent parameter optimization refines these initial values based on the training data.

The combination of the tri-modal $25/25/50$ split and the recursive single-term sampler yields a non-trivial joint distribution over expression structures, further shaped by the complexity guard $N_{\max}$ which discards expressions exceeding the maximum node count during fitness evaluation.\footnote{The node count is the total number of nodes in the (SymPy-canonicalized) expression tree, counting every operator and function node together with all leaves, i.e.\ the variable $A$ and every numeric constant (coefficients and exponents included). The same metric defines the complexity guard $N_{\max}$ and the parsimony penalty $\lambda_{\mathrm{cx}}\cdot\mathrm{nodes}$.} Table~\ref{tab:grammar-distribution} reports the resulting structural composition of $\mathcal{P}_0$ in three configurations: (i)~the raw sampler at $D_{\max}=3$ without node filtering; (ii)~the synthetic benchmark setting (Section~\ref{sec:setup}, $D_{\max}=3$, $N_{\max}=12$); and (iii)~the experimental setting (Section~\ref{sec:experimental}, $D_{\max}=6$, $N_{\max}=20$). Statistics are estimated from $20{,}000$ samples drawn with the same tri-modal sampler and $\pi_{\Sigma}=0.05$ used by the GP runs.

Without the node filter, compositions account for nearly half the population and expressions average $19.1$ nodes with up to $8$ additive terms (after SymPy simplification). When the complexity guard is active, the rejection rate is substantial ($61.6\%$ for $N_{\max}=12$, $39.7\%$ for $N_{\max}=20$), and the surviving population shifts toward structurally simpler expressions: in the synthetic setting, $99\%$ of initial individuals are single-term (one primitive or one composition), with a mean of $7.7$ nodes; in the experimental setting, $85\%$ are still single-term despite the larger budget, with a mean of $10.5$ nodes. This bias toward parsimony in the initial population does not prevent the evolutionary search from discovering complex structures when the data requires them, as demonstrated by the experimental results in Section~\ref{sec:experimental}.

\begin{table}[t]
\centering
\caption{Structural distribution of initial-population expressions generated by $\mathcal{G}_{\mathrm{cvx}}^{\mathrm{comp}}$, estimated from $20{,}000$ Monte Carlo samples using the grammar implementation. Three configurations are shown: the raw grammar without node filtering, the synthetic benchmark setting (Section~\ref{sec:setup}), and the experimental setting (Section~\ref{sec:experimental}).}
\label{tab:grammar-distribution}
\begin{tabular}{@{}lccc@{}}
\toprule
 & Raw grammar & Synthetic & Experimental \\
 & ($D_{\max}{=}3$) & ($D_{\max}{=}3$, $N_{\max}{=}12$) & ($D_{\max}{=}6$, $N_{\max}{=}20$) \\
\midrule
\emph{Rejection rate}             &  n/a    & 61.6\% & 39.7\% \\
\addlinespace
Single primitive                  & 15.8\% & 38.2\% & 24.4\% \\
Single composition                & 43.7\% & 60.8\% & 60.4\% \\
Sum of primitives                 &  6.0\% &  0.3\% &  9.1\% \\
Mixed (sums + compositions)       & 34.4\% &  0.7\% &  6.2\% \\
\addlinespace
Mean additive terms               &  1.66  &  1.01  &  1.16  \\
Mean tree nodes                   & 19.1   &  7.7   & 10.5   \\
\bottomrule
\end{tabular}
\end{table}

\paragraph{(II) Evolution.}
The population evolves for $\Gamma_{\max}$ generations. For each candidate $\hat{\varphi}^*(A)$, the evolution law $\dot{z} = \partial_A \hat{\varphi}^*(A)$ is integrated from rest ($z_0 = 0$) under the prescribed loading, and the predicted shear stress $\tau_{\mathrm{pred}}(t)$ is computed. Fitness is then evaluated by comparing the model response with the data; the specific metric depends on the application and is detailed in Sections~\ref{sec:synthetic} and~\ref{sec:experimental}.

\paragraph{(III) Selection and Elitism.}
Parent selection is performed via \emph{tournament selection}: to produce each new individual, a random subset of $k = \lfloor f_t \cdot N_{\mathrm{pop}} \rfloor$ candidates is drawn from the current population, where $f_t \in (0,1)$ is the tournament fraction controlling selection pressure, and the candidate with the best (lowest) fitness wins the tournament and becomes a parent. Additionally, the top $\lfloor f_e \cdot N_{\mathrm{pop}} \rfloor$ individuals, where $f_e \in (0,1)$ is the elite fraction, are copied unchanged into the next generation (\emph{elitism}), ensuring that the best solutions discovered so far are never lost.

\paragraph{(IV) Convexity-Preserving Genetic Operators.}
The population evolves through crossover and mutation, designed to maintain membership in $\mathcal{G}_{\mathrm{cvx}}^{\mathrm{comp}}$. Each individual is represented as an ordered list of \emph{terms} $\{B_k\}_{k=1}^K$ that recombine additively into the candidate potential $\hat{\varphi}^*(A) = \sum_{k=1}^K B_k(A)$, with each $B_k$ a \texttt{ConvexExpr} produced by the recursive sampler~\eqref{eq:grammar-expr} (a primitive, a composition, a positive scaling, or a sum thereof). Genetic operators act at the level of these terms, so that the normalization $B_k(0) = 0$ established at generation time is preserved through every recombination. Each offspring undergoes crossover with probability $p_{\mathrm{cx}}$ and mutation with probability $p_{\mathrm{mut}}$; these probabilities are applied independently, so an individual may undergo both operations.

\begin{itemize}
    \item Term-Level crossover: A random term is selected from one parent and either inserted into or used to replace a random term in the other parent. If the resulting expression exceeds the maximum node budget $N_{\max}$, a random subset of terms is retained.
    \item Term-Level mutation: Operations include replacing a term with a newly generated convex term, adding new terms, or removing terms. Removal is only allowed when at least two terms are present, ensuring that the expression remains non-trivial.
\end{itemize}

\begin{lemma}[Crossover preserves grammar membership]\label{lem:crossover}
If $\varphi^*_1, \varphi^*_2 \in \mathcal{G}_{\mathrm{cvx}}^{\mathrm{comp}}$, then their crossover offspring $\varphi^*_{\mathrm{child}} \in \mathcal{G}_{\mathrm{cvx}}^{\mathrm{comp}}$.
\end{lemma}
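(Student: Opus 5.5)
The argument rests on the term-list representation from (IV): every individual is stored as $\hat{\varphi}^* = \sum_{k=1}^K B_k$, and each $B_k$ is a \texttt{ConvexExpr} built by the recursive sampler through the productions \eqref{eq:grammar-expr}--\eqref{eq:grammar-inner}. I would first record a closure fact about the start symbol. The production $\texttt{ConvexExpr} \to \texttt{ConvexExpr} + \texttt{ConvexExpr}$, applied $K-1$ times, shows that a finite nonempty sum of \texttt{ConvexExpr}'s is again a \texttt{ConvexExpr}. The converse also holds in the sense we need: an individual lies in $\mathcal{G}_{\mathrm{cvx}}^{\mathrm{comp}}$ exactly when its term list is nonempty and every entry is derivable from \texttt{ConvexExpr}. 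Membership of an individual is therefore a property of its list of terms, and the parentheses used when summing them do not matter.

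Next I would list the possible crossover offspring. Let the parents have term lists $\{B^{(1)}_k\}_{k=1}^{K_1}$ and $\{B^{(2)}_k\}_{k=1}^{K_2}$, and let $B^{(1)}_j$ be the donor term. Insertion gives the list $\{B^{(2)}_k\}\cup\{B^{(1)}_j\}$. Replacement gives $\{B^{(2)}_k\}_{k\neq i}\cup\{B^{(1)}_j\}$. If the node budget $N_{\max}$ is exceeded, the truncation step keeps a random nonempty sub-multiset of the terms. In every case the child's list is a nonempty multiset whose elements are copies of terms from the parents. Each parent term is derivable from \texttt{ConvexExpr} by hypothesis. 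Copying a term unchanged, including all its internal constants $c>0$, $p,q,r\ge 1$, $\delta>0$, $\sigma_Y\ge0$, keeps its derivation tree intact. By the closure fact, the child is then in $\mathcal{G}_{\mathrm{cvx}}^{\mathrm{comp}}$.

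The step that needs the most care is nonemptiness, together with the claim that the term-level representation agrees with the grammar. Insertion and replacement both leave at least one term, since replacement removes one term and adds one. For truncation, the retained subset must be nonempty. I would state this as part of the operator's specification, because the empty sum is the zero function and is not derivable from the grammar. Without this stipulation the lemma fails. I would also note that the operators only ever rearrange whole terms as leaves of the top-level additive structure. They never cut into the interior of a \texttt{Composition}, so the restriction that outer functions act on non-negative \texttt{InnerFunc} arguments is never disturbed.

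Finally, I would note a corollary. By Theorem~\ref{thm:grammar}, the offspring automatically satisfies (P1) and (P2). As a direct check, each term satisfies $B_k(0)=0$ and is convex and non-negative, so Proposition~\ref{prop:closure} carries these properties over to the sum.
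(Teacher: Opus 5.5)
Your proposal is correct and follows essentially the same route as the paper. In both, every child term is an unmodified copy of a parent term, each parent term is a \texttt{ConvexExpr} by the term-level representation, and the sum production $\texttt{ConvexExpr} \to \texttt{ConvexExpr} + \texttt{ConvexExpr}$ then gives grammar membership; your explicit requirement that the $N_{\max}$ truncation keep a nonempty set of terms is a worthwhile refinement that the paper's proof leaves implicit.
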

\begin{proof}
See Appendix~\ref{proof:operators}.
\end{proof}

\begin{lemma}[Mutation preserves grammar membership]\label{lem:mutation}
If $\varphi^* \in \mathcal{G}_{\mathrm{cvx}}^{\mathrm{comp}}$, then the mutated expression $\varphi^*_{\mathrm{mut}} \in \mathcal{G}_{\mathrm{cvx}}^{\mathrm{comp}}$.
\end{lemma}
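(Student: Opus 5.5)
The approach is a case analysis over the three mutation operations described in (IV), replacement, addition and removal of a term. Each case reduces to the closure rules of the grammar in \eqref{eq:grammar-expr}. Throughout, I represent $\varphi^*\in\mathcal{G}_{\mathrm{cvx}}^{\mathrm{comp}}$ as an ordered list of terms $\{B_k\}_{k=1}^K$ with $\varphi^*=\sum_k B_k$. Each $B_k$ is a \texttt{ConvexExpr}, since the initial sampler walks the grammar and, by Lemma~\ref{lem:crossover} and induction on generations, all later individuals keep this form. The first step is the structural observation that a finite sum of \texttt{ConvexExpr}s is again a \texttt{ConvexExpr}. This follows by repeated application of the production $\texttt{ConvexExpr}\to\texttt{ConvexExpr}+\texttt{ConvexExpr}$, by induction on the number of summands; associativity and ordering are irrelevant because any parenthesization is derivable.

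\emph{Replacement.} The new term $B'$ is produced by the same recursive sampler. Every option of that sampler corresponds to a production of the grammar: a primitive is \eqref{eq:grammar-prim}, an outer composition is \eqref{eq:grammar-comp}--\eqref{eq:grammar-outer}, a positive scaling is the $c\cdot\texttt{ConvexExpr}$ rule, and an intra-block sum is the sum rule. Depth truncation at $D_{\max}$ always returns a primitive, which is itself admissible. So $B'$ is a \texttt{ConvexExpr}, and I would verify this by induction on sampler depth. The one point needing care is that the grammar puts an outer function only around an \texttt{InnerFunc}. I therefore need every sampler subexpression placed under an outer function to be derivable from \texttt{InnerFunc}. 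This holds because \eqref{eq:grammar-inner} admits primitives, sums, positive scalings and compositions, which are exactly the sampler's options. The list $\{B_1,\dots,B',\dots,B_K\}$ is then a finite sum of \texttt{ConvexExpr}s.

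\emph{Addition.} Appending a freshly sampled term gives $\varphi^*+B'$, which is a \texttt{ConvexExpr} by the sum rule, using the same argument for $B'$.

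\emph{Removal.} This is permitted only when $K\ge2$, so the remaining list is nonempty and its sum is again a finite sum of at least one \texttt{ConvexExpr}. Without the $K\ge 2$ guard, removal would leave the empty sum, which is not generated by the grammar; this is why the guard matters.

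Finally, I handle parameter values. Mutated or resampled constants are drawn and subsequently tuned within the bounds $c,a,\delta>0$, $p,q,r\ge1$, $\sigma_Y\ge0$, which are exactly the side conditions of the grammar. Membership is therefore not lost through the numerical constants. Theorem~\ref{thm:grammar} then immediately yields (P1)--(P2) for $\varphi^*_{\mathrm{mut}}$.

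The main obstacle is the replacement case. There one must check carefully that the sampler's language, which nests scalings and sums beneath outer functions, embeds in the grammar's \texttt{InnerFunc}/\texttt{ConvexExpr} split. I would handle this by a simultaneous structural induction showing that every sampler output is derivable both as a \texttt{ConvexExpr} and as an \texttt{InnerFunc}.
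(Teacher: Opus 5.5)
Your proposal is correct and follows the paper's proof. Both use the same three-case analysis (replace, append, remove), each closed by the sum production $\texttt{ConvexExpr}\to\texttt{ConvexExpr}+\texttt{ConvexExpr}$; your induction embedding the sampler's outputs into \texttt{InnerFunc}/\texttt{ConvexExpr} and your remark on the $K\ge 2$ guard only make explicit what the paper asserts ``by construction.'' Two peripheral points are loose. The tuned-constants paragraph is not needed here (tuning is handled in Theorem~\ref{thm:invariant}) and overstates the match, since $\Theta$ enforces only $a\ge 0$, not the grammar's strict $c>0$. And justifying the block-list form by induction on generations would itself invoke this lemma, so take it instead as a representational invariant holding by design.
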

\begin{proof}
See Appendix~\ref{proof:operators}.
\end{proof}

\paragraph{(V) Full Parameter Optimization.}
At each generation, we perform the optimization of the tunable parameters embedded in the expression tree. This approach, inspired by hybrid symbolic-numeric methods in contemporary symbolic regression frameworks \cite{lacava2021contemporary, kommenda2020parameter}, enables the genetic algorithm to focus on structural discovery while a local optimizer fine-tunes the parametric representation within each structural class.

Consider a candidate potential $\hat{\varphi}^*(A; \boldsymbol{\theta})$ where $\boldsymbol{\theta}$ collects all tunable parameters in the expression tree: amplitude parameters $a$ (non-negative scalings appearing in additive terms and inside compositional primitives), exponents $p$ in power-law terms $|A|^p$, and exponents $q$ in composition outer functions $(\cdot)^q$.

The evolution law induced by $\hat{\varphi}^*$ is
\begin{equation}\label{eq:tuning-evolution}
    \dot{z} = \frac{\partial \hat{\varphi}^*}{\partial A}(A; \boldsymbol{\theta}) =: f(A; \boldsymbol{\theta}).
\end{equation}
The optimal parameters are obtained by minimizing a rollout loss $\mathcal{L}_{\mathrm{rollout}}$ over the training data:
\begin{equation}\label{eq:nlls}
    \hat{\boldsymbol{\theta}} = \arg\min_{\boldsymbol{\theta} \in \Theta} \left\{ \mathcal{L}_{\mathrm{rollout}}(\boldsymbol{\theta}) + \lambda_{\mathrm{ridge}} \| \boldsymbol{\theta} \|_2^2 \right\},
\end{equation}
where the feasible set $\Theta$ encodes the convexity-preserving constraints:
\begin{equation}\label{eq:convexity-constraints}
    \Theta = \left\{ \boldsymbol{\theta} : a \geq 0, \quad p \geq 1, \quad q \geq 1, \quad r \geq 1, \quad \sigma_Y \geq 0 \right\}.
\end{equation}
The constraint $a \geq 0$ ensures that positive combinations of convex terms remain convex, while $p \geq 1$ and $q \geq 1$ preserve convexity of power-law and composition terms. The ridge term $\lambda_{\mathrm{ridge}} \|\boldsymbol{\theta}\|_2^2$~\cite{engl1996regularization}, where $\lambda_{\mathrm{ridge}}$ is a small positive scalar, is applied to all symbolic constants (amplitude parameters and exponents) and improves numerical conditioning; the shear moduli, when jointly optimized, are excluded from the penalty to avoid biasing them toward zero. The specific form of $\mathcal{L}_{\mathrm{rollout}}$ depends on the loading protocol and is detailed in Sections~\ref{sec:synthetic} and~\ref{sec:experimental}.

Problem~\eqref{eq:nlls} is solved by a \emph{selective} two-mode local search adapted to the regularity of the candidate. When the rollout loss is smooth (all primitives except the Macaulay bracket), we use L-BFGS-B~\cite{byrd1995limited,zhu1997algorithm} with finite-difference gradients. When the expression contains the Macaulay bracket $\macaulay{|A|-\sigma_Y}^r$, whose loss is non-differentiable at $|A|=\sigma_Y$ and flat when no training sample activates the bracket, finite-difference gradients vanish and the quasi-Newton iteration stalls; we then switch to a gradient-free Nelder--Mead simplex~\cite{nelder1965simplex,gao2012implementing}, which relies only on function-value comparisons. To reduce the risk of converging to a poor local minimum of the non-convex $(\hat{k},\hat{\sigma}_Y)$ landscape, the search is wrapped in a multi-restart driver and the best solution retained. Box constraints enforce the feasible set $\Theta$, with tighter data-informed bounds on the Macaulay coefficient and threshold; iteration budgets and restart counts are reported in the hyperparameter tables of Sections~\ref{sec:synthetic} and~\ref{sec:experimental}.

\paragraph{(VI) Fitness Evaluation via Trajectory Rollout.}
For each candidate dual potential $\hat{\varphi}^*(A)$ and each trajectory $(t_j, \gamma_j, \tau_j^{\mathrm{meas}})$, the fitness is evaluated by discretising the initial value problem~\eqref{eq:ivp} via the explicit Euler scheme:
\begin{equation}\label{eq:forward-euler-main}
    z_{n+1} = z_n + \Delta t \, \frac{\partial \hat{\varphi}^*}{\partial A}\bigg|_{A = A_n},
\end{equation}
where $A_n = G_1(\gamma_n - z_n)$ and $z(0) = 0$. The predicted stress at each step follows from the state law~\eqref{eq:1d-shear-state}: $\tau^{\mathrm{pred}}(t_n) = A_n$. A scalar error $\mathrm{err}_j$ is then computed by comparing $\tau_j^{\mathrm{pred}}$ with $\tau_j^{\mathrm{meas}}$; its specific form depends on the loading protocol (Sections~\ref{sec:synthetic} and~\ref{sec:experimental}).

Certain candidates can render this explicit rollout numerically stiff: power-law primitives $|A|^p$ with $1 < p < 2$ have a Jacobian $\partial\dot{z}/\partial A = cp(p-1)|A|^{p-2}$ that diverges as $A \to 0$, and exponentially growing primitives produce fast relaxation at large $|A|$. As a lightweight numerical safeguard, the integrator optionally subdivides each macro-step $[t_n, t_{n+1}]$ into $n_{\mathrm{sub}}$ inner Euler steps, with $n_{\mathrm{sub}}$ chosen per candidate from an estimate of the time constant $\tau_{\mathrm{eff}} = |A_{\max}|/\bigl(G_1\,|\dot{z}(A_{\max})|\bigr)$ at the maximal driving force $A_{\max} = G_1\max_t|\gamma(t)|$, so that fast-relaxing candidates receive proportionally finer resolution. This safeguard is enabled for the experimental rollouts of Section~\ref{sec:experimental}, whose long multi-cycle trajectories and rapidly growing discovered potentials benefit from the extra resolution; for the synthetic benchmarks of Section~\ref{sec:synthetic} the base explicit-Euler step was found sufficient, with predictions verified against a higher-order Runge--Kutta integrator.

The aggregate fitness includes a complexity penalty:
\begin{equation}\label{eq:fitness}
    J(\hat{\varphi}^*) = \frac{1}{N_s} \sum_{j=1}^{N_s} \mathrm{err}_j + \lambda_{\mathrm{cx}} \, \mathrm{nodes}(\hat{\varphi}^*),
\end{equation}
where $\lambda_{\mathrm{cx}}$ balances parsimony against accuracy; specific values are given in Sections~\ref{sec:synthetic} and~\ref{sec:experimental}. To allow unrestricted structural exploration in early generations, a linear warmup schedule is applied: during the first half of the evolution ($g \leq G/2$), the parsimony coefficient is set to zero; it then increases linearly to $\lambda_{\mathrm{cx}}$ over the remaining generations to avoid solutions with excessive complexity in the final population.

\begin{algorithm}[t]
\caption{Symbolic Regression for Dissipation Potential Discovery}
\label{alg:symbolic-regression}
\begin{algorithmic}[1]
\Require Measured trajectories $\{(t_j, \gamma_j, \tau_j^{\mathrm{meas}})\}_{j=1}^{N_s}$; known free energy $\psi$; parameters $\Gamma_{\max}, \lambda_{\mathrm{ridge}}, \lambda_{\mathrm{cx}}, f_t, f_e$
\Ensure Discovered dissipation potential $\hat{\varphi}^*(A)$
\State Initialize population $\mathcal{P}_0$ with random expressions from $\mathcal{G}_{\mathrm{cvx}}^{\mathrm{comp}}$
\State $\hat{\varphi}^*_{\mathrm{best}} \gets \mathrm{None}$
\For{generation $\gamma = 1, \ldots, \Gamma_{\max}$}
    \For{each individual $\hat{\varphi}^* \in \mathcal{P}_{\gamma-1}$}
        \State Extract tunable parameters $\boldsymbol{\theta}$ from expression tree
        \State Optimize $\boldsymbol{\theta}$ via L-BFGS-B / Nelder--Mead search minimizing $\mathcal{L}_{\mathrm{rollout}}$ \eqref{eq:nlls} \Comment{Bounds preserve convexity}
        \State Evaluate fitness $J(\hat{\varphi}^*)$~\eqref{eq:fitness} via forward Euler rollout~\eqref{eq:forward-euler-main}
    \EndFor
    \State Sort population by fitness; update $\hat{\varphi}^*_{\mathrm{best}}$ if improved
    \State Copy elite individuals (top $f_e$ fraction) to new population
    \For{remaining slots in new population}
        \State Select parent via tournament selection (size $f_t \cdot N_{\mathrm{pop}}$)
        \State Apply term-level crossover with probability $p_{\mathrm{cx}}$
        \State Apply term-level mutation with probability $p_{\mathrm{mut}}$
    \EndFor
    \State Form new population $\mathcal{P}_\gamma$
\EndFor
\State \Return Best overall individual $\hat{\varphi}^*_{\mathrm{best}}$
\end{algorithmic}
\end{algorithm}

\begin{theorem}[Invariant: Population Thermodynamic Consistency]\label{thm:invariant}
Let $\mathcal{P}_\gamma$ denote the population at generation $\gamma$. If $\mathcal{P}_0 \subset \mathcal{G}_{\mathrm{cvx}}^{\mathrm{comp}}$, then $\mathcal{P}_\gamma \subset \mathcal{G}_{\mathrm{cvx}}^{\mathrm{comp}}$ for all $\gamma \geq 0$. Consequently, every candidate in every generation satisfies properties (P1) and (P2).
\end{theorem}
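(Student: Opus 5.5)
The plan is induction on the generation index $\gamma$, with Theorem~\ref{thm:grammar} supplying the final consequence. The base case $\gamma=0$ is the hypothesis $\mathcal{P}_0 \subset \mathcal{G}_{\mathrm{cvx}}^{\mathrm{comp}}$; I would also note that the tri-modal sampler of paragraph (I) makes this hypothesis automatic, since each of its choices is a production of \eqref{eq:grammar-expr}--\eqref{eq:grammar-inner}.

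For the inductive step, assume $\mathcal{P}_{\gamma-1} \subset \mathcal{G}_{\mathrm{cvx}}^{\mathrm{comp}}$ and follow Algorithm~\ref{alg:symbolic-regression}, which builds $\mathcal{P}_\gamma$ from four kinds of operation.
\begin{enumerate}
\item[(a)] \emph{Parameter optimization.} This changes only numerical constants, never the tree structure. The box constraints defining $\Theta$ in \eqref{eq:convexity-constraints} keep each constant in the admissible range of its production: $a\ge 0$ and $c\ge0$ for scalings, $p,q,r\ge1$ for exponents, $\sigma_Y\ge0$ for thresholds, and $\delta>0$.
\item[(b)] \emph{Elitism.} Elites are copied unchanged, so they stay in the grammar.
\item[(c)] \emph{Tournament selection.} The winner is a member of $\mathcal{P}_{\gamma-1}$, so it is in the grammar.
\item[(d)] \emph{Crossover and mutation.} These are applied independently to a selected parent. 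Lemma~\ref{lem:crossover} gives membership after crossover. Feeding that offspring into Lemma~\ref{lem:mutation} gives membership after mutation. The same lemmas cover the case where only one operation, or neither, is applied.
\end{enumerate}
Every slot of $\mathcal{P}_\gamma$ is therefore filled by a grammar member, which closes the induction.

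Once $\mathcal{P}_\gamma \subset \mathcal{G}_{\mathrm{cvx}}^{\mathrm{comp}}$ holds for all $\gamma$, Theorem~\ref{thm:grammar} gives (P1) and (P2) for every candidate in every generation.

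I expect step (a) to be the main obstacle, because the grammar is really a family of productions indexed by constants. One subtlety is that $\Theta$ allows $a=0$ or $c=0$, whereas \eqref{eq:grammar-expr} requires $c>0$. I would handle this with one of two fixes:
\begin{itemize}
\item[$\bullet$] a degenerate term with $c=0$ is identically zero, so it contributes nothing and can be dropped, leaving a grammar member; or
\item[$\bullet$] I observe that the proof of Theorem~\ref{thm:grammar} goes through unchanged for $c\ge0$, since Proposition~\ref{prop:closure} extends trivially to $c=0$.
\end{itemize}
Similarly, an amplitude $a=0$ inside a primitive gives $0$, or a constant such as $\cosh 0-1=0$, which is still convex, non-negative and normalized. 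So the properties (P1) and (P2) survive even if literal grammar membership needs this mild closure convention. A second point to check is the complexity guard $N_{\max}$: the random retention of terms after crossover keeps a nonempty sub-sum of \texttt{ConvexExpr} terms, which is again produced by \eqref{eq:grammar-expr}, as covered by Lemma~\ref{lem:crossover}.
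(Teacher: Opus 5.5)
Your proposal is correct and follows essentially the same route as the paper: induction on $\gamma$, where elites, crossover offspring and mutated individuals (via Lemmas~\ref{lem:crossover} and~\ref{lem:mutation}) and the bounded parameter tuning each preserve grammar membership, and Theorem~\ref{thm:grammar} then yields (P1) and (P2). Your treatment of the boundary values $c=0$ and $a=0$ is extra care, since the paper only notes that tuning keeps $a\ge 0$; one correction is that for the primitives $(e^{a|A|}-1-a|A|)/a^2$ and $(e^{aA^2}-1)/a$ the choice $a=0$ is a $0/0$ form whose natural limits are $A^2/2$ and $A^2$ rather than $0$, though these limits are still convex, non-negative and normalized, so your conclusion stands.
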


\begin{proof}
See Appendix~\ref{proof:invariant} for the complete argument.
\end{proof}

\section{Synthetic Benchmarks}\label{sec:synthetic}

\subsection{Benchmark models}\label{sec:benchmarks}

We consider three ground-truth benchmark families, summarised in Tab.~\ref{tab:benchmarks}:
\begin{itemize}
\item[E1)] serves as a baseline linear viscous reference; its quadratic structure is the simplest admissible potential in the grammar:
\begin{equation}\label{eq:gt-quadratic}
    \varphi_{\text{E1}}^*(A) = \frac{1}{2\eta}A^2, \qquad \eta = 1.0.
\end{equation}
\item[E2)] probes the power-law family: $q = 2$ is an internal consistency check against E1; $q = 1.5$ is a sub-quadratic power law representative of intermediate non-Newtonian regimes; $q = 3$ tests strongly nonlinear dissipation, representative of the Payne-effect regime~\cite{payne1962dynamic,payne1963dynamic}:
\begin{equation}\label{eq:gt-abspower}
    \varphi_{\text{E2}}^*(A) = \frac{c}{q}|A|^q, \qquad c = 2.0,\; q \in \{1.5, 2.0, 3.0\}.
\end{equation}
\item[E3)] tests the extended grammar's ability to discover potentials with a genuine elastic domain:
\begin{equation}
    \varphi_{\text{E3}}^*(A) = \frac{1}{2\eta}\macaulay{|A| - \sigma_Y}^2, \qquad \eta = 1.0,\; \sigma_Y \in \{1.0, 2.5, 5.0\},
\end{equation}
that is a Bingham viscoplastic model~\cite{bingham1922fluidity}, representing a Perzyna regularization~\cite{perzyna1966fundamental} with a linear overstress function. The $\sigma_Y$ sweep probes a representative range of the elastic--plastic transition in which the driving force routinely crosses the yield threshold during loading.
\end{itemize}

Potentials E1 and E2 are usually employed to model linear and nonlinear rate-dependent responses of polymers and soft materials in conjunction with an equilibrium spring $G_\infty$: E1 yields the classical \emph{Zener (standard linear solid)} model, while E2 yields a nonlinear generalization thereof with an amplitude-dependent dashpot $\eta(A)$. Potential E3 is a classical model for viscoplasticity, where the Macaulay brackets capture the yield threshold.

The one-dimensional free energy for benchmarks E1 and E2 is chosen as
\[
\psi_{\text{E1, E2}}(\gamma, z) = \tfrac{1}{2}G_\infty \gamma^2 + \tfrac{1}{2}G_1(\gamma - z)^2,
\]
and 
\[
\psi_{\text{E3}}(\gamma, z) = \tfrac{1}{2}G_1(\gamma - z)^2.
\]
with $G_1 = 30.0$ kept fixed in all benchmarks and $G_\infty = 1.0$ for E1 and E2 ($G_\infty = 0$ for E3). The presence of an equilibrium spring in E1, E2 allows us to test the framework's ability to disentangle conservative and dissipative contributions to the response, which is a critical aspect of model discovery in complex materials. 

For the synthetic benchmarks of this section, the elastic parameters $G_\infty$ and $G_1$ are held fixed at the ground-truth values used to generate the data, so that the symbolic regression operates exclusively on the dissipation potential. For the experimental validation in Section~\ref{sec:experimental}, where the elastic moduli are not known \emph{a priori}, $G_\infty$ and $G_1$ are initialized at order-of-magnitude estimates from a preliminary inspection of the material's response and then jointly optimized with the dissipation potential within the GP loop, with bounded multiplicative factors around the initial values (see Section~\ref{sec:experimental} for details).

\begin{table}[t]
\centering
\caption{Benchmark families. All share $G_1 = 30.0$ and $\eta = c/2 = 1.0$; $G_\infty = 1.0$ for E1 and E2, $G_\infty = 0$ for E3.}
\label{tab:benchmarks}
\resizebox{\textwidth}{!}{
\begin{tabular}{@{}llcccl@{}}
\toprule
Tag & & Free Energy &Dissipation Potential & Varied Parameters & Physical target \\
&  &$\psi$ &$\varphi^\ast$ & &\\
\midrule
E1 & \resizebox{0.15\textwidth}{!}{
\begin{tikzpicture}[scale=0.9, every node/.style={transform shape}]
  \def\xJ{0.3}
  \def\xR{2.9}
  \def\xL{0.6}
  \def\xEnd{2.6}
  \def\xMid{1.6}
  \def\yM{1}
  \def\yT{1.7}
  \def\yB{0.3}
  \def\labGap{0.22}

  \fill (0,\yM) circle (2pt);
  \draw (0,\yM) -- (\xJ,\yM);
  \fill (\xJ,\yM) circle (2pt);

  \draw (\xJ,\yM) -- (\xJ,\yT) -- (\xL,\yT);
  \draw (\xJ,\yM) -- (\xJ,\yB) -- (\xL,\yB);

  \coordinate (sInfL) at (1.15,\yT);
  \coordinate (sInfR) at (2.05,\yT);
  \draw (\xL,\yT) -- (sInfL);
  \draw (sInfL) -- ($(sInfL)+(0.08,0)$) -- ($(sInfL)+(0.16,0.2)$)
        -- ($(sInfL)+(0.24,-0.2)$) -- ($(sInfL)+(0.32,0.2)$)
        -- ($(sInfL)+(0.40,-0.2)$) -- ($(sInfL)+(0.48,0.2)$)
        -- ($(sInfL)+(0.56,-0.2)$) -- ($(sInfL)+(0.64,0.2)$)
        -- ($(sInfL)+(0.72,-0.2)$) -- ($(sInfL)+(0.80,0.2)$)
        -- ($(sInfL)+(0.88,-0.2)$) -- ($(sInfL)+(0.90,0)$) -- (sInfR);
  \draw (sInfR) -- (\xEnd,\yT);
  \node at ($(sInfL)!0.5!(sInfR)$) [above=\labGap] {$G_{\infty}$};

  \coordinate (s1L) at (0.95,\yB);
  \coordinate (s1R) at (1.55,\yB);
  \draw (\xL,\yB) -- (s1L);
  \draw (s1L) -- ($(s1L)+(0.08,0)$) -- ($(s1L)+(0.16,0.2)$)
        -- ($(s1L)+(0.24,-0.2)$) -- ($(s1L)+(0.32,0.2)$)
        -- ($(s1L)+(0.40,-0.2)$) -- ($(s1L)+(0.48,0.2)$)
        -- ($(s1L)+(0.56,-0.2)$) -- (s1R);
  \draw (s1R) -- (\xMid,\yB);
  \node at ($(s1L)!0.5!(s1R)$) [above=\labGap] {$G_1$};

  \coordinate (etaL) at (1.72,\yB);
  \coordinate (etaCupL) at (1.78,0.0);
  \coordinate (etaCupR) at (2.12,0.6);
  \coordinate (etaRod) at (2.28,\yB);
  \draw (\xMid,\yB) -- (etaL);
  \fill[gray!40] (etaCupL) -- ($(etaCupL)+(0.34,0)$) -- ($(etaCupL)+(0.34,0.6)$)
        -- ($(etaCupL)+(0,0.6)$) -- cycle;
  \draw (etaCupL) -- ($(etaCupL)+(0,0.6)$);
  \draw (etaCupL) -- ($(etaCupL)+(0.34,0)$);
  \draw ($(etaCupL)+(0,0.6)$) -- ($(etaCupL)+(0.34,0.6)$);
  \draw ($(etaCupL)+(0.27,0.1)$) -- ($(etaCupL)+(0.27,0.5)$);
  \draw ($(etaCupL)+(0.27,0.3)$) -- (etaRod);
  \draw (etaRod) -- (\xEnd,\yB);
  \node at ($(etaCupL)!0.5!(etaCupR)$) [above=\labGap] {$\eta$};

  \draw (\xEnd,\yT) -- (\xR,\yT) -- (\xR,\yM);
  \draw (\xEnd,\yB) -- (\xR,\yB) -- (\xR,\yM);
  \fill (\xR,\yM) circle (2pt);
  \draw (\xR,\yM) -- (3.4,\yM);
  \fill (3.4,\yM) circle (2pt);
\end{tikzpicture}}& $\frac{1}{2}G_\infty \gamma^2 + \frac{1}{2}G_1(\gamma - z)^2$ &$\dfrac{1}{2\eta}A^2$ & -- & Linear viscoelasticity \\[8pt]
E2 & \resizebox{0.15\textwidth}{!}{
\begin{tikzpicture}[scale=0.9, every node/.style={transform shape}]
  \def\xJ{0.3}
  \def\xR{2.9}
  \def\xL{0.6}
  \def\xEnd{2.6}
  \def\xMid{1.6}
  \def\yM{1}
  \def\yT{1.7}
  \def\yB{0.3}
  \def\labGap{0.22}

  \fill (0,\yM) circle (2pt);
  \draw (0,\yM) -- (\xJ,\yM);
  \fill (\xJ,\yM) circle (2pt);

  \draw (\xJ,\yM) -- (\xJ,\yT) -- (\xL,\yT);
  \draw (\xJ,\yM) -- (\xJ,\yB) -- (\xL,\yB);

  \coordinate (sInfL) at (1.15,\yT);
  \coordinate (sInfR) at (2.05,\yT);
  \draw (\xL,\yT) -- (sInfL);
  \draw (sInfL) -- ($(sInfL)+(0.08,0)$) -- ($(sInfL)+(0.16,0.2)$)
        -- ($(sInfL)+(0.24,-0.2)$) -- ($(sInfL)+(0.32,0.2)$)
        -- ($(sInfL)+(0.40,-0.2)$) -- ($(sInfL)+(0.48,0.2)$)
        -- ($(sInfL)+(0.56,-0.2)$) -- ($(sInfL)+(0.64,0.2)$)
        -- ($(sInfL)+(0.72,-0.2)$) -- ($(sInfL)+(0.80,0.2)$)
        -- ($(sInfL)+(0.88,-0.2)$) -- ($(sInfL)+(0.90,0)$) -- (sInfR);
  \draw (sInfR) -- (\xEnd,\yT);
  \node at ($(sInfL)!0.5!(sInfR)$) [above=\labGap] {$G_{\infty}$};

  \coordinate (s1L) at (0.95,\yB);
  \coordinate (s1R) at (1.55,\yB);
  \draw (\xL,\yB) -- (s1L);
  \draw (s1L) -- ($(s1L)+(0.08,0)$) -- ($(s1L)+(0.16,0.2)$)
        -- ($(s1L)+(0.24,-0.2)$) -- ($(s1L)+(0.32,0.2)$)
        -- ($(s1L)+(0.40,-0.2)$) -- ($(s1L)+(0.48,0.2)$)
        -- ($(s1L)+(0.56,-0.2)$) -- (s1R);
  \draw (s1R) -- (\xMid,\yB);
  \node at ($(s1L)!0.5!(s1R)$) [above=\labGap] {$G_1$};

  \coordinate (etaL) at (1.72,\yB);
  \coordinate (etaCupL) at (1.78,0.0);
  \coordinate (etaCupR) at (2.12,0.6);
  \coordinate (etaRod) at (2.28,\yB);
  \draw (\xMid,\yB) -- (etaL);
  \fill[gray!40] (etaCupL) -- ($(etaCupL)+(0.34,0)$) -- ($(etaCupL)+(0.34,0.6)$)
        -- ($(etaCupL)+(0,0.6)$) -- cycle;
  \draw (etaCupL) -- ($(etaCupL)+(0,0.6)$);
  \draw (etaCupL) -- ($(etaCupL)+(0.34,0)$);
  \draw ($(etaCupL)+(0,0.6)$) -- ($(etaCupL)+(0.34,0.6)$);
  \draw ($(etaCupL)+(0.27,0.1)$) -- ($(etaCupL)+(0.27,0.5)$);
  \draw ($(etaCupL)+(0.27,0.3)$) -- (etaRod);
  \draw (etaRod) -- (\xEnd,\yB);
  \draw[-{Stealth[length=2mm]}, thick]
        ($(etaCupL)+(-0.10,-0.10)$) -- ($(etaCupL)+(0.44,0.70)$);
  \node at ($(etaCupL)!0.5!(etaCupR)$) [above=\labGap] {$\eta(A)$};

  \draw (\xEnd,\yT) -- (\xR,\yT) -- (\xR,\yM);
  \draw (\xEnd,\yB) -- (\xR,\yB) -- (\xR,\yM);
  \fill (\xR,\yM) circle (2pt);
  \draw (\xR,\yM) -- (3.4,\yM);
  \fill (3.4,\yM) circle (2pt);
\end{tikzpicture}}& $\frac{1}{2}G_\infty \gamma^2 + \frac{1}{2}G_1(\gamma - z)^2$ &$\dfrac{c}{q}|A|^q$ & $q \in \{1.5, 2.0, 3.0\}$ & Power-law viscoelasticity \\[8pt]
E3 & \resizebox{0.15\textwidth}{!}{
\begin{tikzpicture}[scale=0.9, every node/.style={transform shape}]
  \fill (0,1) circle (2pt);
  \draw (0,1) -- (0.5,1);
  \draw (0.5,1) -- (0.6,1) -- (0.7,1.2) -- (0.8,0.8) -- (0.9,1.2)
        -- (1.0,0.8) -- (1.1,1.2) -- (1.2,0.8) -- (1.3,1) -- (1.6,1);
  \node at (0.9,1.5) {$G_1$};

  \draw (1.6,1) -- (1.8,1);
  \draw (1.8,0.8) -- (2.2,0.8);
  \draw (1.8,1.2) -- (2.2,1.2);
  \draw (1.9,0.85) rectangle (2.1,1.15);
  \draw (2.2,1) -- (2.6,1);
  \node at (2.0,1.5) {$\sigma_Y$};

  \draw (2.6,1) -- (3.0,1);
  \fill (3.0,1) circle (2pt);
\end{tikzpicture}}& $\frac{1}{2}G_1(\gamma - z)^2$ &$\dfrac{1}{2\eta}\macaulay{|A| - \sigma_Y}^2$ & $\sigma_Y \in \{1.0, 2.5, 5.0\}$ & Bingham viscoplasticity \\
\bottomrule
\end{tabular}
}
\end{table}

\subsection{Numerical setup}\label{sec:setup}

Two strain-controlled loading protocols are used depending on the benchmark family. For E1 and E2 we adopt a sinusoidal DMA protocol (LP1) with each dataset consists of $S=15$ strain histories
\begin{equation}\label{eq:dma-strain}
    \varepsilon(t)=\mathcal{A}\sin(2\pi f t), \qquad \mathcal{A}\in\{0.01,0.03,0.05\}, \quad f\in\{0.1,1,10,20,50\}\ \mathrm{Hz},
\end{equation}
covering the full Cartesian product of amplitudes and frequencies. For E3, sinusoidal loading is inadequate because the time-varying strain rate $\dot\gamma = 2\pi f\mathcal{A}\cos(2\pi ft)$ couples to the Bingham flow rule and smears the yield transition into a rounded, rate-modulated loop, masking the kink that identifies $\sigma_Y$. A constant strain-rate protocol drives $|A|$ to grow linearly in time and produces a sharp, identifiable kink at $|A|=\sigma_Y$. We therefore use a triangular wave protocol (LP2)
\begin{equation}\label{eq:ramp-strain}
    \varepsilon(t) = \frac{2\mathcal{A}}{\pi} \arcsin\left( \sin\left( \frac{\pi \dot\gamma_0}{2\mathcal{A}} t \right) \right), \qquad \dot\gamma_0 \in \{0.1, 1, 10, 50\}\ \mathrm{s}^{-1}, \quad \mathcal{A} \in \left\{\frac{\sigma_Y}{2G_1},\; \frac{2\sigma_Y}{G_1},\; \frac{10\sigma_Y}{G_1}\right\},
\end{equation}
The three amplitudes are scaled so that the peak driving force $A_{\max} = G_1\mathcal{A}$ takes the values $\{\sigma_Y/2,\, 2\sigma_Y,\, 10\sigma_Y\}$, probing in order the \emph{sub-yield} regime (no flow), the \emph{near-yield} transition (kink active on a finite portion of the ramp), and the \emph{deep supra-yield} regime (plastic flow dominates). The four strain-rate levels $\dot\gamma_0$ further probe the rate dependence within each amplitude regime. 
For all benchmarks stress trajectories are obtained by integrating the IVP \eqref{eq:tuning-evolution} with initial condition $A(0)=0$. Synthetic data is generated via RK4 with $N_{\mathrm{ppc}}^{\mathrm{RK4}}=64$ points per cycle; fitness evaluation on the other hand uses explicit Euler with $N_{\mathrm{ppc}}^{\mathrm{Euler}}=128$ points per cycle, deliberately mismatching the data-generation integrator to avoid the inverse crime~\cite{colton2013inverse,wirgin2004inverse}. 
A stability cap $\Delta t \leq \tau/d$ (where $\tau=\eta/G_1$ is the relaxation time, $d=2$ for RK4 and $d=4$ for Euler) is enforced at low frequencies.

The Monte Carlo~\cite{metropolis1949monte} noise design is applied to two benchmarks only: E1 (linear viscoelasticity) and the E2 baseline at $q^\star = 3$. The dynamics are contaminated by two independent noise sources: an Ornstein--Uhlenbeck process noise~\cite{uhlenbeck1930theory} on the internal-variable evolution,
\begin{equation}
    \dot{z} = \frac{\partial\varphi^*}{\partial A}(A) + \xi, \qquad \mathrm{d}\xi = -\lambda \xi\,\mathrm{d}t + \sigma_{\mathrm{OU}}\sqrt{2\lambda}\,\mathrm{d}W,
    \label{eq:ou-process}
\end{equation}
with $\lambda = 1\,\mathrm{s}^{-1}$ and $\sigma_{\mathrm{OU}}\in\{0,0.002,0.005\}$, and additive Gaussian measurement noise,
\begin{equation}
    \tau^{\mathrm{meas}}(t) = \tau^{\mathrm{true}}(t) + \epsilon_m(t), \qquad \epsilon_m \sim \mathcal{N}(0,\sigma_m^2),
    \label{eq:measurement-noise}
\end{equation}
with $\sigma_m\in\{0,0.005,0.010\}$. The full $3\times 3=9$ noise grid is run with $N_{\mathrm{trials}}=5$ independent seeds, giving 45 runs per ground-truth potential. The E2 exponent sweep ($q^\star\in\{1.5,2.0,3.0\}$, Section~\ref{sec:results}) and benchmark E3 use noise-free data only, the former to isolate the structural exponent recovery and the latter the yield-identification challenge.

The fitness function for E1 and E2 is the mean relative error over all $S$ loading conditions, based on the predicted and target storage and loss moduli $G'(\omega)$ and $G''(\omega)$\footnote{Algebraically, for a steady-state harmonic strain $\varepsilon(t) = \gamma_0 \sin(\omega t)$, these moduli are defined via the first Fourier harmonic of the stress response: $\sigma(t) \approx \gamma_0 [G'(\omega) \sin(\omega t) + G''(\omega) \cos(\omega t)]$.}:
\begin{equation}\label{eq:gp-rel-err}
    e_{G'} = \frac{1}{S}\sum_{j=1}^{S} \frac{|{G'}_j^{\mathrm{target}} - {G'}_j^{\mathrm{pred}}|}{|{G'}_j^{\mathrm{target}}|},\qquad
    e_{G''} = \frac{1}{S}\sum_{j=1}^{S} \frac{|{G''}_j^{\mathrm{target}} - {G''}_j^{\mathrm{pred}}|}{|{G''}_j^{\mathrm{target}}|},
\end{equation}
combined as
\begin{equation}\label{eq:fitness-dma}
    J(\hat{\varphi}^*) = \frac{e_{G'} + e_{G''}}{2} + \lambda_{\mathrm{cx}}\mathrm{nodes}(\hat{\varphi}^*).
\end{equation}

For E3, the first Fourier harmonic discards the yield-transition signature, so the fitness is instead based on the time-domain stress normalized root-mean-square error (nRMSE) computed over the steady-state ramp cycle:
\begin{equation}\label{eq:fitness-ramp}
    e_{\tau,j} = \sqrt{ \frac{\sum_{i} \big(\tau_j^{\mathrm{meas}}(t_i) - \tau_j^{\mathrm{pred}}(t_i)\big)^2}{\sum_{i} \big(\tau_j^{\mathrm{meas}}(t_i)\big)^2} },
\end{equation}
with the aggregate fitness given by
\begin{equation}\label{eq:fitness-nrmse}
    J(\hat{\varphi}^*) = \frac{1}{S}\sum_{j=1}^{S} e_{\tau,j} + \lambda_{\mathrm{cx}}\mathrm{nodes}\,(\hat{\varphi}^*).
\end{equation}
This is a time-domain stress nRMSE closely related to the metric used in Section~\ref{sec:experimental} for the experimental data, the differences being the integration window (steady-state ramp cycle here, full raw history with initial transient there). The symbolic regression settings are collected in Tab.~\ref{tab:hyperparams} for both the synthetic benchmarks and the experimental validation.

\begin{table}[t]
\centering
\caption{GP hyperparameters for synthetic benchmarks and experimental validation. Both columns adopt the \emph{selective Nelder--Mead} tuner with $N_{\mathrm{restart}}=3$ for expressions that contain the Macaulay bracket primitive (Section~\ref{sec:optimization}); for all other primitives the local search reduces to single-shot L-BFGS-B with the iteration budget reported below.}
\label{tab:hyperparams}
\begin{tabular}{@{}lcc@{}}
\toprule
Parameter & Synthetic & Experimental \\
\midrule
Population size $N_{\mathrm{pop}}$ & 50 & 60 \\
Generations $G$ & 5 & 5 \\
Max depth $D_{\max}$ & 3 & 6 \\
Max nodes $N_{\max}$ & 12 & 20 \\
Tournament fraction $f_t$ & 0.05 & 0.05 \\
Elite fraction $f_e$ & 0.10 & 0.10 \\
Crossover probability $p_{\mathrm{cx}}$ & 0.8 & 0.8 \\
Mutation probability $p_{\mathrm{mut}}$ & 0.5 & 0.5 \\
Local search iterations $N_{\mathrm{iter}}^{\max}$ & 120 & 200 \\
Multi-restart count $N_{\mathrm{restart}}$ (Macaulay only) & 3 & 3 \\
Intra-block sum probability $p_{\Sigma}$ & 0.05 & 0.05 \\
Ridge regularization $\lambda_{\mathrm{ridge}}$ & $10^{-6}$ & $10^{-6}$ \\
Parsimony coefficient $\lambda_{\mathrm{cx}}$ & $10^{-2}$ & 0 \\
Per-evaluation timeout $T_{\mathrm{eval}}$ & $300\,\mathrm{s}$ & $540\,\mathrm{s}$ \\
\bottomrule
\end{tabular}
\end{table}

\subsection{Results}\label{sec:results}

\paragraph{E1: Newtonian viscosity.} 
Across the 45 runs of the quadratic ground truth \eqref{eq:gt-quadratic}, the framework recovers the exact quadratic structure
\[
    \hat{\varphi}^*(A)=\hat{k}A^2
\]
in 30/45 trials (66.7\%). A further 14/45 trials converge to $\hat{\varphi}^*(A)=\hat{k}|A|^{\hat{p}}$ with $\hat{p}\approx 2.00$, a near-equivalent form in the sense of Section~\ref{par:symbolic-equivalence}; counting these as successful recoveries yields 44/45 (97.8\%) across all noise configurations. The single remaining trial, in the highest-noise cell, converges to a hyperbolic candidate of the form $\hat k\bigl(\cosh(\alpha A)-1\bigr)$ with $\hat k=5.0$, $\alpha=0.446$, admissible by construction.

For the exact quadratic recoveries, the fitted prefactor is
\[
    \hat{k} = 0.4958 \pm 0.0021 \quad (\min 0.4923,\ \max 0.4993),
\]
close to the ground truth $k^\star=1/(2\eta)=0.5$, corresponding to a relative error of $\approx 1\%$. For the 14 near-quadratic recoveries, the fitted parameters are $\hat{k}=0.4956\pm 0.0013$ and $\hat{p}=1.997\pm 0.011$, confirming functional equivalence. Over all 45 runs, the DMA moduli relative errors are $e_{G'}=0.0451 \pm 0.0189$ and $e_{G''}=0.0375 \pm 0.0193$.

\begin{table}[H]
\centering
\caption{Newtonian viscosity benchmark (E1), $G_\infty=1.0$, $G_1=30.0$ (45 runs). ``Exact recovery'' denotes trials whose discovered expression simplifies to $\hat{k}A^2$ (integer exponent locked by the grammar/simplifier, $\hat{p}\equiv 2$); ``Near-exact recovery'' additionally includes trials of the form $\hat{k}|A|^{\hat{p}}$ with $|\hat{p}-2|<0.05$. Entries report mean $\pm$ std over $N_{\mathrm{trials}}=5$ runs per noise configuration. $\sigma_m$ is the measurement noise amplitude, $\sigma_{\mathrm{OU}}$ is the process noise amplitude in Eqs.~\eqref{eq:ou-process}-\eqref{eq:measurement-noise}. $e_{G'}$ and $e_{G''}$ are the mean relative errors on storage and loss moduli defined in \eqref{eq:gp-rel-err}.}
\label{tab:mc-quadratic}
\begin{tabular}{@{}cccccc@{}}
\toprule
$\sigma_m$ & $\sigma_{\mathrm{OU}}$ & Exact (\%) & Near-exact (\%) & $e_{G'}$ & $e_{G''}$ \\
\midrule
0.000 & 0.000 &  60.0 & 100.0 & 0.0325 $\pm$ 0.0002 & 0.0170 $\pm$ 0.0009 \\
0.000 & 0.002 &  80.0 & 100.0 & 0.0407 $\pm$ 0.0103 & 0.0334 $\pm$ 0.0047 \\
0.000 & 0.005 &  40.0 & 100.0 & 0.0615 $\pm$ 0.0229 & 0.0603 $\pm$ 0.0200 \\
0.005 & 0.000 &  80.0 & 100.0 & 0.0328 $\pm$ 0.0026 & 0.0217 $\pm$ 0.0035 \\
0.005 & 0.002 &  40.0 & 100.0 & 0.0364 $\pm$ 0.0032 & 0.0348 $\pm$ 0.0091 \\
0.005 & 0.005 & 100.0 & 100.0 & 0.0591 $\pm$ 0.0079 & 0.0391 $\pm$ 0.0055 \\
0.010 & 0.000 &  40.0 & 100.0 & 0.0352 $\pm$ 0.0033 & 0.0262 $\pm$ 0.0049 \\
0.010 & 0.002 & 100.0 & 100.0 & 0.0366 $\pm$ 0.0054 & 0.0378 $\pm$ 0.0067 \\
0.010 & 0.005 &  60.0 &  80.0 & 0.0708 $\pm$ 0.0335 & 0.0674 $\pm$ 0.0281 \\
\bottomrule
\end{tabular}
\end{table}

\begin{figure}[ht]
\centering
\includegraphics[width=\textwidth]{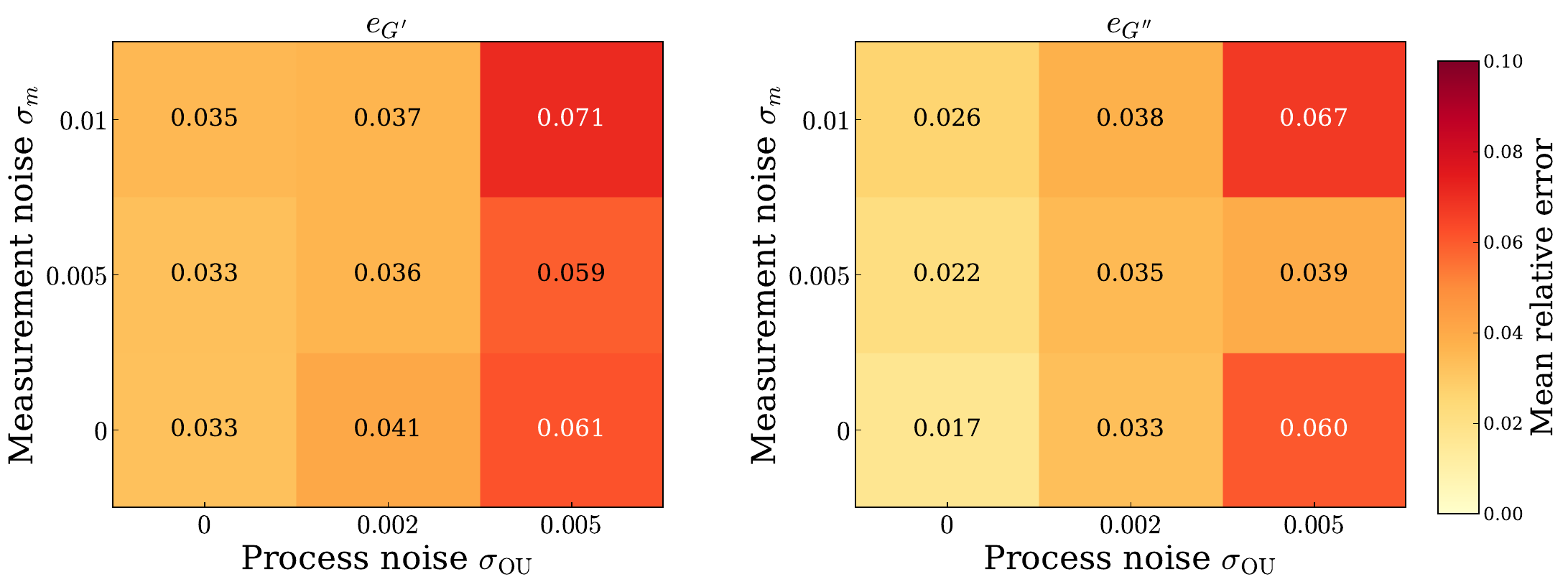}
\caption{Newtonian viscosity benchmark (E1): heatmaps of mean DMA relative error across the noise grid, averaged over $N_{\mathrm{trials}}=5$ independent trials per configuration. \emph{Left}: storage modulus $e_{G'}$ defined in~\eqref{eq:gp-rel-err}. \emph{Right}: loss modulus $e_{G''}$. Both errors increase with process noise $\sigma_{\mathrm{OU}}$ (horizontal axis), while measurement noise $\sigma_m$ (vertical axis) has a milder effect.}
\label{fig:heatmap-quadratic}
\end{figure}

To assess the time-domain predictive accuracy of the discovered potentials beyond scalar error metrics, Fig.~\ref{fig:hysteresis-quadratic} displays representative strain--stress hysteresis loops for the Newtonian viscosity benchmark. Each figure shows a $2\times 3$ grid spanning two noise levels (rows: noise-free and high noise) and three amplitude--frequency combinations (columns: $\mathcal{A}=0.01$, $f=1\,$Hz; $\mathcal{A}=0.03$, $f=10\,$Hz; $\mathcal{A}=0.05$, $f=50\,$Hz). As predicted by the theoretical framework, quadratic viscosity produces a linear response in the driving force, resulting in elliptical hysteresis loops in the strain-stress plane. Moreover, the storage and loss moduli remain independent of the loading amplitude, as shown in Fig.~\ref{fig:dma-moduli-quadratic}.

\begin{figure}[H]
\centering
\includegraphics[width=\textwidth]{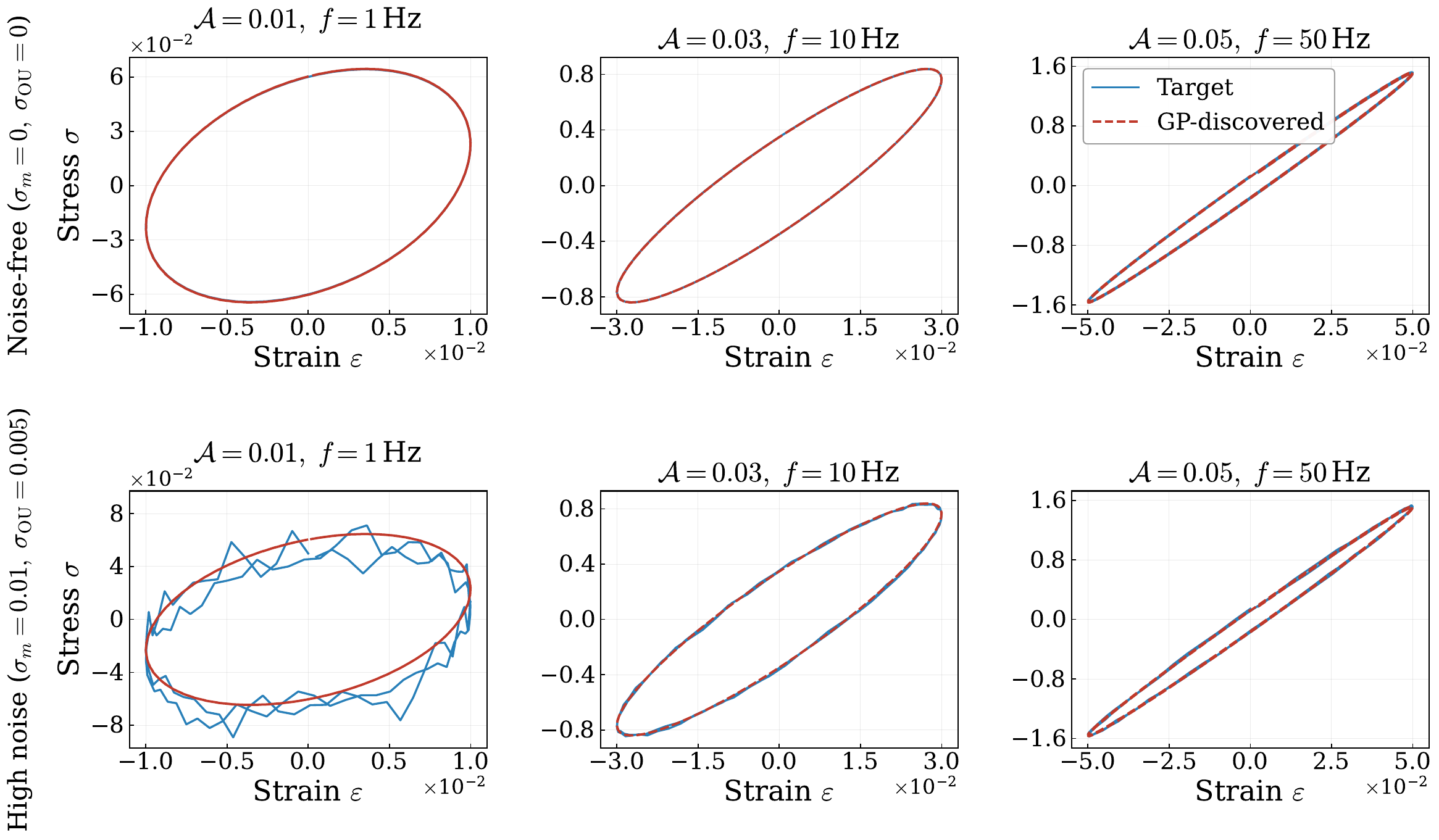}
\caption{Newtonian viscosity benchmark (E1): strain--stress hysteresis loops for representative amplitude--frequency combinations, shown for trial~1. Top row: noise-free, discovered $\hat{\varphi}^*(A)=0.4980\,A^2$. Bottom row: high noise ($\sigma_m=0.010$, $\sigma_{\mathrm{OU}}=0.005$), discovered $\hat{\varphi}^*(A)=0.4905\,A^2$. Ground truth: $0.5\,A^2$.}
\label{fig:hysteresis-quadratic}
\end{figure}

For the quadratic noise-free case, the predicted moduli overlap with the ground truth everywhere, consistent with the low relative errors reported in Tab.~\ref{tab:mc-quadratic}. 

\begin{figure}[H]
\centering
\includegraphics[width=0.9\textwidth]{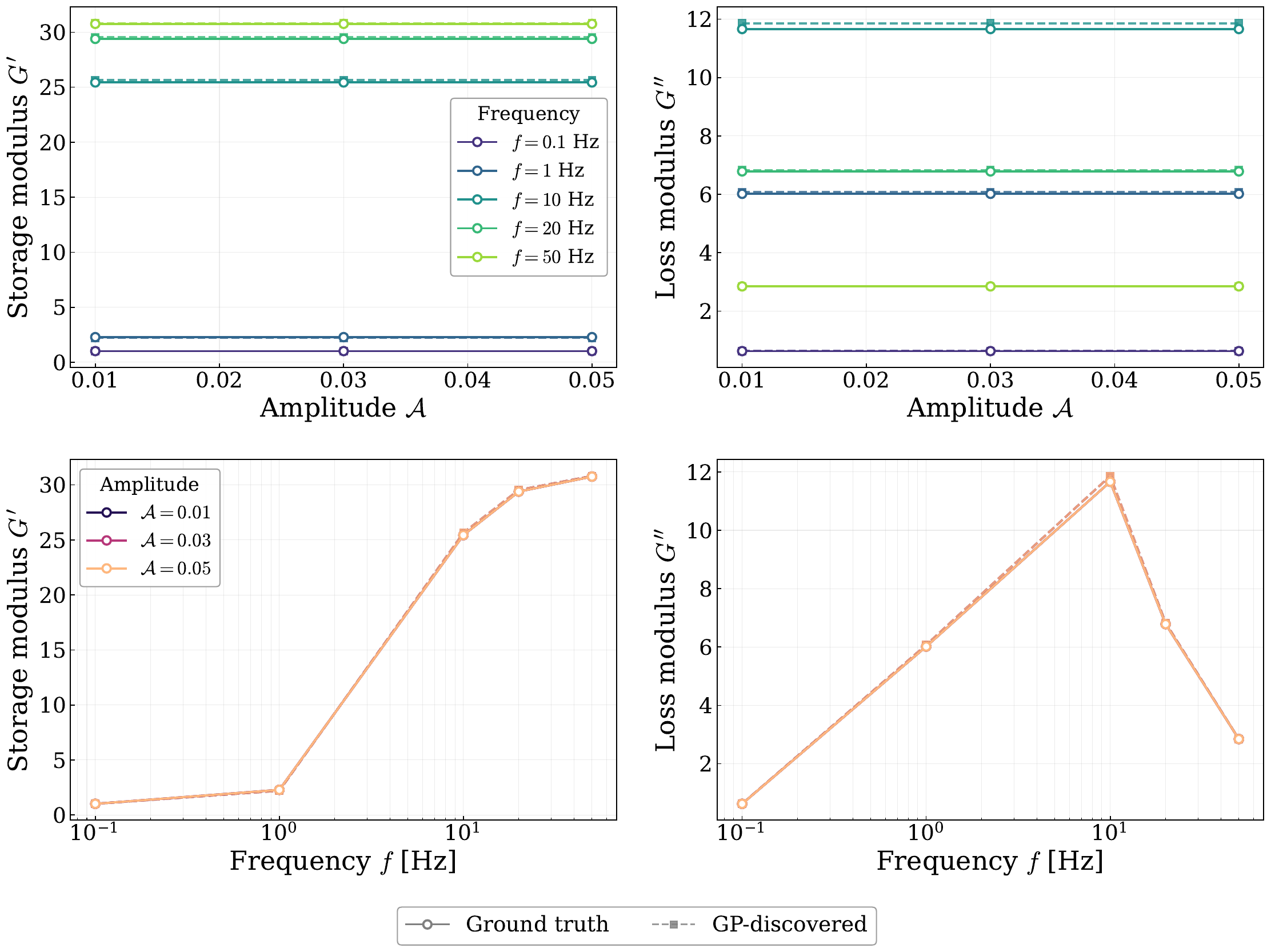}
\caption{DMA moduli comparison for the Newtonian viscosity benchmark (E1), noise-free case, trial~1. $G'$ and $G''$ versus amplitude (top row) and versus frequency on logarithmic scale (bottom row). Both moduli are amplitude-independent, as expected for Newtonian viscosity, and the predicted curves overlap the ground truth everywhere. Solid circles: ground truth; dashed squares: predicted.}
\label{fig:dma-moduli-quadratic}
\end{figure}

\paragraph{E2: Power-law family.}
We begin testing the E2 benchmark, by investigating the recovery performance of the framework in the case of $q=3$. The dominant recovered form is the exact single power-law primitive
\[
    \hat{\varphi}^*(A) = \hat{k}|A|^{\hat{q}},
\]
obtained in 42/45 trials (93.3\%). Across these successful runs we obtain
\[
    \hat{k} = 0.6564 \pm 0.0038, \qquad \hat{q} = 2.986 \pm 0.027,
\]
to be compared with the ground truth $k^\star=c/q=2/3\approx 0.6667$ and $q^\star=3$. The 3 remaining trials, all occurring in cells with non-zero process noise, converge to alternative structures that belong to the admissible grammar and are functionally near-equivalent to the ground truth in the probed regime. These include:
\begin{itemize}
    \item \emph{Mixed power-law + hyperbolic}: e.g., $\log\cosh(0.499\,A) + 0.514\,|A|^{3.44}$, or $\log\cosh(0.441\,A) + 0.549\,|A|^{3.31}$;
    \item \emph{Exponential--quadratic compositions}: e.g., $0.100\,[\exp(2.043\,A^2) - 1]$.
\end{itemize}
These alternative forms remain thermodynamically admissible by construction and achieve comparable DMA errors, illustrating that the grammar may discover different but physically equivalent representations of the same dissipative behavior, particularly under higher noise.

Over all 45 runs, the DMA moduli errors are $e_{G'}=0.0786 \pm 0.1024$ and $e_{G''}=0.0427 \pm 0.0319$. The higher variance in $e_{G'}$ is driven by the non-exact recoveries under the largest noise combinations (Tab.~\ref{tab:mc-abspower}).

\begin{table}[t]
\centering
\caption{Abs-power benchmark (E2), $G_\infty=1.0$, $G_1=30.0$ (45 runs). ``Exact recovery'' denotes recovery of $\hat{\varphi}^*(A)=\hat{k}|A|^{\hat{q}}$ with $|\hat{q}-3|<0.15$. The remaining trials converge to alternative admissible structures (mixed terms, hyperbolic compositions). Entries report mean $\pm$ std over $N_{\mathrm{trials}}=5$.}
\label{tab:mc-abspower}
\begin{tabular}{@{}ccccc@{}}
\toprule
$\sigma_m$ & $\sigma_{\mathrm{OU}}$ & Exact (\%) & $e_{G'}$ & $e_{G''}$ \\
\midrule
0.000 & 0.000 & 100.0 & 0.0180 $\pm$ 0.0003 & 0.0237 $\pm$ 0.0002 \\
0.000 & 0.002 & 100.0 & 0.0620 $\pm$ 0.0655 & 0.0309 $\pm$ 0.0053 \\
0.000 & 0.005 & 100.0 & 0.1689 $\pm$ 0.0917 & 0.0427 $\pm$ 0.0147 \\
0.005 & 0.000 & 100.0 & 0.0176 $\pm$ 0.0008 & 0.0253 $\pm$ 0.0009 \\
0.005 & 0.002 & 100.0 & 0.0417 $\pm$ 0.0169 & 0.0336 $\pm$ 0.0051 \\
0.005 & 0.005 &  80.0 & 0.1700 $\pm$ 0.2171 & 0.0592 $\pm$ 0.0325 \\
0.010 & 0.000 & 100.0 & 0.0192 $\pm$ 0.0020 & 0.0319 $\pm$ 0.0070 \\
0.010 & 0.002 &  80.0 & 0.0637 $\pm$ 0.0578 & 0.0695 $\pm$ 0.0778 \\
0.010 & 0.005 &  80.0 & 0.1461 $\pm$ 0.1099 & 0.0672 $\pm$ 0.0283 \\
\bottomrule
\end{tabular}
\end{table}

\begin{figure}[H]
\centering
\includegraphics[width=\textwidth]{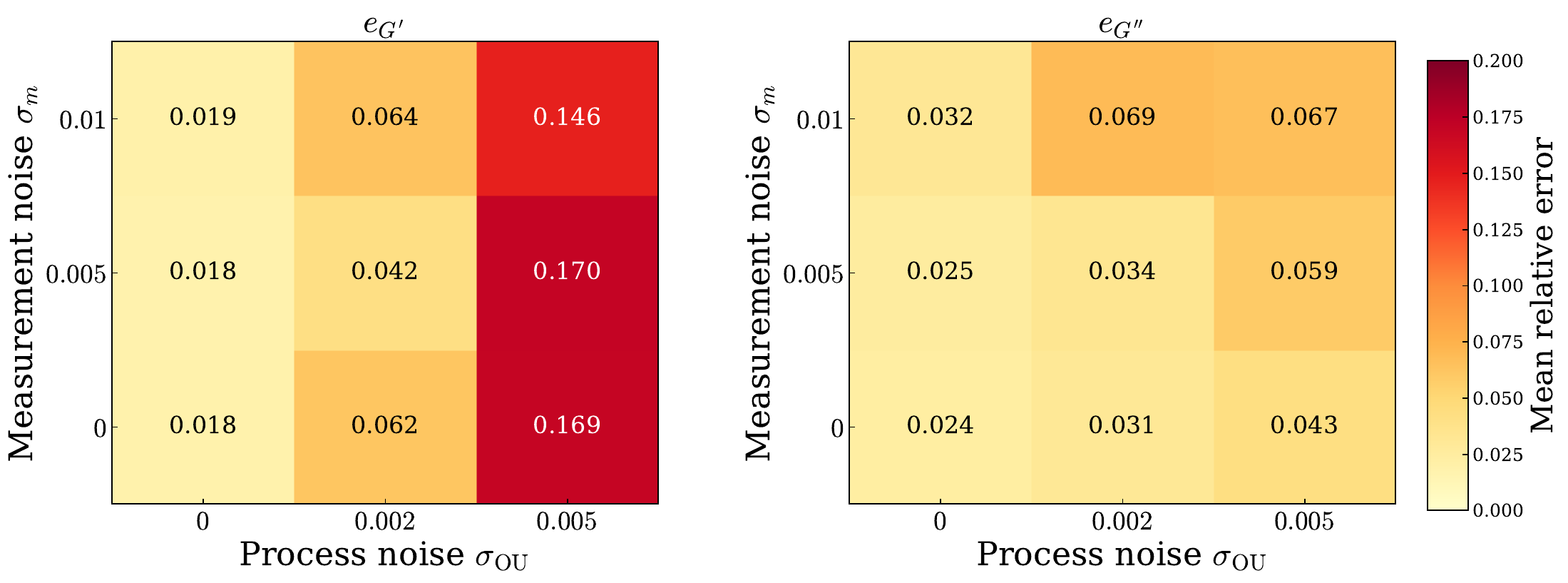}
\caption{Abs-power benchmark (E2): heatmaps of mean DMA relative error across the noise grid, averaged over $N_{\mathrm{trials}}=5$ independent trials per configuration. \emph{Left}: storage modulus $e_{G'}$. \emph{Right}: loss modulus $e_{G''}$. The $G'$ error is more sensitive to process noise than $G''$, reflecting the stronger impact of OU fluctuations on the elastic response reconstruction.}
\label{fig:heatmap-abspower}
\end{figure}

The corresponding stress-strain diagrams for different noise levels and loading conditions are shown in Fig.~\ref{fig:hysteresis-abspower}. The characteristic nonlinear hysteresis loops, whose shape deviates progressively from the elliptical Newtonian response at larger amplitudes, are indeed evident from the figure.

\begin{figure}[H]
\centering
\includegraphics[width=\textwidth]{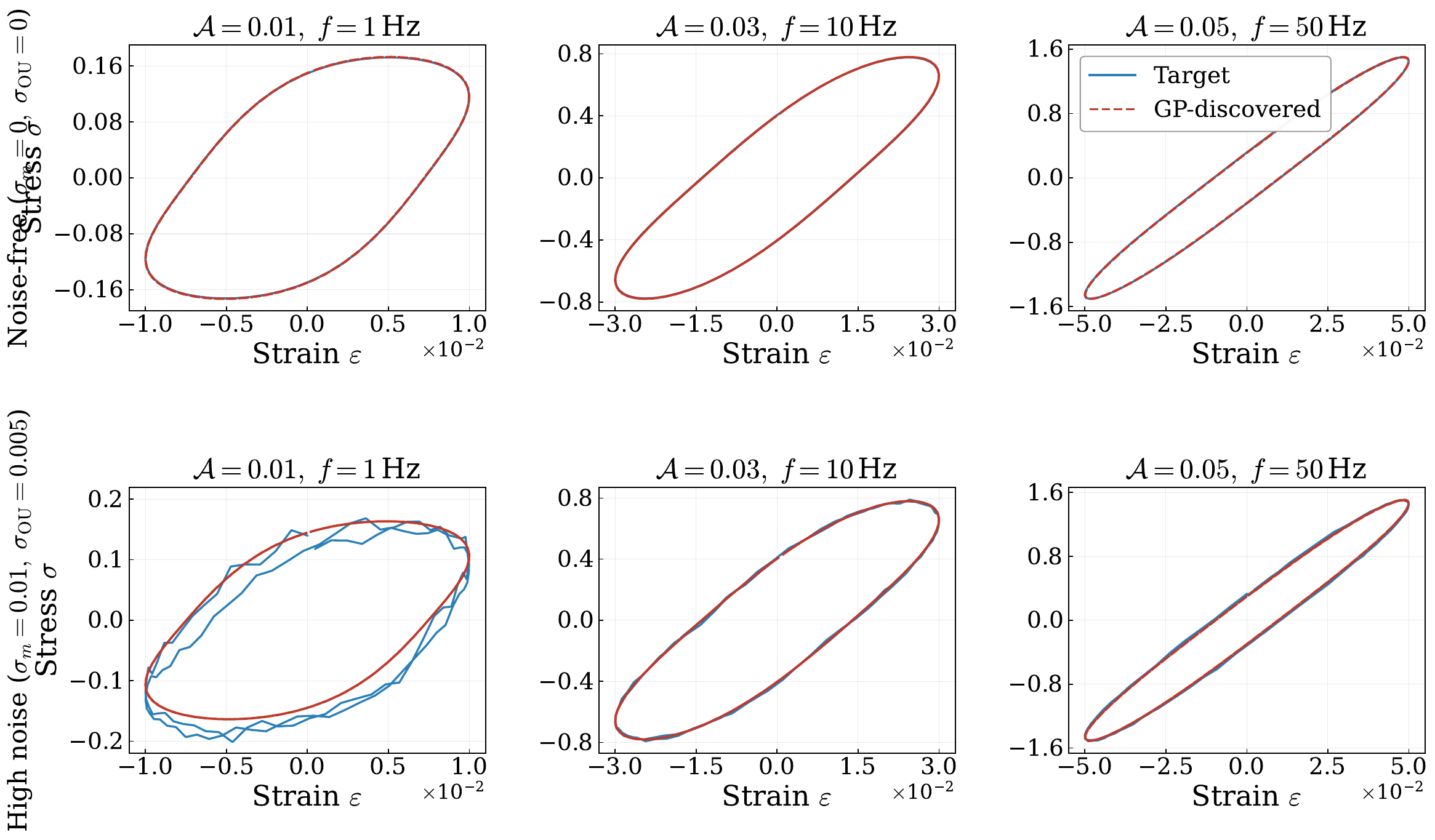}
\caption{Abs-power benchmark (E2): strain--stress hysteresis loops under noise-free (top row) and high-noise (bottom row) conditions, shown for trial~1. Top row: noise-free, discovered $\hat{\varphi}^*(A)=0.6594\,|A|^{2.998}$. Bottom row: high noise ($\sigma_m=0.010$, $\sigma_{\mathrm{OU}}=0.005$), discovered $\hat{\varphi}^*(A)=0.6607\,|A|^{2.985}$. Ground truth: $\tfrac{2}{3}|A|^3$. The nonlinear loops deviate from the elliptical Newtonian shape, with the distortion becoming more pronounced at larger amplitudes.}
\label{fig:hysteresis-abspower}
\end{figure}

Finally, Figure~\ref{fig:dma-moduli-abspower} summarizes the predicted storage and loss moduli, which differently from the E1 case, exhibit pronounced amplitude dependence characteristic of nonlinear behavior. Notably, the frequency dependence of both moduli is also well reproduced, confirming that the framework correctly identifies the dissipation potential functional form even when it deviates substantially from the quadratic regime.

\begin{figure}[H]
\centering
\includegraphics[width=0.9\textwidth]{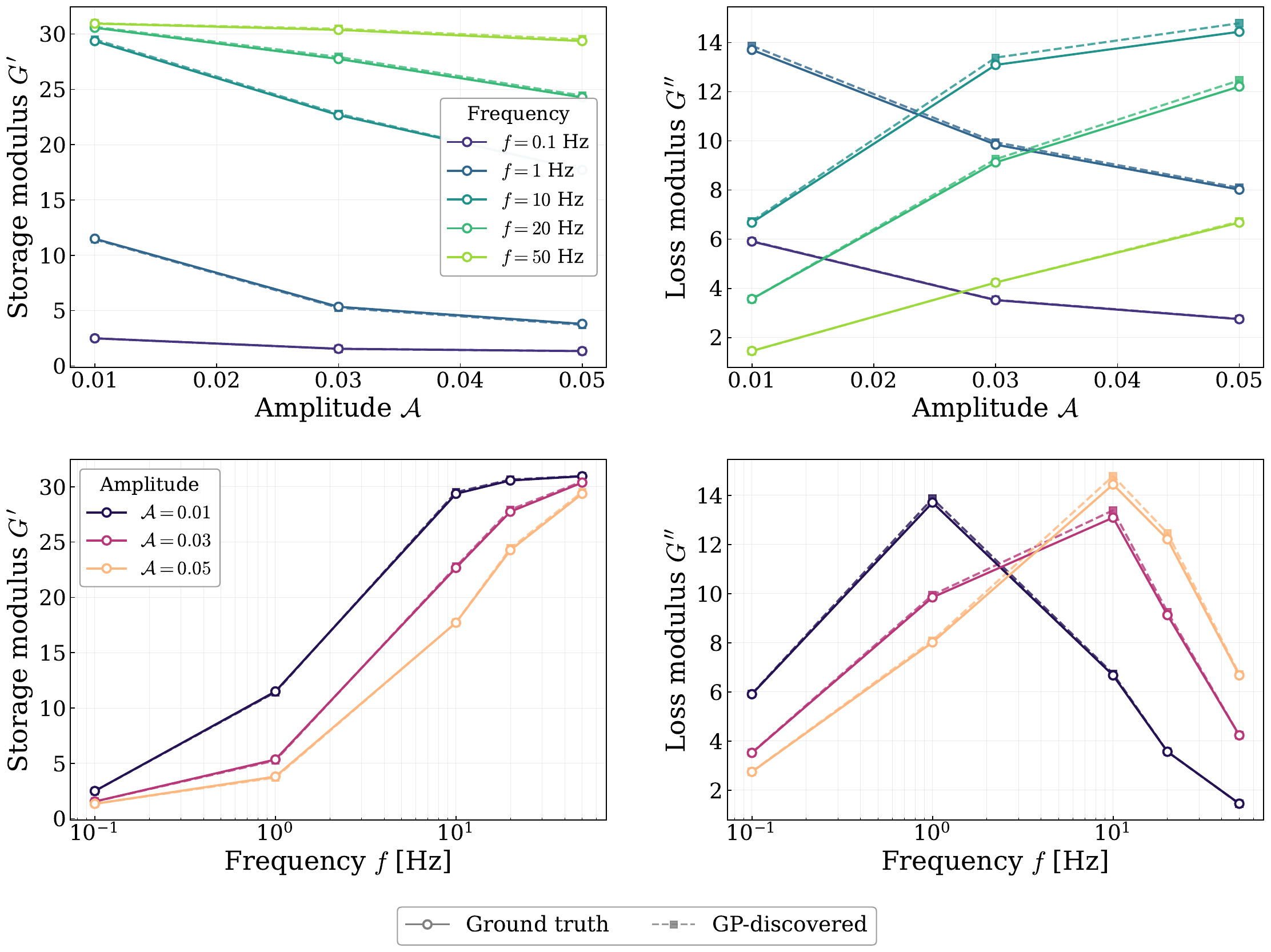}
\caption{DMA moduli comparison for the abs-power benchmark (E2), noise-free case, trial~1. $G'$ and $G''$ versus amplitude (top row) and versus frequency on logarithmic scale (bottom row). $G'$ decreases with amplitude and curves split by amplitude, confirming the nonlinear amplitude-dependent dissipation of the power-law potential. The discovered potential reproduces both the amplitude and frequency dependence closely. Solid circles: ground truth; dashed squares: predicted.}
\label{fig:dma-moduli-abspower}
\end{figure}

We extend E2 to a sweep over $q \in \{1.5,\, 2.0,\, 3.0\}$ using noise-free data ($\sigma_m = \sigma_{\mathrm{OU}} = 0$, $N_{\mathrm{trials}} = 5$). The cases $q = 2.0$ and $q = 3.0$ reproduce E1 and the abs-power case above, providing internal consistency checks. The case $q = 1.5$ probes the \emph{sub-quadratic regime} $q < 2$, in which the flow rule $\dot{z} = c\,|A|^{q-1}\sgn(A)$ has Jacobian $\partial \dot{z}/\partial A \propto |A|^{q-2}$ diverging as $A \to 0$: the effective relaxation time $\tau_{\mathrm{eff}}\propto |A|^{2-q}$ collapses at reversals and the ODE becomes stiff, approaching the rate-independent perfectly-plastic limit $\dot{z} = c\,\sgn(A)$ as $q \to 1^+$ (a Bingham flow with vanishing yield stress). Table~\ref{tab:rate-independent} reports the recovery rates and parameter estimates.

\begin{table}[H]
\centering
\caption{E2 power-law family: exponent sweep (noise-free, $N_{\mathrm{trials}}=5$ per $q^\star$, target $\hat{k}^\star=c/q^\star$). ``Exact recovery'' denotes recovery of the form $\hat{k}|A|^{\hat{q}}$ with $|\hat{q} - q^\star| < 0.15$. ``Alt.\ form'' indicates trials converging to pseudo-Huber, Macaulay bracket, or other admissible structures.}
\label{tab:rate-independent}
\begin{tabular}{@{}ccccccc@{}}
\toprule
$q^\star$ & Exact (\%) & Alt.\ form (\%) & $\hat{q}$ (mean $\pm$ std) & $\hat{k}$ (mean $\pm$ std) & $e_{G'}$ & $e_{G''}$ \\
\midrule
1.5 & 100 & 0 & $1.488 \pm 0.066$ & $1.396 \pm 0.072$ & $0.0171 \pm 0.0041$ & $0.0397 \pm 0.0135$ \\
2.0 & 100 & 0 & $2.000 \pm 0.000$ & $1.000 \pm 0.000$ & $0.0389 \pm 0.0000$ & $0.0298 \pm 0.0000$ \\
3.0 & 100 & 0 & $3.007 \pm 0.000$ & $0.658 \pm 0.000$ & $0.0186 \pm 0.0000$ & $0.0286 \pm 0.0000$ \\
\bottomrule
\end{tabular}
\end{table}

\paragraph{E3: Bingham viscoplastic.}
The Bingham dual dissipation potential
\begin{equation}\label{eq:gt-bingham}
    \varphi^*(A) = \frac{1}{2\eta}\macaulay{|A| - \sigma_Y}^2,
\end{equation}
where $\macaulay{\cdot} = \max(\cdot, 0)$ is the Macaulay bracket, $\eta = 1.0$, and $\sigma_Y > 0$ is the yield stress. The flow rule is
\begin{equation}\label{eq:bingham-flow}
    \dot{z} = \begin{cases}
        0 & \text{if } |A| \leq \sigma_Y, \\[4pt]
        \displaystyle\frac{|A| - \sigma_Y}{\eta}\sgn(A) & \text{if } |A| > \sigma_Y,
    \end{cases}
\end{equation}
exhibiting a genuine elastic domain $|A| \leq \sigma_Y$. The yield stress is swept across
\begin{equation}\label{eq:bingham-sweep}
    \sigma_Y \in \{1.0,\, 2.5,\, 5.0\}
\end{equation}
All runs use noise-free data, the LP2 triangular ramp protocol, and nRMSE fitness.

\begin{table}[H]
\centering
\caption{E3 Bingham viscoplastic benchmark (noise-free, LP2 triangular ramp, $N_{\mathrm{trials}}=5$ per $\sigma_Y$, target $\hat{k}^\star=1/(2\eta)=0.5$). ``Exact recovery'' denotes recovery of the Macaulay bracket structure $\hat{k}\macaulay{|A| - \hat{\sigma}_Y}^{\hat{r}}$ with $|\hat{r} - 2| < 0.15$. ``Alt.\ form'' indicates alternative admissible structures.}
\label{tab:bingham}
\begin{tabular}{@{}ccccccc@{}}
\toprule
$\sigma_Y$ & Exact (\%) & $\hat{\sigma}_Y$ (mean $\pm$ std) & $\hat{r}$ (mean $\pm$ std) & $\hat{k}$ (mean $\pm$ std) & nRMSE & Alt.\ form (\%) \\
\midrule
1.0 & 100 & $1.001 \pm 0.000$ & $2.00 \pm 0.00$ & $0.498 \pm 0.000$ & 0.0103 & 0 \\
2.5 & 100 & $2.503 \pm 0.000$ & $2.00 \pm 0.00$ & $0.497 \pm 0.000$ & 0.0104 & 0 \\
5.0 & 100 & $5.001 \pm 0.000$ & $2.00 \pm 0.00$ & $0.498 \pm 0.000$ & 0.0113 & 0 \\
\bottomrule
\end{tabular}
\end{table}

Table~\ref{tab:bingham} reports a 100\% exact recovery rate across all tested cases, while Fig.~\ref{fig:sigmay-timeseries} displays representative strain--stress hysteresis loops under the LP2 triangular ramp, illustrating the elastic--plastic transition accurately captured by the discovered potential.

\begin{figure}[H]
\centering
\includegraphics[width=\textwidth]{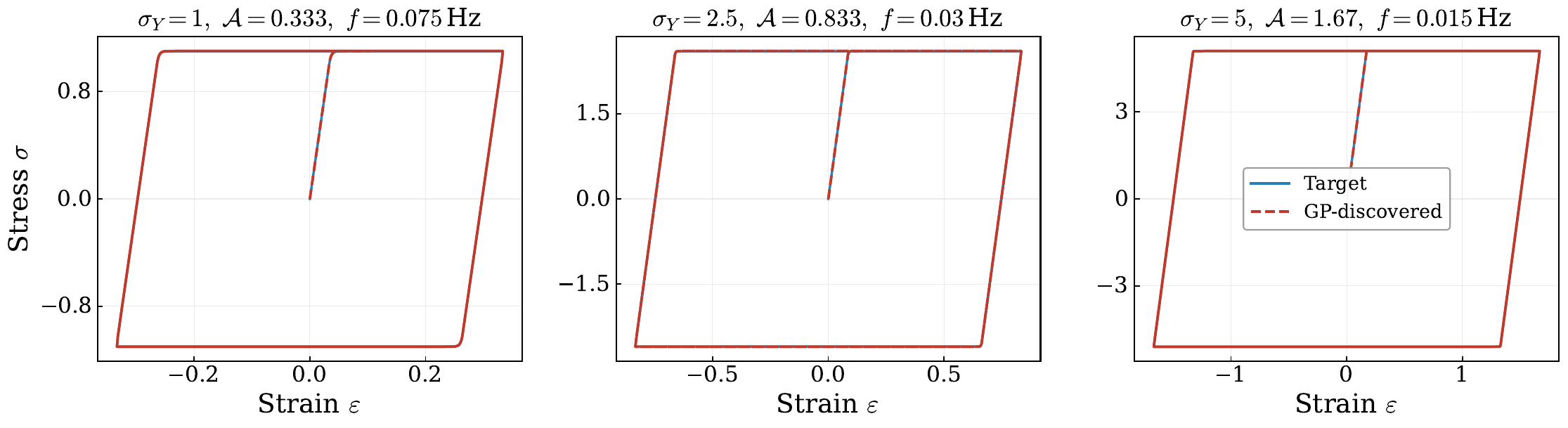}
\caption{E3 Bingham viscoplastic: strain--stress hysteresis loops under the LP2 triangular ramp protocol, for $\sigma_Y^\star \in \{1.0, 2.5, 5.0\}$ at trial~1. The discovered potential reproduces the elastic--plastic transition and the stress plateau characteristic of viscoplastic flow.}
\label{fig:sigmay-timeseries}
\end{figure}

\section{Experimental Validation}\label{sec:experimental}

\subsection{Problem Setup}\label{sec:experimental-setup}

To assess the framework on real-world data, we apply it to oscillatory shear measurements of a synthetic filled elastomer, a polydimethylsiloxane (PDMS, Sylgard~184) matrix reinforced with carbonyl iron powder (CIP) at 30\,vol\%, performed on a universal testing machine. The material exhibits a measurable rate dependence and a clearly hysteretic response under cyclic shear, motivating a non-quadratic dissipation potential.

Specimens are tested in a \emph{double-shear} configuration: two elastomeric pads ($h = 4\,\mathrm{mm}$, bonded area $A = 20\times20\,\mathrm{mm}^{2}$ each) are adhesively bonded between a central metal insert and two outer plates. Shear is applied by displacing the central insert relative to the clamped outer plates; the effective cross-sectional area entering the stress calculation is $A_{c} = 2A = 800\,\mathrm{mm}^{2}$. The engineering shear strain and shear stress are recovered from the rig signals as
\begin{equation}\label{eq:exp-strain-stress}
    \gamma(t) = \frac{u(t)}{L_{0}}, \qquad \tau(t) = \frac{F(t)}{A_{c}},
\end{equation}
where $u(t)$ is the measured cross-head displacement, $F(t)$ the load-cell force, and $L_{0} = 4\,\mathrm{mm}$ the effective gauge length. The acquisition rate is $100\,\mathrm{Hz}$ ($\Delta t = 10\,\mathrm{ms}$). Each test consists of ten sinusoidal cycles to reach a stationary response; the last cycle is retained for analysis after linear resampling onto a uniform phase grid that closes by construction (Section~\ref{sec:setup}). A representative strain history and the corresponding hysteresis loop are shown in Figure~\ref{fig:exp-strain-history}.

\begin{figure}[t]
\centering
\includegraphics[width=0.95\textwidth]{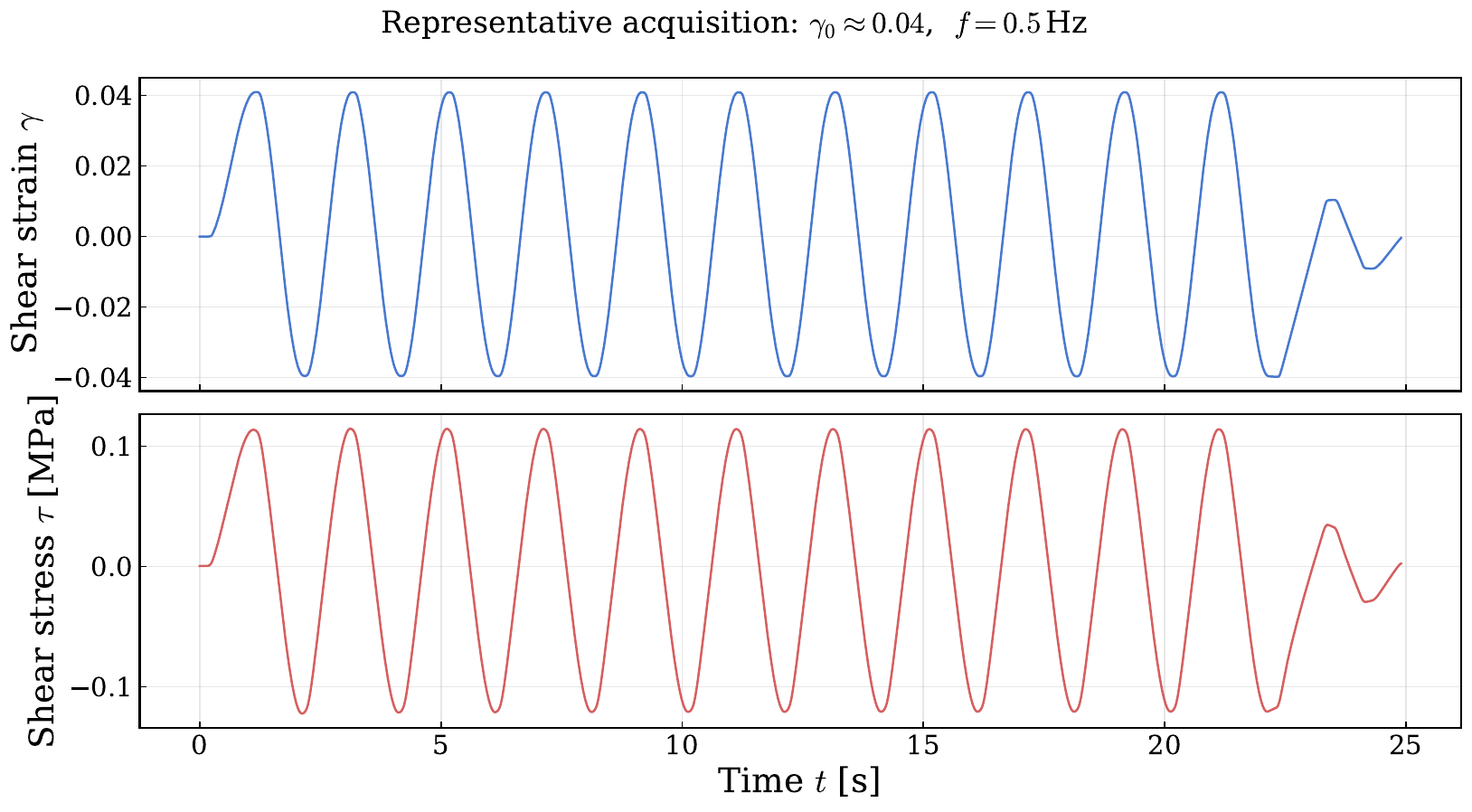}
\caption{Representative experimental acquisition (nominal strain amplitude $\gamma_0 \approx 0.04$, drive frequency $f = 0.50\,$Hz), full ten-cycle history including the initial ramp-up to stationarity. \emph{Top}: prescribed sinusoidal shear strain $\gamma(t)$. \emph{Bottom}: measured shear stress $\tau(t)$. The strain reaches a stationary sinusoidal regime after the first few cycles; the measured stress closes onto a stable periodic response with the same period but a phase lag, signaling rate-dependent dissipation.}
\label{fig:exp-strain-history}
\end{figure}

The campaign spans four nominal strain amplitudes $\gamma_{0} \in \{0.02,\,0.04,\,0.06,\,0.08\}$ and five drive frequencies $f \in \{0.10,\,0.25,\,0.50,\,0.75,\,1.00\}\,\mathrm{Hz}$, for a total of $N_s = 4 \times 5 = 20$ raw acquisitions used for identification, each comprising the full ten-cycle prescribed history sampled at $100\,\mathrm{Hz}$. Although both the strain amplitudes and the loading rates are small, the measured response is already markedly non-linear, motivating the search for a non-quadratic dissipation potential.

\paragraph{Discovery setting.}
Compared with the synthetic benchmarks of Section~\ref{sec:synthetic}, this dataset combines several sources of difficulty: the signals are noisy, the elastic moduli $(G_\infty, G_1)$ are not known a priori and must be jointly identified with the dissipation potential, and the empirical storage/loss moduli display an amplitude-and-frequency dependence.

The experimental identification follows the two-stage strategy outlined in Sec.~\ref{sec:benchmarks}: the elastic moduli $G_\infty$ and $G_1$ are first estimated from the data, and the dissipation potential $\varphi^*(A)$ is then discovered via symbolic regression. Unlike the synthetic benchmarks where $G_\infty$ and $G_1$ are known exactly, here both moduli are treated as \emph{tunable parameters} jointly optimized with the expression constants during the selective tuning step (Nelder--Mead for Macaulay-bearing candidates, L-BFGS-B otherwise) at each generation.

Initial estimates of the elastic moduli are taken from the storage modulus $G'$ extracted from the measured data, which spans $G' \in [2.5,\, 3.1]\,\mathrm{MPa}$ across the test matrix. We set $G_\infty^{(0)} = 2.7\,\mathrm{MPa}$, in line with the mean measured $G'$ at the lowest available frequency, and $G_1^{(0)} = 1.0\,\mathrm{MPa}$ as a viscoelastic increment compatible with the observed amplitude and frequency variation of $G'$. The initial values play a minor role: during evolution both moduli are re-tuned jointly with the dissipation-potential parameters $\{c_i\}$ within wide bounds, $G_\infty \in [0.1\times,\, 4\times]\,G_\infty^{(0)}$ and $G_1 \in [0.01\times,\, 4\times]\,G_1^{(0)}$, by the selective Nelder--Mead / L-BFGS-B routine of Section~\ref{sec:optimization}. Figure~\ref{fig:flowchart-exp} summarises the identification procedure.

\begin{figure}[H]
\centering
\begin{tikzpicture}[
    node distance=0.8cm and 1.2cm,
    box/.style={rectangle, draw, rounded corners, minimum width=2.8cm, minimum height=0.9cm, align=center, font=\small},
    data/.style={box, fill=blue!15},
    process/.style={box, fill=green!15},
    gp/.style={box, fill=orange!15},
    output/.style={box, fill=red!15},
    arrow/.style={-{Stealth[length=2.5mm]}, thick}
]

\node[data] (strain) {Strain data\\$\gamma(t)$};
\node[data, right=of strain] (stress) {Stress data\\$\tau(t)$};

\node[process, below=of $(strain)!0.5!(stress)$] (init-est) {Initial estimates\\$\hat{G}_\infty,\;\hat{G}_1$};

\node[gp, right=of init-est, xshift=1.5cm] (init) {Initialize population\\$\mathcal{P}_0 \subset \mathcal{G}_{\mathrm{cvx}}^{\mathrm{comp}}$};
\node[gp, below=of init] (tune) {Parameter tuning\\(NM / L-BFGS-B, bounded)\\$\{c_i,\, G_\infty,\, G_1\}$};
\node[gp, below=of tune] (rollout) {Rollout simulation\\$\dot{z} = \partial_A \varphi^*(A)$\\$\tau_{\text{pred}} = G_\infty \gamma + A$};
\node[gp, below=of rollout] (fitness) {Fitness evaluation\\nRMSE$(\tau_{\text{pred}},\,\tau_{\text{meas}})$};
\node[gp, below=of fitness] (evolve) {Reproduction\\Elitism + Tournament\\(Crossover + Mutation)};

\node[output, below=of evolve] (output) {Best $\hat{\varphi}^*(A),\;\hat{G}_\infty,\;\hat{G}_1$};

\draw[arrow] (strain) -- (init-est);
\draw[arrow] (stress) -- (init-est);
\draw[arrow] (init-est) -- (init);

\draw[arrow] (stress.east) -- ++(0,-0.4) -| ([xshift=0.8cm]fitness.east) -- (fitness.east);

\draw[arrow] (init) -- (tune);
\draw[arrow] (tune) -- (rollout);
\draw[arrow] (rollout) -- (fitness);
\draw[arrow] (fitness) -- (evolve);
\draw[arrow] (evolve) -- (output);

\draw[arrow] (evolve.west) -- ++(-0.8,0) |- node[pos=0.25, left, font=\scriptsize] {next gen.} (tune.west);

\begin{scope}[on background layer]
\node[draw=orange!60, dashed, rounded corners, fit=(init)(tune)(rollout)(fitness)(evolve), inner sep=0.3cm, label={[font=\scriptsize]above:GP Evolution Loop}] {};
\end{scope}

\end{tikzpicture}
\caption{Schematic of the identification strategy used for experimental data. Unlike the synthetic pipeline in Fig.~\ref{fig:flowchart}, the elastic moduli $G_\infty$ and $G_1$ are \emph{not} known \emph{a priori}: they are initialized from the data and jointly optimized with the expression constants $\{c_i\}$ during the selective Nelder--Mead / L-BFGS-B tuning step at each generation. Fitness is evaluated by comparing the predicted stress $\tau_{\text{pred}}(t)$ directly against the measured stress $\tau_{\text{meas}}(t)$ via nRMSE on the full raw acquisition.}
\label{fig:flowchart-exp}
\end{figure}

The performance of the GP-discovered potential is benchmarked against a calibrated linear Zener (SLS) model with quadratic dissipation~\eqref{eq:gt-quadratic} with $\hat{k}$ fitted on the full multi-frequency dataset by the same symbolic-regression machinery, restricted to the quadratic primitive and with $G_\infty$ and $G_1$ jointly optimized as in the GP runs. Across the $N_s=20$ retained acquisitions the calibrated Zener model attains $\hat{k} = 3.42$, $\hat{G}_\infty = 2.50\,\mathrm{MPa}$, $\hat{G}_1 = 0.92\,\mathrm{MPa}$ (corresponding to $\hat{\eta} = 1/(2\hat{k}) = 0.146\,\mathrm{MPa\cdot s}$ and a relaxation time $\hat{\tau} = \hat{\eta}/\hat{G}_1 = 0.16\,\mathrm{s}$), with a stress nRMSE of $0.110$. This baseline is physically meaningful but structurally rigid: it forces a single Maxwell relaxation time and amplitude-independent moduli, while the empirical $(G',G'')$ vary by $\sim 30\%$ across the test matrix. It serves as the reference against which the discovered potentials are compared in Section~\ref{sec:results-exp}.

\subsection{Results}\label{sec:results-exp}
GP hyperparameters are listed in the Experimental column of Tab.~\ref{tab:hyperparams}: a population of $60$ individuals evolves for $\Gamma_{\max}=5$ generations within a grammar of depth $D_{\max}=6$ and node budget $N_{\max}=20$, with the inner constant tuner using the \emph{selective Nelder--Mead} scheme of Section~\ref{sec:optimization} (gradient-free simplex for expressions containing the Macaulay bracket primitive, single-shot L-BFGS-B otherwise), allotted $N_{\mathrm{iter}}^{\max}=200$ iterations and $N_{\mathrm{restart}}=3$ multi-restarts when the simplex branch is active. A per-evaluation timeout of $540\,\mathrm{s}$ aborts stiff candidates whose rollout fails to terminate. Unlike the E1 and E2 synthetic benchmarks, which compare first-harmonic moduli $(G',G'')$, the experimental fitness is the time-domain stress nRMSE computed on the \emph{full raw time series} of each acquisition (all prescribed cycles, including the initial transient), with the rollout integrated from rest $z(0)=0$ and averaged across the $N_s=20$ samples. For each identification, $N_{\mathrm{trials}}=10$ independent trials are run with different random seeds.

\begin{figure}[H]
\centering
\includegraphics[width=\textwidth]{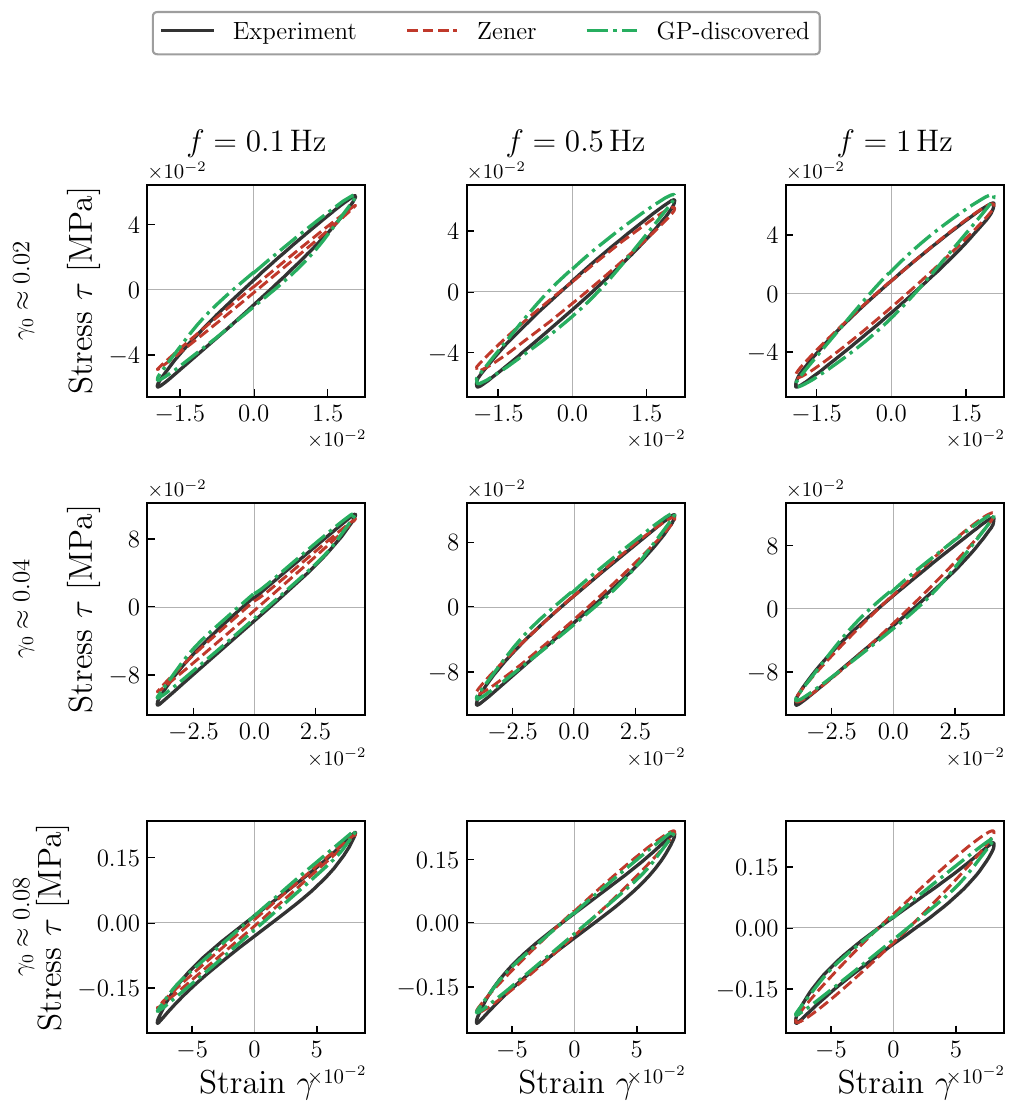}
\caption{Shear hysteresis loops on the synthetic elastomer specimen. Rows: nominal strain amplitudes $\gamma_0 \in \{0.02,\,0.04,\,0.08\}$; columns: drive frequencies $f \in \{0.10,\,0.50,\,1.00\}\,$Hz. Solid (dark): last-cycle experimental loop; dashed (red): calibrated linear Zener (SLS) baseline with $\hat{k}=3.42$, $\hat{G}_\infty=2.50\,$MPa, $\hat{G}_1=0.92\,$MPa, $\hat\eta=0.146\,$MPa$\cdot$s (multi-history nRMSE $= 0.110$); dash-dotted (green): GP-discovered potential of trial~6, $\hat{\varphi}^*(A)=\cosh(3.27\,\sinh^2(4.65A))-1$, $\hat{G}_\infty=2.50\,$MPa, $\hat{G}_1=1.55\,$MPa (nRMSE $= 0.0822$). All model curves are taken from the last cycle of a 20-cycle warm-up rollout to ensure steady-state comparison. Aggregated over the full test matrix of $4 \times 5 = 20$ amplitude--frequency combinations, the per-combination last-cycle nRMSE averages $0.110$ for the Zener baseline and $0.076$ for the GP-discovered potential (ratio $\sim 1.44$; the GP outperforms the Zener in $18$ of the $20$ combinations).}
\label{fig:sin-hysteresis}
\end{figure}

\begin{figure}[H]
\centering
\includegraphics[width=\textwidth]{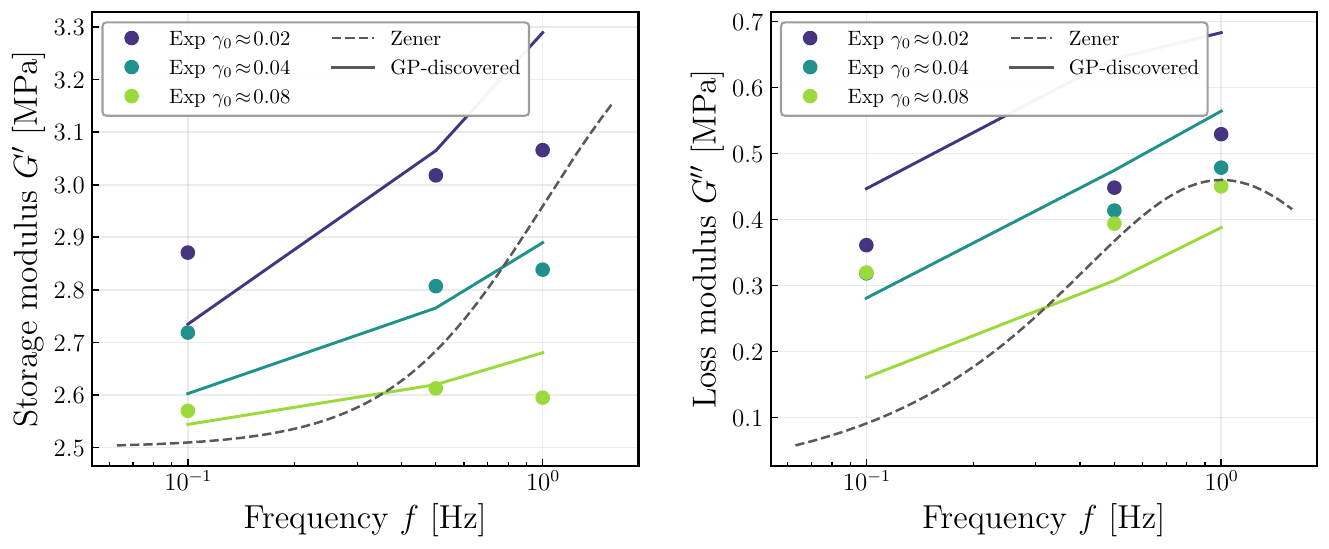}
\caption{Dynamic moduli of the synthetic elastomer specimen extracted from the first harmonic of the experimental loops and of the model rollouts. \emph{Left}: storage modulus $G'(f)$; \emph{right}: loss modulus $G''(f)$. Dots: experimental, one colour per nominal strain amplitude $\gamma_0 \in \{0.02,\,0.04,\,0.08\}$; solid line (same colour): GP-discovered potential of trial~6, $\hat{\varphi}^*(A)=\cosh\!\bigl(3.27\,\sinh^2(4.65\,A)\bigr)-1$ with $\hat{G}_\infty=2.50\,$MPa and $\hat{G}_1=1.55\,$MPa, rolled out on $\gamma_0\sin(2\pi f t)$ at the same amplitude; dashed grey: calibrated linear Zener baseline (amplitude-independent). The experimental softening of $G'$ with amplitude is reproduced by the GP model and lies outside the Zener prediction by construction.}
\label{fig:sin-dma}
\end{figure}

We ran $N_{\mathrm{trials}}=10$ independent GP searches with different random seeds, each evolving a population of $60$ individuals over $5$ generations on the full set of $N_s=20$ raw time histories (no cycle extraction, rollout integrated from rest $z(0)=0$, see Section~\ref{sec:experimental}). Table~\ref{tab:sin-trials} reports, for each trial, the multi-history stress nRMSE and the discovered potential together with the jointly optimized elastic moduli. All ten trials converge to admissible non-quadratic potentials with multi-history nRMSE in the range $0.082$--$0.104$, comfortably below the calibrated linear-Zener baseline (nRMSE $=0.110$).

\begin{table}[H]
\centering
\caption{Experimental validation, synthetic elastomer specimen, multi-history fit over $N_s=20$ raw acquisitions ($\gamma_0 \in \{0.02,0.04,0.06,0.08\}$, $f \in \{0.10,0.25,0.50,0.75,1.00\}\,$Hz, full ten-cycle history with rollout from $z(0)=0$). nRMSE is the time-domain stress error averaged across the $N_s$ histories; $\hat{G}_\infty$ and $\hat{G}_1$ are jointly tuned with the dissipation potential. Trials are sorted by nRMSE.}
\label{tab:sin-trials}
\begin{tabular}{@{}clccc@{}}
\toprule
Trial & Discovered $\hat{\varphi}^*(A)$ & nRMSE & $\hat{G}_\infty$ [MPa] & $\hat{G}_1$ [MPa] \\
\midrule
6  & $\sinh^2(2.3\cdot\!10^{-3}A) + \cosh\!\bigl(3.27\,\sinh^2(4.65A)\bigr) - 1$        & $0.0822$ & $2.50$ & $1.55$ \\
10 & $\cosh\!\bigl(4.04\,\cosh(5.72A) - 4.04\bigr) - 1$                              & $0.0840$ & $2.48$ & $1.48$ \\
2  & $3.87\,\bigl[\cosh(\cosh(7.78A){-}1) - 1\bigr]$                                  & $0.0849$ & $2.44$ & $1.27$ \\
4  & $2.41\,\bigl[\cosh(4.26A) - 1\bigr]^{1.42}$                                     & $0.0859$ & $2.55$ & $1.57$ \\
5  & $[\log\cosh(5.08A)]^{1.32}$                                                     & $0.0859$ & $2.54$ & $1.17$ \\
1  & $[\log\cosh(5.16A)]^{1.34}$                                                     & $0.0885$ & $2.55$ & $1.49$ \\
3  & $[\log\cosh(5.16A)]^{1.33}$                                                     & $0.0889$ & $2.55$ & $1.54$ \\
8  & $\cosh(4.59\,|A|^{1.16}) - 1$                                                   & $0.0974$ & $2.57$ & $1.52$ \\
9  & $3.11\,\exp(3.17\,|A|^{2.29}) + \cosh(0.059A) - 4.11$                            & $0.0978$ & $2.57$ & $1.47$ \\
7  & $\bigl(2.02+2.02+2.25\bigr)\,|A|^{2.15}$                                        & $0.1036$ & $2.54$ & $1.14$ \\
\midrule
\multicolumn{2}{l}{Linear Zener baseline ($\varphi^* = \hat{k}\,A^2$, $\hat{k}=3.42$)} & $0.1103$ & $2.50$ & $0.92$ \\
\bottomrule
\end{tabular}
\end{table}

The best fit is achieved by trial~6 with
\begin{equation}\label{eq:gp-sin-best}
    \hat{\varphi}^*(A) = \sinh^2\!\bigl(2.33\!\cdot\!10^{-3}\,A\bigr) + \cosh\!\bigl(3.268\,\sinh^2(4.646\,A)\bigr) - 1,
\end{equation}
jointly with $\hat{G}_\infty = 2.500\,\mathrm{MPa}$ and $\hat{G}_1 = 1.547\,\mathrm{MPa}$, reaching a multi-history nRMSE of $0.0822$ on the $N_s=20$ retained acquisitions. The leading $\sinh^2(2.33\!\cdot\!10^{-3}\,A)$ term is in fact a degenerate branch of the discovered tree: with the experimental parsimony coefficient set to zero (Tab.~\ref{tab:hyperparams}), the inner tuner drove its multiplicative constant to $2.33\!\cdot\!10^{-3}$, so that over the rollout range $|A|\lesssim 0.15\,\mathrm{MPa}$ both the potential value and its derivative contributed by this term remain $\sim 10^{-8}$ of those produced by the dominant $\cosh\!\bigl(3.27\,\sinh^2(4.65A)\bigr)$ composition. We retain the term in~\eqref{eq:gp-sin-best} for faithfulness to the raw output, but dropping it leaves the fit unchanged. Dropping the degenerate branch, the discovered dissipation potential is effectively the single hyperbolic composition $\cosh\!\bigl(3.27\,\sinh^2(4.65\,A)\bigr) - 1$: the inner $\sinh^2$ rises quadratically near the origin and exponentially at large $|A|$, and the outer $\cosh(\cdot){-}1$ then drives a strongly progressive stiffening of the dissipation rate, accounting for the observed amplitude- and rate-dependent loss not captured by the quadratic Zener. Relative to the calibrated linear baseline, the GP-discovered potential reduces the multi-history stress error by $25.4\%$ at no additional model-class assumption beyond the convexity-preserving grammar of Section~\ref{sec:methodology}.

The remaining nine successful trials cluster around three families, all admissible by the grammar and all clearly non-quadratic: (i) $\log\cosh(\cdot)^q$ with $q \approx 1.33$ (trials~1, 3, 5; $\hat{q}$ stable to the second decimal), (ii) $(\cosh(\cdot)-1)^q$ or nested-$\cosh$ compositions (trials~2, 4, 10), and (iii) more elaborate combinations involving $\cosh(|A|^p)$ or $\exp(|A|^p)$ terms (trials~8, 9). All produce nRMSE below $0.10$. The power-law-only candidate (trial~7) is the worst at $0.1036$, marginally improving over the Zener baseline; this is consistent with the synthetic finding that pure power laws struggle to capture the simultaneous amplitude--frequency dependence of the dynamic moduli.

\section{Conclusion}\label{sec:conclusion}

We have proposed a symbolic regression framework for data-driven discovery of dissipation potentials in inelastic materials within the Generalized Standard Materials formalism. The central contribution is a convexity-preserving grammar $\mathcal{G}_{\mathrm{cvx}}^{\mathrm{comp}}$ that guarantees thermodynamic consistency, both convexity and non-negativity, by construction.

The grammar design exploits the fundamental convexity preservation result under composition (Proposition~\ref{prop:composition}): convexity is preserved when a convex, non-decreasing outer function is composed with a convex inner function. By restricting outer functions to $\mathcal{F}_{\mathrm{out}}$ and inner functions to positive combinations of $\mathcal{F}_{\mathrm{prim}}$, all generated expressions satisfy the thermodynamic requirements (P1) and (P2). Grammar-compliant genetic operators (term-level crossover and mutation), combined with constrained nonlinear parameter optimization, then provide end-to-end guarantees that every candidate potential in every generation is thermodynamically admissible.

Validation on synthetic data demonstrates strong symbolic recovery across three benchmark families (Tab.~\ref{tab:benchmarks}): Newtonian viscosity (E1), recovered through the continuous power-law primitive $|A|^p$ with near-exact recovery rate of 100\% across all noise levels; the power-law family (E2), where the exponent is identified with $93.3\%$ exact recovery and the remaining trials converge to alternative admissible structures (mixed power-law + hyperbolic, exponential--quadratic), including the stiff sub-quadratic regime $q<2$; and Bingham viscoplastic yield via the Macaulay bracket primitive (E3), with full recovery of the yield threshold $\sigma_Y$. Across the Monte Carlo noise grid, the storage-modulus error $e_{G'}$ is more sensitive to process noise $\sigma_{\mathrm{OU}}$ than to measurement noise $\sigma_m$, while the loss-modulus error $e_{G''}$ remains below 5\% in most configurations. The framework eliminates thermodynamic violations entirely across all tested noise levels.

Experimental validation on oscillatory shear data for an elastomeric compound confirms practical applicability. By jointly optimizing elastic moduli alongside expression constants within each GP generation, the method discovers non-quadratic dissipation potentials that capture the nonlinear frequency dependence of the storage and loss moduli more accurately than a linear Zener baseline. More broadly, the same grammar serves as a thermodynamically admissible hypothesis class for post-hoc symbolic extraction from trained neural surrogates.

This approach bridges physics-informed machine learning and classical phenomenological modeling, offering constitutive model discovery with interpretability and formal thermodynamic guarantees. The discovered potentials inherit the variational structure of the GSM framework, enabling direct deployment in finite element codes with guaranteed numerical stability, and the explicit expressions satisfy the interpretability requirements for regulatory applications in biomechanics and aerospace~\cite{FDA2021, ASMEVV40}. By adopting the subdifferential setting from the outset, the framework spans the full viscoelastic--viscoplastic spectrum within a unified identification procedure.

Several directions remain open. The current formulation is restricted to a single internal variable ($N=1$); extension to multiple dissipative branches requires solving a simultaneous state estimation problem, since internal variables are no longer uniquely identifiable from stress--strain data alone. Tensor generalization to three-dimensional stress states requires recasting the grammar in terms of invariants of the driving force tensor $\tens{A}$, following approaches developed for hyperelastic energy discovery~\cite{abdusalamov2023automatic}. More exotic potentials, such as those arising in granular media or shape-memory alloys, may exceed current grammar expressivity, though the modular design permits targeted extensions. Finally, the assumption of a known free energy $\psi$ can be relaxed by simultaneously discovering both potentials through joint or alternating symbolic search.

\appendix

\section{Variational Time Integration}\label{app:time-integration}

While the fitness evaluation in Section~\ref{sec:optimization} uses explicit Euler for computational efficiency, the discovered potentials can be deployed in production finite element codes (ABAQUS\textregistered, LS-DYNA\textregistered) that employ implicit time integration schemes. A distinct advantage of the GSM framework is the existence of a variational principle for time discretization. Consider a time step $\Delta t = t_{n+1} - t_n$. The \emph{incremental energy minimization} problem reads:
\begin{equation}\label{eq:incremental-min}
    \tens{z}_{n+1} = \operatorname*{argmin}_{\tens{z}} \left\{ \psi(\tens{\varepsilon}_{n+1}, \tens{z}) + \Delta t \, \varphi\left( \frac{\tens{z} - \tens{z}_n}{\Delta t} \right) \right\}.
\end{equation}
The first-order optimality condition recovers the backward Euler discretization of \eqref{eq:evolution}:
\begin{equation}\label{eq:backward-euler}
    \frac{\tens{z}_{n+1} - \tens{z}_n}{\Delta t} = \nabla_{\tens{A}} \varphi^*(\tens{A}_{n+1}), \qquad \tens{A}_{n+1} = -\frac{\partial \psi}{\partial \tens{z}}(\tens{\varepsilon}_{n+1}, \tens{z}_{n+1}).
\end{equation}
Since $\psi$ is convex in $\tens{z}$ and $\varphi$ is convex by construction (being the Legendre--Fenchel conjugate of the convex $\varphi^*$), the objective in~\eqref{eq:incremental-min} is strictly convex in~$\tens{z}$, guaranteeing uniqueness of the minimiser and unconditional stability of the implicit scheme. This variational structure also preserves energy dissipation at the discrete level and ensures guaranteed convergence for convex potentials.

\section{Mathematical Proofs}\label{app:proofs}

\subsection{Grammar Consistency (Theorem~\ref{thm:grammar})}\label{proof:grammar}

\noindent\textbf{Theorem~\ref{thm:grammar} (Grammar Consistency).} \emph{Every expression $\varphi^*$ generated by $\mathcal{G}_{\mathrm{cvx}}^{\mathrm{comp}}$ satisfies properties \textup{(P1)} and \textup{(P2)} in Proposition~\ref{prop:dissipation}: $\varphi^*$ is convex, non-negative, and normalized ($\varphi^*(0) = 0$).}

\paragraph{Proof of Theorem~\ref{thm:grammar} (Grammar Consistency).}
\begin{proof}
We proceed by structural induction on the grammar production rules~\eqref{eq:grammar-expr}--\eqref{eq:grammar-inner}.

\textbf{Base case (Primitives):} By construction of the primitive library $\mathcal{F}_{\mathrm{prim}}$ (Tab.~\ref{tab:primitives}, verified by direct computation in Section~\ref{sec:convexity}), every \texttt{Primitive} is convex, non-negative, symmetric, and satisfies $\varphi^*(0) = 0$.

\textbf{Inductive case (Sums and positive scaling):} If $f_1, f_2$ are convex, non-negative, and normalized and $c > 0$, then by Proposition~\ref{prop:closure} both $f_1 + f_2$ and $c\,f_1$ are convex and non-negative; normalization $(f_1+f_2)(0) = (c\,f_1)(0) = 0$ is immediate.

\textbf{Inductive case (Composition):} Let $\varphi^* = f(g(A))$ where $f \in \mathcal{F}_{\mathrm{out}}$ and $g$ is generated by \texttt{InnerFunc}~\eqref{eq:grammar-inner}. Since \texttt{InnerFunc} is built from \texttt{Primitive}s, positive combinations thereof, and nested \texttt{Composition}s, the inductive hypothesis guarantees that $g$ is convex, non-negative, and normalized. By construction of the outer-function class $\mathcal{F}_{\mathrm{out}}$~\eqref{eq:outer}, $f$ is convex and non-decreasing on $\mathbb{R}^+$ with $f(0)=0$. Since $g \geq 0$, its range lies in $\mathbb{R}^+$, where $f$ is non-decreasing, so the hypotheses of Proposition~\ref{prop:composition} are met; hence $\varphi^* = f \circ g$ is convex. Since $g(A) \geq 0$ and $f$ is non-negative on $\mathbb{R}^+$, we have $\varphi^*(A) = f(g(A)) \geq 0$. Finally, $\varphi^*(0) = f(g(0)) = f(0) = 0$.
\end{proof}

\subsection{Invariance of Genetic Operators}\label{proof:operators}
\label{proof:invariant}

\noindent\textbf{Theorem~\ref{thm:invariant} (Invariant: Population Thermodynamic Consistency).} \emph{Let $\mathcal{P}_\gamma$ denote the population at generation $\gamma$. If $\mathcal{P}_0 \subset \mathcal{G}_{\mathrm{cvx}}^{\mathrm{comp}}$, then $\mathcal{P}_\gamma \subset \mathcal{G}_{\mathrm{cvx}}^{\mathrm{comp}}$ for all $\gamma \geq 0$. Consequently, every candidate in every generation satisfies properties (P1) and (P2).}

\paragraph{Proof of Invariance (Theorem~\ref{thm:invariant}).}
\begin{proof}
By induction on $\gamma$. The base case $\gamma = 0$ holds by initialization. For the inductive step, assume $\mathcal{P}_{\gamma-1} \subset \mathcal{G}_{\mathrm{cvx}}^{\mathrm{comp}}$. Population $\mathcal{P}_\gamma$ is formed from:
\begin{itemize}
    \item Elite individuals from $\mathcal{P}_{\gamma-1}$: remain in $\mathcal{G}_{\mathrm{cvx}}^{\mathrm{comp}}$;
    \item Crossover offspring: in $\mathcal{G}_{\mathrm{cvx}}^{\mathrm{comp}}$ by the Lemma below;
    \item Mutated individuals: in $\mathcal{G}_{\mathrm{cvx}}^{\mathrm{comp}}$ by the Lemma below.
\end{itemize}
Coefficient tuning via constrained optimization (the two-mode L-BFGS-B / Nelder--Mead search of Section~\ref{sec:optimization}) maintains $a \geq 0$, preserving the positive linear combination structure. By Theorem~\ref{thm:grammar}, all individuals satisfy (P1) and (P2).
\end{proof}

\paragraph{Proof: Crossover preserves grammar membership.}
Each individual is an additive list of blocks $\varphi^* = \sum_k B_k$, with every block $B_k$ a valid \texttt{ConvexExpr}~\eqref{eq:grammar-expr}. Given two parents $\varphi^*_1, \varphi^*_2 \in \mathcal{G}_{\mathrm{cvx}}^{\mathrm{comp}}$, term-level crossover recombines their blocks (insertion or replacement), so the child block list is a subset of $\{B_k^{(1)}\} \cup \{B_k^{(2)}\}$ and each child block is a valid \texttt{ConvexExpr}. Since the sum production rule $\texttt{ConvexExpr} \to \texttt{ConvexExpr} + \texttt{ConvexExpr}$~\eqref{eq:grammar-expr} accepts \texttt{ConvexExpr} operands of any internal form, the combination $\varphi^*_{\mathrm{child}} = \sum_k B_k^{\mathrm{child}}$ is derivable from the grammar; hence $\varphi^*_{\mathrm{child}} \in \mathcal{G}_{\mathrm{cvx}}^{\mathrm{comp}}$.

\paragraph{Proof: Mutation preserves grammar membership.}
Term-level mutation acts on the block list of $\varphi^* = \sum_k B_k$ and either (i) replaces a block with a freshly sampled \texttt{ConvexExpr} block, (ii) appends one, or (iii) removes a block (when at least two are present). Each freshly sampled block is a valid \texttt{ConvexExpr} by construction, so the result is again an additive combination of valid \texttt{ConvexExpr} blocks, hence in $\mathcal{G}_{\mathrm{cvx}}^{\mathrm{comp}}$ by the sum production rule~\eqref{eq:grammar-expr}.

\section*{CRediT Authorship Contribution Statement}
\textbf{F. Califano:} Conceptualization, Methodology, Software, Validation, Writing -- Original Draft. \textbf{J. Ciambella:} Conceptualization, Supervision, Validation, Writing -- Review \& Editing, Funding Acquisition.

\section*{Declaration of Competing Interest}
The authors declare that they have no known competing financial interests or personal relationships that could have appeared to influence the work reported in this paper.

\section*{Acknowledgments}
The authors gratefully acknowledge Giuseppe Tomassetti for valuable discussions and insightful suggestions that shaped the development of this work. The authors also wish to acknowledge the support of the Italian National Group of Mathematical Physics (GNFM--INdAM).

\section*{Data Availability}
The synthetic datasets and Python code used in this study are available from the corresponding author upon reasonable request.

\bibliographystyle{unsrt}
\bibliography{references}

@article{abdusalamov2023automatic,
  author  = {Abdusalamov, R. and Hillg{\"a}rtner, M. and Itskov, M.},
  title   = {Automatic generation of interpretable hyperelastic material models by symbolic regression},
  journal = {International Journal for Numerical Methods in Engineering},
  volume  = {124},
  number  = {9},
  pages   = {2093--2104},
  year    = {2023}
}

@inproceedings{amos2017input,
  author    = {Amos, B. and Xu, L. and Kolter, J. Z.},
  title     = {Input convex neural networks},
  booktitle = {Proceedings of the 34th International Conference on Machine Learning},
  pages     = {146--155},
  year      = {2017}
}

@book{ASMEVV40,
  author    = {{ASME}},
  title     = {{V\&V} 40--2018: Assessing Credibility of Computational Modeling through Verification and Validation: Application to Medical Devices},
  publisher = {American Society of Mechanical Engineers},
  year      = {2018}
}

@article{bahmani2024physics,
  author  = {Bahmani, B. and Sun, W.},
  title   = {Physics-constrained symbolic model discovery for polyconvex incompressible hyperelastic materials},
  journal = {International Journal for Numerical Methods in Engineering},
  volume  = {125},
  pages   = {e7473},
  year    = {2024}
}

@article{bahmani2023elastoplasticity,
  author  = {Bahmani, B. and Suh, H. S. and Sun, W.},
  title   = {Discovering interpretable elastoplasticity models via the neural polynomial method enabled symbolic regressions},
  journal = {Computer Methods in Applied Mechanics and Engineering},
  volume  = {422},
  pages   = {116827},
  year    = {2024},
  doi     = {10.1016/j.cma.2024.116827}
}

@article{batchelor1967introduction,
  author    = {Batchelor, G. K.},
  title     = {An Introduction to Fluid Dynamics},
  journal   = {Cambridge University Press},
  year      = {1967}
}

@book{bingham1922fluidity,
  author    = {Bingham, E. C.},
  title     = {Fluidity and Plasticity},
  publisher = {McGraw-Hill},
  year      = {1922}
}

@article{bomarito2021development,
  author    = {Bomarito, G. F. and Hochhalter, J. D. and Rubin, T. J. and Townsend, K. A.},
  title     = {Development of interpretable, data-driven plasticity models with symbolic regression},
  journal   = {Computers \& Structures},
  volume    = {252},
  pages     = {106557},
  year      = {2021},
  doi       = {10.1016/j.compstruc.2021.106557}
}

@book{boyd2004convex,
  author    = {Boyd, S. and Vandenberghe, L.},
  title     = {Convex Optimization},
  publisher = {Cambridge University Press},
  year      = {2004}
}

@article{byrd1995limited,
  author  = {Byrd, R. H. and Lu, P. and Nocedal, J. and Zhu, C.},
  title   = {A limited memory algorithm for bound constrained optimization},
  journal = {SIAM Journal on Scientific Computing},
  volume  = {16},
  number  = {5},
  pages   = {1190--1208},
  year    = {1995}
}

@article{califano2026enhancing,
  author  = {Califano, F. and Ciambella, J.},
  title   = {Enhancing nonlinear viscoelastic modeling of elastomers through neural networks: A deep rheological element},
  journal = {Mechanics of Materials},
  volume  = {212},
  pages   = {105525},
  year    = {2026}
}

@article{Califano2023,
author = {Califano, F and Ciambella, J},
doi = {10.1098/rspa.2023.0603},
issn = {1364-5021},
journal = {Proc. R. Soc. A Math. Phys. Eng. Sci.},
month = {dec},
number = {2280},
title = {{Viscoplastic simple shear at finite strains}},
volume = {479},
year = {2023}
}

@article{ciambella2009abaqus,
  author    = {Ciambella, J. and Destrade, M. and Ogden, R. W.},
  title     = {On the {ABAQUS} {FEA} model of finite viscoelasticity},
  journal   = {Rubber Chemistry and Technology},
  volume    = {82},
  number    = {2},
  pages     = {184--193},
  year      = {2009},
  doi       = {10.5254/1.3548243}
}

@article{ciambella2021anisotropic,
  author    = {Ciambella, J. and Nardinocchi, P.},
  title     = {A structurally frame-indifferent model for anisotropic visco-hyperelastic materials},
  journal   = {Journal of the Mechanics and Physics of Solids},
  volume    = {147},
  pages     = {104247},
  year      = {2021},
  doi       = {10.1016/j.jmps.2020.104247}
}

@book{colton2013inverse,
  author    = {Colton, D. and Kress, R.},
  title     = {Inverse Acoustic and Electromagnetic Scattering Theory},
  publisher = {Springer},
  address   = {New York},
  edition   = {3rd},
  year      = {2013}
}

@inproceedings{dugas2000softplus,
  author    = {Dugas, C. and Bengio, Y. and B{\'e}lisle, F. and Nadeau, C. and Garcia, R.},
  title     = {Incorporating second-order functional knowledge for better option pricing},
  booktitle = {Proceedings of the 13th International Conference on Neural Information Processing Systems (NIPS'00)},
  pages     = {451--457},
  publisher = {MIT Press},
  year      = {2000}
}

@article{eyring1936viscosity,
  author  = {Eyring, H.},
  title   = {Viscosity, plasticity, and diffusion as examples of absolute reaction rates},
  journal = {The Journal of Chemical Physics},
  volume  = {4},
  number  = {4},
  pages   = {283--291},
  year    = {1936}
}

@techreport{FDA2021,
  author      = {{U.S. Food and Drug Administration}},
  title       = {Artificial Intelligence/Machine Learning ({AI/ML})-Based Software as a Medical Device ({SaMD}) Action Plan},
  institution = {FDA},
  year        = {2021}
}

@article{fuhg2024review,
  author    = {Fuhg, J. N. and Padmanabha, G. A. and Bouklas, N. and Bahmani, B. and Sun, W. and Vlassis, N. N. and Flaschel, M. and De Lorenzis, L.},
  title     = {A review on data-driven constitutive laws for solids},
  journal   = {Archives of Computational Methods in Engineering},
  year      = {2024},
  doi       = {10.1007/s11831-024-10196-2}
}

@article{garbrecht2023pGPSR,
  author    = {Garbrecht, K. and Birky, B. and Lester, B. T. and Emery, J. M. and Hochhalter, J. D.},
  title     = {Complementing a continuum thermodynamic approach to constitutive modeling with symbolic regression},
  journal   = {Journal of the Mechanics and Physics of Solids},
  volume    = {173},
  pages     = {105198},
  year      = {2023},
  doi       = {10.1016/j.jmps.2023.105198}
}

@article{germain1983continuum,
  author  = {Germain, P. and Nguyen, Q. S. and Suquet, P.},
  title   = {Continuum Thermodynamics},
  journal = {Journal of Applied Mechanics},
  volume  = {50},
  number  = {4b},
  pages   = {1010--1020},
  year    = {1983},
  doi     = {10.1115/1.3167184}
}

@article{halphen1975materiaux,
  author  = {Halphen, B. and Nguyen, Q. S.},
  title   = {Sur les mat{\'e}riaux standards g{\'e}n{\'e}ralis{\'e}s},
  journal = {Journal de M{\'e}canique},
  volume  = {14},
  number  = {1},
  pages   = {39--63},
  year    = {1975}
}

@article{hou2024automated,
  author  = {Hou, J. and Chen, X. and Wu, T. and Kuhl, E. and Wang, X.},
  title   = {Automated data-driven discovery of material models based on symbolic regression: A case study on the human brain cortex},
  journal = {Acta Biomaterialia},
  volume  = {188},
  pages   = {276--296},
  year    = {2024},
  doi     = {10.1016/j.actbio.2024.09.005}
}

@article{kabliman2021application,
  author  = {Kabliman, E. and Kolody, A. H. and Kronsteiner, J. and Kommenda, M. and Kronberger, G.},
  title   = {Application of symbolic regression for constitutive modeling of plastic deformation},
  journal = {Applications in Engineering Science},
  volume  = {6},
  pages   = {100052},
  year    = {2021},
  doi     = {10.1016/j.apples.2021.100052}
}

@article{kissas2024language,
  author  = {Kissas, G. and Mishra, S. and Chatzi, E. and De~Lorenzis, L.},
  title   = {The language of hyperelastic materials},
  journal = {Computer Methods in Applied Mechanics and Engineering},
  volume  = {428},
  pages   = {117053},
  year    = {2024},
  doi     = {10.1016/j.cma.2024.117053}
}

@book{koza1992genetic,
  author    = {Koza, J. R.},
  title     = {Genetic Programming: On the Programming of Computers by Means of Natural Selection},
  publisher = {MIT Press},
  year      = {1992}
}

@article{lacava2021contemporary,
  author  = {La~Cava, W. and Burlacu, B. and Virgolin, M. and others},
  title   = {Contemporary symbolic regression methods and their relative performance},
  journal = {Advances in Neural Information Processing Systems},
  year    = {2021}
}

@book{lemaitre2000mechanics,
  author    = {Lemaitre, J. and Chaboche, J.-L.},
  title     = {Mechanics of Solid Materials},
  publisher = {Cambridge University Press},
  year      = {2000}
}

@article{liu2024kan,
  author  = {Liu, Z. and Wang, Y. and Vaidya, S. and others},
  title   = {{KAN}: {Kolmogorov--Arnold Networks}},
  journal = {arXiv preprint arXiv:2404.19756},
  year    = {2024}
}

@article{makke2024review,
  author    = {Makke, N. and Chawla, S.},
  title     = {Interpretable scientific discovery with symbolic regression: a review},
  journal   = {Artificial Intelligence Review},
  volume    = {57},
  number    = {1},
  pages     = {2},
  year      = {2024},
  doi       = {10.1007/s10462-023-10622-0}
}

@article{ostwald1925ueber,
  author  = {Ostwald, W.},
  title   = {{\"U}ber die {G}eschwindigkeitsfunktion der {V}iskosit{\"a}t disperser {S}ysteme. {I}},
  journal = {Kolloid-Zeitschrift},
  volume  = {36},
  number  = {2},
  pages   = {99--117},
  year    = {1925}
}

@article{payne1962dynamic,
  author  = {Payne, A. R.},
  title   = {The dynamic properties of carbon black-loaded natural rubber vulcanizates. {Part I}},
  journal = {Journal of Applied Polymer Science},
  volume  = {6},
  number  = {19},
  pages   = {57--63},
  year    = {1962}
}

@article{payne1963dynamic,
  author  = {Payne, A. R.},
  title   = {The dynamic properties of carbon black-loaded natural rubber vulcanizates. {Part II}},
  journal = {Journal of Applied Polymer Science},
  volume  = {7},
  number  = {3},
  pages   = {873--885},
  year    = {1963}
}

@book{rockafellar1970convex,
  title     = {Convex Analysis},
  author    = {Rockafellar, R. Tyrrell},
  year      = {1970},
  publisher = {Princeton University Press},
  address   = {Princeton, NJ}
}

@article{schmidt2009distilling,
  author  = {Schmidt, M. and Lipson, H.},
  title   = {Distilling free-form natural laws from experimental data},
  journal = {Science},
  volume  = {324},
  number  = {5923},
  pages   = {81--85},
  year    = {2009}
}

@article{thakolkaran2025kan,
  author  = {Thakolkaran, P. and Guo, Y. and Saini, S. and Peirlinck, M. and Alheit, B. and Kumar, S.},
  title   = {Can {KAN} {CAN}s? {Input-convex Kolmogorov--Arnold Networks} ({KAN}s) as hyperelastic constitutive artificial neural networks ({CAN}s)},
  journal = {Computer Methods in Applied Mechanics and Engineering},
  year    = {2025}
}

@article{uhlenbeck1930theory,
  author  = {Uhlenbeck, G. E. and Ornstein, L. S.},
  title   = {On the theory of the {B}rownian motion},
  journal = {Physical Review},
  volume  = {36},
  number  = {5},
  pages   = {823--841},
  year    = {1930}
}

@misc{wirgin2004inverse,
  author = {Wirgin, A.},
  title  = {The inverse crime},
  note   = {arXiv preprint math-ph/0401050},
  year   = {2004}
}

@article{zhu1997algorithm,
  author  = {Zhu, C. and Byrd, R. H. and Lu, P. and Nocedal, J.},
  title   = {Algorithm 778: {L-BFGS-B}: {Fortran} subroutines for large-scale bound-constrained optimization},
  journal = {ACM Transactions on Mathematical Software},
  volume  = {23},
  number  = {4},
  pages   = {550--560},
  year    = {1997}
}

@article{zhu2025physics,
  author  = {Zhu, Y. and Su, C. and Li, X.},
  title   = {A physics-guided symbolic regression framework for efficient and interpretable sand constitutive modeling},
  journal = {Canadian Geotechnical Journal},
  volume  = {62},
  pages   = {1--14},
  year    = {2025}
}

@article{perzyna1966fundamental,
  author  = {Perzyna, Piotr},
  title   = {Fundamental problems in viscoplasticity},
  journal = {Advances in Applied Mechanics},
  volume  = {9},
  pages   = {243--377},
  year    = {1966}
}

@article{rosenkranz2024viscoelasticity,
  author  = {Rosenkranz, Max and Kalina, Karl A. and Brummund, J{\"o}rg and Sun, WaiChing and K{\"a}stner, Markus},
  title   = {Viscoelasticty with physics-augmented neural networks: model formulation and training methods without prescribed internal variables},
  journal = {Computational Mechanics},
  volume  = {74},
  number  = {6},
  pages   = {1279--1301},
  year    = {2024},
  doi     = {10.1007/s00466-024-02477-1}
}

@inproceedings{asad2023mechanics,
  author    = {As'ad, Faisal and Farhat, Charbel},
  title     = {A Mechanics-Informed Neural Network Framework for Data-Driven Nonlinear Viscoelasticity},
  booktitle = {AIAA SCITECH 2023 Forum},
  pages     = {0949},
  year      = {2023},
  doi       = {10.2514/6.2023-0949}
}

@article{holthusen2026inelastic,
  author  = {Holthusen, Hagen and Lamm, Lukas and Brepols, Tim and Reese, Stefanie and Kuhl, Ellen},
  title   = {Theory and implementation of inelastic Constitutive Artificial Neural Networks},
  journal = {Computer Methods in Applied Mechanics and Engineering},
  volume  = {428},
  pages   = {117063},
  year    = {2024},
  doi     = {10.1016/j.cma.2024.117063}
}

@article{tac2023benchmarking,
  author  = {Ta{\c{c}}, Vahidullah and Rausch, Manuel K. and Sahli Costabal, Francisco and Tepole, Adrian Buganza},
  title   = {Data-driven anisotropic finite viscoelasticity using neural ordinary differential equations},
  journal = {Computer Methods in Applied Mechanics and Engineering},
  volume  = {411},
  pages   = {116046},
  year    = {2023},
  doi     = {10.1016/j.cma.2023.116046}
}

@article{flaschel2025convex,
  author  = {Flaschel, Moritz and Steinmann, Paul and De Lorenzis, Laura and Kuhl, Ellen},
  title   = {Convex neural networks learn generalized standard material models},
  journal = {Journal of the Mechanics and Physics of Solids},
  volume  = {200},
  pages   = {106103},
  year    = {2025},
  doi     = {10.1016/j.jmps.2025.106103}
}

@article{boes2026accounting,
  author  = {Boes, B. and Simon, J.-W. and Holthusen, H.},
  title   = {Accounting for plasticity: An extension of inelastic constitutive artificial neural networks},
  journal = {European Journal of Mechanics - A/Solids},
  volume  = {117},
  pages   = {105998},
  year    = {2026},
  doi     = {10.1016/j.euromechsol.2025.105998}
}

@article{boes2026plasticity,
  author  = {van der Velden, T. and Boes, B. and Brepols, T. and Kuhl, E. and Holthusen, H.},
  title   = {A note on constitutive artificial neural networks for finite strain plasticity},
  journal = {Mechanics Research Communications},
  volume  = {155},
  pages   = {104707},
  year    = {2026},
  doi     = {10.1016/j.mechrescom.2026.104707}
}

@article{jadoon2025plasticity,
  author  = {Jadoon, Asghar A. and Meyer, Knut Andreas and Fuhg, Jan N.},
  title   = {Automated model discovery of finite strain elastoplasticity from uniaxial experiments},
  journal = {Computer Methods in Applied Mechanics and Engineering},
  volume  = {435},
  pages   = {117653},
  year    = {2025},
  doi     = {10.1016/j.cma.2024.117653}
}

@article{abdolazizi2024vcanns,
  author  = {Abdolazizi, Kian P. and Linka, Kevin and Cyron, Christian J.},
  title   = {Viscoelastic constitutive artificial neural networks (vCANNs) -- a framework for data-driven anisotropic nonlinear finite viscoelasticity},
  journal = {Journal of Computational Physics},
  volume  = {499},
  pages   = {112704},
  year    = {2024}
}

@article{ji2026ickan,
  author  = {Ji, Chenyi and Abdolazizi, Kian P. and Holthusen, Hagen and Cyron, Christian J. and Linka, Kevin},
  title   = {Inelastic Constitutive Kolmogorov--Arnold Networks: a generalized framework for automated discovery of interpretable inelastic material models},
  journal = {arXiv preprint arXiv:2602.17750},
  year    = {2026}
}

@incollection{holzapfel2001biomechanics,
  author    = {Holzapfel, Gerhard A.},
  title     = {Biomechanics of Soft Tissue},
  booktitle = {Handbook of Materials Behavior Models},
  publisher = {Academic Press},
  pages     = {1057--1071},
  year      = {2001},
  doi       = {10.1016/B978-012443341-0/50107-1}
}

@article{batra1990effect,
  author  = {Batra, R. C. and Kim, C. H.},
  title   = {Effect of viscoplastic flow rules on the initiation and growth of shear bands at high strain rates},
  journal = {Journal of the Mechanics and Physics of Solids},
  volume  = {38},
  number  = {6},
  pages   = {859--874},
  year    = {1990}
}

@article{kabliman2026identification,
  author  = {Kabliman, Evgeniya and Kronberger, Gabriel},
  title   = {Identification of empirical constitutive models for age-hardenable aluminium alloy and high-chromium martensitic steel using symbolic regression},
  journal = {Philosophical Transactions of the Royal Society A: Mathematical, Physical and Engineering Sciences},
  volume  = {384},
  number  = {2317},
  year    = {2026}
}

@article{cranmer2023pysr,
  title   = {Interpretable machine learning for science with {PySR} and {SymbolicRegression.jl}},
  author  = {Cranmer, Miles},
  journal = {arXiv preprint arXiv:2305.01582},
  year    = {2023}
}

@article{nelder1965simplex,
  author  = {Nelder, John A. and Mead, Roger},
  title   = {A simplex method for function minimization},
  journal = {The Computer Journal},
  volume  = {7},
  number  = {4},
  pages   = {308--313},
  year    = {1965}
}

@article{gao2012implementing,
  author  = {Gao, Fuchang and Han, Lixing},
  title   = {Implementing the Nelder--Mead simplex algorithm with adaptive parameters},
  journal = {Computational Optimization and Applications},
  volume  = {51},
  number  = {1},
  pages   = {259--277},
  year    = {2012}
}

@inproceedings{califano2026enforcing,
  author    = {Califano, F. and Ciambella, J.},
  title     = {Enforcing Physics in Hyperelasticity Modeling Using {Kolmogorov--Arnold} Networks},
  booktitle = {Proceedings of the XXVI {AIMETA} Conference 2024},
  pages     = {920--928},
  publisher = {Springer, Cham},
  year      = {2026}
}

@article{kalina2026physics,
  author  = {Kalina, Karl A. and Brummund, J{\"o}rg and K{\"a}stner, Markus},
  title   = {A physics-augmented neural network framework for finite strain incompressible viscoelasticity},
  journal = {Computer Methods in Applied Mechanics and Engineering},
  volume  = {455},
  pages   = {118892},
  year    = {2026},
  doi     = {10.1016/j.cma.2026.118892}
}

@article{kommenda2020parameter,
  author  = {Kommenda, M. and Burlacu, B. and Kronberger, G. and Affenzeller, M.},
  title   = {Parameter identification for symbolic regression using nonlinear least squares},
  journal = {Genetic Programming and Evolvable Machines},
  volume  = {21},
  pages   = {471--501},
  year    = {2020},
  doi     = {10.1007/s10710-019-09371-3}
}

@book{engl1996regularization,
  author    = {Engl, Heinz Werner and Hanke, Martin and Neubauer, Andreas},
  title     = {Regularization of Inverse Problems},
  volume    = {375},
  publisher = {Springer Science \& Business Media},
  year      = {1996}
}

@article{metropolis1949monte,
  author  = {Metropolis, N. and Ulam, S.},
  title   = {The {M}onte {C}arlo method},
  journal = {Journal of the American Statistical Association},
  volume  = {44},
  number  = {247},
  pages   = {335--341},
  year    = {1949}
}

\end{document}